\def\mydocumentclass{1} 
\newtheorem{theorem}{Theorem}
\newtheorem{proposition}{Proposition}
\theoremstyle{definition}
\newtheorem{definition}{Definition}
\theoremstyle{remark}
\newtheorem*{remark}{Remark}
\DeclareSIUnit{\belmilliwatt}{Bm}
\DeclareSIUnit{\dBi}{\deci\bel{}i}
\DeclareSIUnit{\dBm}{\deci\belmilliwatt}
\DeclareSIUnit{\bps}{bps}
\DeclareMathOperator*{\maximize}{maximize}
\DeclareMathOperator*{\minimize}{minimize}
\DeclareMathOperator*{\argmax}{argmax}
\DeclareMathOperator*{\subjto}{\textnormal{subject to}}
\newcommand{\bp}{\ensuremath{{\mathbf p}}}
\newcommand{\bq}{\ensuremath{{\mathbf q}}}
\newcommand{\blambda}{\ensuremath{{\bm \lambdaup}}}
\newcommand{\bv}{\ensuremath{{\mathbf v}}}
\newcommand{\barbp}{\ensuremath{{\bar \bp}}}
\newcommand{\barbq}{\ensuremath{{\bar \bq}}}
\newcommand{\calN}{\ensuremath{{\mathcal N}}}
\newcommand{\Pmax}{\ensuremath{{P_{\textnormal{max}}}}}
\newcommand{\Pmaxk}{\ensuremath{{P_{\textnormal{max},k}}}}
\newcommand{\Pone}{\ensuremath{(\textnormal{P1})}}
\newcommand{\Ptwo}{\ensuremath{(\textnormal{P2})}}
\newcommand{\Pt}{\ensuremath{(\textnormal{P}_1^t)}}
\newcommand{\Pteq}{\ensuremath{(\textnormal{P}_2)}}
\newcommand{\Slave}{\ensuremath{(\textnormal{S}_{k})}}
\newcommand{\Master}{\ensuremath{(\textnormal{M})}}
\newcommand{\Q}{\ensuremath{(\textnormal{Q}_k)}}
\newcommand{\Qone}{\ensuremath{(\textnormal{Q}_{k,1}^{\varphi_k})}}
\newcommand{\Qtwo}{\ensuremath{(\textnormal{Q}_{k,2}^{\varphi_k})}}
\newcommand{\Qthree}{\ensuremath{(\textnormal{Q}_{k,3}^{\varphi_k})}}
\newcommand{\Qfour}{\ensuremath{(\textnormal{Q}_{k,4}^{\varphi_k})}}
\newcommand{\Dual}{\ensuremath{(\textnormal{D})}}
\newcommand{\Dualh}{\ensuremath{(\textnormal{D}^\bh)}}
\newcommand{\bh}{\ensuremath{{\mathbf h}}}
\newcommand{\bx}{\ensuremath{{\mathbf x}}}
\newcommand{\bP}{\ensuremath{{\mathbf P}}}
\newcommand{\bR}{\ensuremath{{\mathbf R}}}
\newcommand{\E}{\ensuremath{{\mathbb E}}}
\newcommand{\suchthat}{%
  \nonscript\;
  \ifnum\currentgrouptype=16
    \middle|
  \else
    \@suchthat
  \fi
  \nonscript\;
}
\begin{document}

\title{Low-Complexity Dynamic Resource Scheduling for Downlink MC-NOMA Over Fading Channels}

\author{Do-Yup~Kim,~\IEEEmembership{Graduate Student Member,~IEEE,}
        Hamid~Jafarkhani,~\IEEEmembership{Fellow,~IEEE,}
        and~Jang-Won~Lee,~\IEEEmembership{Senior Member,~IEEE}
\if\mydocumentclass1
	\thanks{This work has been submitted to the IEEE for possible publication. Copyright may be transferred without notice, after which this version may no longer be accessible.}%
\fi
\thanks{A preliminary version of this work~\cite{kim2021low} will be presented at IEEE VTC2021-Spring, in which only single-channel transmission has been taken into account, and neither subchannel scheduling nor inter-channel power scheduling has been dealt with.}
\thanks{D.-Y. Kim and J.-W. Lee are with the Department of Electrical and Electronic Engineering, Yonsei University, Seoul 03722, South Korea (e-mail: danny.doyup.kim@yonsei.ac.kr; jangwon@yonsei.ac.kr).}
\thanks{H. Jafarkhani is with the Center for Pervasive Communications and Computing, Department of Electrical Engineering and Computer Science, University of California at Irvine, Irvine, CA 92697, USA (e-mail: hamidj@uci.edu).}}


\maketitle

\begin{abstract}
In this paper, we investigate dynamic resource scheduling (i.e., joint user, subchannel, and power scheduling) for downlink multi-channel non-orthogonal multiple access (MC-NOMA) systems over time-varying fading channels. Specifically, we address the weighted average sum rate maximization problem with quality-of-service (QoS) constraints. In particular, to facilitate fast resource scheduling, we focus on developing a very low-complexity algorithm. To this end, by leveraging Lagrangian duality and the stochastic optimization theory, we first develop an opportunistic MC-NOMA scheduling algorithm whereby the original problem is decomposed into a series of subproblems, one for each time slot. Accordingly, resource scheduling works in an online manner by solving one subproblem per time slot, making it more applicable to practical systems. Then, we further develop a heuristic joint subchannel assignment and power allocation (Joint-SAPA) algorithm with very low computational complexity, called Joint-SAPA-LCC, that solves each subproblem. Finally, through simulation, we show that our Joint-SAPA-LCC algorithm provides good performance comparable to the existing Joint-SAPA algorithms despite requiring much lower computational complexity. We also demonstrate that our opportunistic MC-NOMA scheduling algorithm in which the Joint-SAPA-LCC algorithm is embedded works well while satisfying given QoS requirements.
\end{abstract}

\begin{IEEEkeywords}
Low complexity,
multi-channel non-orthogonal multiple access (MC-NOMA),
quality of service (QoS),
resource scheduling,
time-varying fading channels,
weighted sum rate.
\end{IEEEkeywords}

\section{Introduction}
\label{sec:intro}

With the exponential proliferation of mobile devices, overall mobile data traffic is expected to grow to $77$ exabytes per month by $2022$~\cite{forecast2019cisco}.
Such surge in mobile data traffic will exacerbate resource shortages, which in turn will necessitate high levels of connectivity~and~spectral efficiency.
In these circumstances, non-orthogonal multiple access (NOMA) has been envisioned as a promising technology for future cellular networks thanks to its potential to achieve high connectivity and high spectral efficiency compared to orthogonal multiple access (OMA)-based technologies~\cite{dai2018survey, seo2019high, maraqa2020survey, makki2020survey}.
Especially, in~\cite{wei2020performance}, the potential performance gains of NOMA over OMA have been extensively studied in various system setups.
Unlike OMA, which essentially excludes inter-user interference (IUI), NOMA is based on the premise that IUI is acceptable, and two popular categories of NOMA include power-domain NOMA and code-domain NOMA~\cite{dai2015non}.
This paper focuses on the power-domain NOMA that multiplexes multiple users on the same resource based on superposition coding in the power domain, and then, at the receiver, multi-user detection is realized by mitigating IUI based on successive interference cancellation (SIC) (see \cite{kazemitabar2009performance, ganji2016interference, ganji2021asynchronous} and references therein).
However, despite the high network performance, the presence of IUI makes resource scheduling, which is very important in wireless networks, more difficult in NOMA systems.
In addition, resource scheduling techniques developed for the OMA systems, e.g., \cite{kim2012opportunistic, huang2009downlink, kim2014sum}, cannot be easily applied to the NOMA systems, and provide limited performance even if they can.
In this vein, resource scheduling with low computational complexity is one of the most paramount issues in NOMA systems, and thus, many studies have been conducted.
Nevertheless, they still have practical limitations, especially in terms of computational complexity.

\subsection{Related Work}
Early studies in this area have focused on single-channel NOMA (SC-NOMA).
Accordingly, various studies have been conducted in SC-NOMA systems, in terms of power allocation~\cite{xing2018optimal, wang2016power, choi2016power, yang2017optimality, wang2017convexity}, and power allocation and user selection~\cite{liu2018performance, datta2016optimal}.
More recently, the research focus in this area has been shifted from SC-NOMA to multi-channel NOMA (MC-NOMA).
MC-NOMA systems take multi-channel transmission into account; however, compared to SC-NOMA systems, resource allocation becomes much more complicated because of the additional burden of subchannel assignment.
As a result, the algorithms for SC-NOMA are usually inapplicable to MC-NOMA, and even if they are applicable, they provide limited performance.
Thus, to take full advantage of MC-NOMA, a new joint subchannel assignment and power allocation (Joint-SAPA) algorithm tailored to MC-NOMA systems is needed.

Joint-SAPA for minimizing total power consumption in MC-NOMA systems has been investigated thanks to the corresponding simple linear objective function~\cite{li2016dynamic, guo2019joint, yang2018power}.
Later, it has shown that the sum rate becomes a concave function of power allocation even though each user's data rate is a nonconcave function~\cite{fu2018subcarrier}.
Therefore, a basic power allocation problem to maximize the sum rate can be easily solved with well-known convex optimization solvers.
However, the convexity in optimization gets lost in more general Joint-SAPA problems that take into account subchannel assignment and practical constraints, e.g., a so-called \textit{SIC capacity constraint} that limits the number of users who can be served simultaneously through the same resource.
Hence, many heuristic Joint-SAPA algorithms have been proposed, e.g., \cite{fu2018subcarrier, cejudo2019fast, fu2019joint}, but almost all of them are still based on the concavity of the sum rate function with respect to power allocation.

Despite many studies on the sum rate maximization, the relative importance and/or fairness among users have not been addressed therein due to the nature of the sum rate performance metric.
Thereby, users with poor channel conditions may experience starvation because no resource might be allocated to them.
On the other hand, different tradeoffs can be achieved between the sum rate performance and the user fairness by controlling user weights in the weighted sum rate maximization problem.
However, unlike the (equally weighted) sum rate, the weighted sum rate is generally a nonconcave function of power allocation (even in SC-NOMA~\cite{wang2017convexity}), and accordingly the Joint-SAPA problem to maximize the weighted sum rate is known to be a strongly NP-hard problem~\cite{salaun2018optimal}.
Hence, in most cases, the ideas and underlying theory exploited in the sum rate maximization cannot be fully leveraged in the weighted sum rate maximization.
To address these, the Joint-SAPA problem to maximize the weighted sum rate has received much attention~\cite{liu2018dynamic, liu2020performance, saito2013system, parida2014power, sun2016optimal, di2016sub, lei2016power, salaun2020joint}.
In~\cite{liu2018dynamic}, the power allocation for each subchannel in a two-user MC-NOMA system has been investigated.
In~\cite{liu2020performance}, the Joint-SAPA problem in a multi-user MC-NOMA system has been investigated without considering the essential SIC capacity constraint.
In~\cite{saito2013system}, the authors have proposed a heuristic Joint-SAPA algorithm, considering the SIC capacity constraint, based on the fractional transmit power control (FTPC) and exhaustive search (ES) algorithms.
In~\cite{parida2014power, sun2016optimal}, heuristic Joint-SAPA algorithms using the difference-of-convex programming (DCP) approach have been developed under the assumption that each subchannel is occupied by up to two users.
In~\cite{di2016sub}, the power allocation and the subchannel assignment are performed based on the geometric programming (GP) approach and the many-to-many matching game, respectively.
In~\cite{lei2016power}, the authors have proposed a Joint-SAPA algorithm utilizing the Lagrangian dual and dynamic programming (DP) approaches.
Most recently, in~\cite{salaun2020joint}, the authors have studied a Joint-SAPA problem with further considering the individual subchannel power limits.
They have developed a Joint-SAPA algorithm based on the DP approach and the projected gradient descent (PGD) method.
The weighted sum rate maximization studies related to resource allocation for the MC-NOMA system are summarized in Table~\ref{table:comp}.

\begin{table*}[!t]
\if\mydocumentclass0
\renewcommand{\arraystretch}{1.3}
\fi
\centering
\caption{Comparison of studies on the weighted sum rate maximization in the MC-NOMA system (\checkmark: Considered)}\label{table:comp}
\begin{tabular}{c|cccc|ccc}
	\hline
	\multirow[b]{2}{*}{\bfseries{Ref.}} & \multicolumn{4}{c|}{\bfseries{Constraint}} & \multicolumn{3}{c}{\bfseries{Optimization}} \\
	\cline{2-8}
	& {\thead{Total\\power limit}} & {\thead{Subchannel\\power limit}} & {\thead{SIC\\capacity}} & {\thead{QoS\\requirement}} & {\thead{Subchannel\\assignment}} & {\thead{Power\\allocation}} & {\thead{Scheduling\\over fading}}\\
	\hline
	\cite{liu2018dynamic}
		& \checkmark & & & & & \checkmark & \\
	\cite{liu2020performance}
		& \checkmark & & & & \checkmark & \checkmark & \checkmark \\
	\cite{saito2013system, parida2014power, sun2016optimal, di2016sub, lei2016power}
		& \checkmark & & \checkmark & & \checkmark & \checkmark & \checkmark \\
	\cite{salaun2020joint}
		& \checkmark & \checkmark & \checkmark & & \checkmark & \checkmark & \\
	Our work
		& \checkmark & \checkmark & \checkmark & \checkmark & \checkmark & \checkmark & \checkmark \\
	\hline
\end{tabular}
\end{table*}

\subsection{Motivation and Contributions}
\label{subsec:motivation}
Joint-SAPA algorithms to maximize the weighted sum rate have been extensively studied in the literature.
However, all of them are still based on approaches that typically need high computational complexity (e.g., approaches based on FTPC and ES~\cite{saito2013system}, DCP~\cite{parida2014power, sun2016optimal}, GP and matching game~\cite{di2016sub}, and DP~\cite{lei2016power, salaun2020joint}).
Such high computational complexity will become increasingly burdensome for practical use in future cellular networks with very short time slots.\footnote{In recent standardization trends, the length of the slot, which is a unit to transmit $14$ orthogonal frequency division multiplexing (OFDM) symbols, is reduced to achieve higher spectral efficiency and traffic capacity and lower user plane latency. For example, 5G New Radio (NR) supports flexible OFDM numerology with subcarrier spacing from \SI{15}{\kilo\Hz} to \SI{240}{\kilo\Hz}, resulting in a slot length as short as \SI{62.5}{\micro\second}~\cite{dahlman20185g}. Even more, 5G NR introduces a unit of \textit{mini-slot}, which is even shorter than a slot, for the sake of fast data transmission for ultra-reliable low-latency communication (URLLC).}
Hence, Joint-SAPA algorithms with much lower computational complexity are needed to make it possible to generate transmit signals at the base station (BS) in a very short time slot.
%
%

In addition, not all weighted sum rate maximization studies have considered explicit QoS requirements, as shown in Table~\ref{table:comp}.
Instead, the authors in~\cite{liu2020performance, saito2013system, parida2014power, sun2016optimal, di2016sub, lei2016power} have realized proportional fair scheduling based on their own Joint-SAPA algorithms in simulation, using the fact that the proportional fair scheduling is a specific use case of the weighted sum rate maximization problem.
Although the proportional fair scheduling provides high sum rate performance while closing the performance gap between users to some extent, it cannot explicitly guarantee given QoS requirements.
Accordingly, in a practical QoS-aware system with individual user QoS requirements, a new scheduling technique that can meet the individual QoS requirements as well is needed.
In particular, in wireless network systems that are subject to time-varying fading channels, the development of a scheduling technique that meets QoS requirements by exploiting the variability of the channels is necessary.
Hence, in this paper, we aim to develop a novel low-complexity opportunistic resource scheduling algorithm for the downlink MC-NOMA system, which fully exploits the stochasticity of fading channels to maximize the weighted average sum rate while ensuring the individual QoS requirements of users.

The main contributions of this paper are summarized as follows:
\begin{itemize}
\item We address a dynamic resource scheduling problem for the downlink MC-NOMA system over time-varying fading channels. To the best of our knowledge, this is the first work to maximize the weighted average sum rate while ensuring explicitly given QoS requirements via joint optimization of user, subchannel, and power scheduling.
\item We develop a Joint-SAPA algorithm with very low computational complexity, called Joint-SAPA-LCC, to maximize the instantaneous weighted sum rate. It has much lower computational complexity compared to the existing Joint-SAPA algorithms with the same objective.
\begin{itemize}
\item[$\circ$] We prove that it is optimal to select up to two users per subchannel, assuming that the noise power of users suffering from interference is neglected, and propose a very simple optimal user selection rule based on it.
\item[$\circ$] In accordance with the proposed user selection rule, we derive closed-form optimal user power allocation formulas and a simple subchannel power allocation algorithm.
\item[$\circ$] Through simulation, we verify that our Joint-SAPA-LCC algorithm provides good performance comparable to the existing Joint-SAPA algorithms despite requiring much lower computational complexity.
\end{itemize}
\item By leveraging the Lagrangian duality and the stochastic optimization theory, we develop an opportunistic MC-NOMA scheduling algorithm that fully exploits time-varying fading channels. It operates in an online manner using the Joint-SAPA-LCC algorithm, and thus, it is very effective for practical use. Through simulation, we show that our opportunistic MC-NOMA scheduling works well and properly meets various QoS requirements.
\end{itemize}

\subsection{Paper Structure and Notations}
\subsubsection*{Paper Structure}
The rest of the paper is organized as follows.
In Section~\ref{sec:sys}, we formulate the system model and the dynamic resource scheduling problem.
In Section~\ref{sec:Joint-SAPA} and Section~\ref{sec:scheduling}, we develop the Joint-SAPA-LCC algorithm and the opportunistic MC-NOMA scheduling algorithm, respectively.
We present simulation results in Section~\ref{sec:sim} and conclude in Section~\ref{sec:conc}.

\subsubsection*{Notation}
Scalars, vectors, and sets are denoted by italic, boldface, and calligraphic letters, respectively.
A vector that consists of elements in the set $\{x_i : i\in\mathcal{X}\}$ is denoted by $(x_i)_{\forall i\in\mathcal{X}}$.
The expectation operator is denoted by $\E[\cdot]$.
For a complex number~$x$, $\lvert x \rvert$ denotes its absolute value.
For a real number~$x$, $y$, and $z$, $[x]^+ = \max(0,x)$, and $[x]_y^z = \min(\max(x,y),z)$.
For a real-valued vector~$\bx$, $[\bx]^+$ is a vector whose $i$th element is $[x_i]^+$.
We denote by $\mathbf{1}_{\{A\}}$ an indicator function taking the value of one if the statement $A$ is true, and zero otherwise.
\if\mydocumentclass2
\else
The symbol $\Leftrightarrow$ denotes the logical connective \textit{if and only if}.
The floor function is denoted by $\lfloor\cdot\rfloor$, which gives the largest integer not exceeding its argument.
\fi

\section{System Model and Problem Formulation}
\label{sec:sys}
We consider the downlink of a single cell in the MC-NOMA system, in which one single-antenna BS transmits signals to $N$~single-antenna users over $K$~subchannels.
The index sets of users and subchannels are denoted by $\mathcal{N}=\{1, 2, \ldots, N\}$ and $\mathcal{K}=\{1, 2, \ldots, K\}$, respectively.
We assume that the entire system bandwidth, $B_{\textnormal{tot}}$, is divided into $K$~orthogonal subchannels, so that there is no interference among them.
The bandwidth of Subchannel~$k$ is denoted by $B_k$.

We consider a time-slotted system over doubly block fading channels, where the channel gain of each wireless link is time-varying and frequency-selective but remains constant during a time slot and flat within a subchannel.
Let $\{h_{k,i}^t,\,t=1, 2, \ldots\}$ be the fading process associated with User~$i$ on Subchannel~$k$, where $h_{k,i}^t$ is a complex-valued continuous random variable representing the channel gain from the BS to User~$i$ on Subchannel~$k$ in time slot~$t$.
The fading process is assumed to be stationary and ergodic.
Note that the channel gain includes path loss, shadowing, and multipath fading.
We assume that information on the underlying distributions of the fading process is unknown to the BS due to the practical difficulties in obtaining such information a priori.
However, we assume that instantaneous channel gains are known to the BS at the beginning of each time slot,\footnote{This work focuses on resource scheduling from a system-level optimization perspective.
Accordingly, channel estimation is beyond the scope of this work, as in~\cite{liu2018dynamic, liu2020performance, saito2013system, parida2014power, sun2016optimal, di2016sub, lei2016power, salaun2020joint}.
For readers interested in channel estimation, we refer to~\cite{liu2014channel, mohammadian2016deterministic, sure2017survey, zheng2020intelligent} and references therein.} so that the BS can jointly perform user scheduling, subchannel assignment (i.e., user pairing per subchannel), and power allocation based on them.

In MC-NOMA, a subchannel can be assigned to multiple users simultaneously by power-domain multiplexing.
Let $x_{k,i}^t$, satisfying $\E[\lvert{x_{k,i}^t}\rvert^2]=1$, be the information-bearing signal transmitted to User~$i$ on Subchannel~$k$ in time slot~$t$, and $p_{k,i}^t$ be the power allocated for signal~$x_{k,i}^t$.
Also, let $q_{k,i}^t$ be the subchannel assignment indicator taking the value of one if Subchannel~$k$ is assigned to User~$i$ in time slot~$t$, and zero otherwise.
Then, the received signal at User~$i$ on Subchannel~$k$ in time slot~$t$ is given by
\begin{equation}\label{eq:y_i}
	y_{k,i}^t = h_{k,i}^t q_{k,i}^t \sqrt{p_{k,i}^t}\, x_{k,i}^t + \sum_{j\in\calN:j\ne i} h_{k,i}^t q_{k,j}^t \sqrt{p_{k,j}^t}\, x_{k,j}^t + n_{k,i}^t,
\end{equation}
where $n_{k,i}^t$ is the additive zero-mean complex Gaussian noise with variance~$\sigma_{k,i}^2$, and the first, second, and third terms are the desired, interference, and noise signals, respectively.
For compact notation, we define the noise-to-channel ratio (NCR) of User~$i$ on Subchannel~$k$ in time slot~$t$~as
\begin{equation}\label{eq:eta_def}
	\eta_{k,i}^t = \frac{\sigma_{k,i}^2}{\lvert h_{k,i}^t \rvert^2}.
\end{equation}
The NCR can be interpreted as the effective noise power when the channel gain is normalized to unity.

After receiving signal~$y_{k,i}^t$, User~$i$ performs SIC to decode its own signal, $x_{k,i}^t$, from it.
User~$i$ first decodes the signals for each User~$j$ whose NCR is not smaller than its NCR, i.e., $\eta_{k,j}^t \ge \eta_{k,i}^t$, and then subtracts the components associated with them from the received signal.
Then, User~$i$ decodes its own signal by treating the signals for the other users whose NCRs are smaller than its NCR as noise.
With a typical assumption that SIC has been successfully done, the maximum achievable data rate of User~$i$ on Subchannel~$k$ in time slot~$t$ is obtained as~\cite{tse2005fundamentals}
\if\mydocumentclass0
	\begin{multline}\label{eq:Rate_subchannel}
		R_{k,i}(\bp_k^t, \bq_k^t; \bh_k^t) \\
		= B_k \log_2 \left( 1 + \frac{q_{k,i}^t p_{k,i}^t}{\sum_{j\in\calN:\eta_{k,j}^t < \eta_{k,i}^t} q_{k,j}^t p_{k,j}^t + \eta_{k,i}^t} \right),
	\end{multline}
\else
	\begin{equation}\label{eq:Rate_subchannel}
		R_{k,i}(\bp_k^t, \bq_k^t; \bh_k^t)
		= B_k \log_2 \left( 1 + \frac{q_{k,i}^t p_{k,i}^t}{\sum_{j\in\calN:\eta_{k,j}^t < \eta_{k,i}^t} q_{k,j}^t p_{k,j}^t + \eta_{k,i}^t} \right),
	\end{equation}
\fi
where $\bp_k^t = (p_{k,i}^t)_{\forall i\in\calN}$, $\bq_k^t = (q_{k,i}^t)_{\forall i\in\calN}$, and $\bh_k^t = (h_{k,i}^t)_{\forall i\in\mathcal{N}}$.
From~\eqref{eq:Rate_subchannel}, the maximum achievable data rate of User~$i$ over all subchannels in time slot~$t$ is obtained as
\begin{equation}\label{eq:Rate}
	R_i(\bp^t, \bq^t; \bh^t) = \sum_{k\in\mathcal{K}} R_{k,i} (\bp_k^t, \bq_k^t; \bh_k^t),
\end{equation}
where $\bp^t = (\bp_k^t)_{\forall k\in\mathcal{K}}$, $\bq^t = (\bq_k^t)_{\forall k\in\mathcal{K}}$, and $\bh^t=(\bh_k^t)_{\forall k\in\mathcal{K}}$.
For simplicity, we interchangeably use $R_i^t$ and $R_i(\bp^t, \bq^t; \bh^t)$ without confusion.
We now define the average data rate, $\bar{R}_i$, of User~$i$ as
\begin{equation}\label{eq:avgRate}
	\bar{R}_i = \lim_{T\to\infty} \frac{1}{T} \sum_{t=1}^{T} R_i^t,
\end{equation}
and the weighted average sum rate, $\bar{R}_{\textnormal{WSR}}$, which is what we are trying to maximize, as
\begin{equation}\label{eq:WSR}
	\bar{R}_{\textnormal{WSR}} = \sum_{i\in\mathcal{N}} w_i \bar{R}_i,
\end{equation}
where $w_i$ is the weight factor representing the relative importance of User~$i$.
Additionally, each User~$i$ has its own minimum average data rate requirement, $\bar{R}_{\textnormal{min},i}$, which is represented as
\begin{equation}\label{Const:Ri}
	\bar{R}_i \ge \bar{R}_{\textnormal{min},i}, ~ \forall i\in\mathcal{N}.
\end{equation}

Because of SIC, in each time slot, the BS can schedule multiple users on the same subchannel.
However, due to the high computational complexity and the potential error propagation in SIC as well as the limited processing capabilities of users, the number of users multiplexed simultaneously on the same subchannel is typically limited to a small number, $M$.
In this regard, we define a feasible set for a subchannel assignment indicator vector, $\bq^t$, in time slot~$t$ as
\begin{equation}\label{Const:Qset}
	\mathcal{Q} = \left\{ \bq^t \in \{0,1\}^{K \times N} \suchthat \sum_{i\in\mathcal{N}} q_{k,i}^t \le M, ~ \forall k\in\mathcal{K} \right\}.
\end{equation}
By introducing this SIC capacity constraint with an appropriate $M$, we assume a perfect SIC without taking into account the error propagation of the SIC.


In addition, the BS should determine how much power to allocate to the scheduled users under given transmission power constraints.
We assume that the BS has a limited total transmission power budget of $\Pmax$ and an individual subchannel maximum power constraint of $\Pmaxk$ for each Subchannel~$k$.
In this regard, we define a feasible set for a power allocation vector, $\bp^t$, in time slot~$t$ as
\begin{equation}\label{Const:Pset}
	\mathcal{P} = \left\{ \bp^t \in \mathbb{R}^{K \times N} \suchthat
		\begin{array}{l}
			 \sum_{i\in\mathcal{N}} \sum_{k\in\mathcal{K}} p_{k,i}^t \le \Pmax,\\
			 \sum_{i\in\mathcal{N}} p_{k,i}^t \le \Pmaxk, ~ \forall k\in\mathcal{K},\\
			 p_{k,i}^t \ge 0, ~ \forall k\in\mathcal{K}, ~ \forall i\in\mathcal{N}
		\end{array}
	\right\}.
\end{equation}
Note that since the sum of the maximum powers over all subchannels is usually greater than the total transmission power budget of the BS in practice, we assume that $\sum_{k\in\mathcal{K}} \Pmaxk \ge \Pmax$.

With the performance metric function in~\eqref{eq:WSR}, the QoS constraints in~\eqref{Const:Ri}, and the feasible sets for decision variables in~\eqref{Const:Qset} and \eqref{Const:Pset},
we finally formulate the dynamic resource scheduling problem for joint user, subchannel, and power scheduling in the downlink MC-NOMA system over time-varying fading channels as
\begin{IEEEeqnarray}{c'c'l}
	\IEEEnonumber*
	\Pone & \maximize_{\bp^t,\,\bq^t,\,\forall t} & \bar{R}_{\textnormal{WSR}} \label{Prob:Obj}\\
	&\subjto & \bar{R}_i \ge \bar{R}_{\textnormal{min},i}, ~ \forall i\in\mathcal{N},\\
	&& \bp^t \in \mathcal{P}, ~ \bq^t \in \mathcal{Q}, ~ \forall t.
\end{IEEEeqnarray}
We first note that dealing with Problem~$\Pone$ is not easy due to the nonconcave objective function, the QoS constraints composed of neither convex nor concave functions, integer decision variables, and the average operation over an infinite time horizon.
To resolve these challenges, we first develop an opportunistic MC-NOMA scheduling algorithm by leveraging the Lagrangian duality and the stochastic optimization theory, whereby Problem~$\Pone$ is decomposed into a series of deterministic optimization subproblems, one for each time slot.
More specifically, a subproblem is the Joint-SAPA problem to maximize the instantaneous weighted sum rate without the QoS constraints in that time slot, which will be defined as Problem~$\Pt$ in the next section.
As a consequence, we no longer need to solve Problem~$\Pone$ directly at once, but rather solve the Joint-SAPA problem at each time slot in an online manner without considering the average operation over an infinite time horizon and the QoS constraints.
Meanwhile, an important caveat is that the Joint-SAPA problem needs to be solved by a simple algorithm with very low computational complexity so that the BS can generate and transmit signals in every short time slot.
Hence, we develop a heuristic algorithm to solve the Joint-SAPA problem with low computational complexity, called Joint-SAPA-LCC algorithm.
The flow chart of the process for solving the dynamic resource scheduling problem, Problem~$\Pone$, is schematically illustrated in Fig.~\ref{fig:flow chart}.
Note that the Joint-SAPA-LCC algorithm is a built-in algorithm that runs every time slot within the opportunistic MC-NOMA scheduling algorithm.
In the following, for ease of explanation, we first develop the Joint-SAPA-LCC algorithm in Section~\ref{sec:Joint-SAPA} and then the opportunistic MC-NOMA scheduling algorithm in Section~\ref{sec:scheduling}.

\begin{figure}[!t]
\centering
\includegraphics[width=12cm]{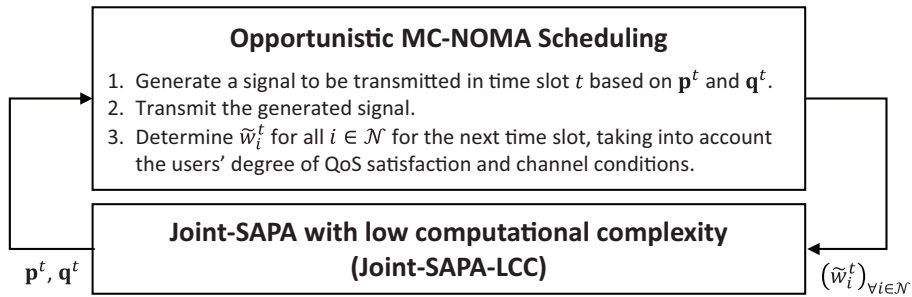}
\caption{The flow chart of the opportunistic MC-NOMA scheduling algorithm operating in an online manner to solve Problem~$\Pone$.}
\label{fig:flow chart}
\end{figure}

\section{Joint-SAPA with low computational complexity (Joint-SAPA-LCC)}
\label{sec:Joint-SAPA}
In this section, we develop our Joint-SAPA-LCC algorithm that solves the instantaneous weighted sum rate maximization problem for each time slot~$t$, defined by
\begin{IEEEeqnarray}{c'c'l}\label{Prob:Dual_s2}
	\IEEEnonumber*
	\Pt & \maximize_{\bp^t\in\mathcal{P},\,\bq^t\in\mathcal{Q}} & \sum_{i\in\calN} \tilde{w}_i^t R_i^t,
\end{IEEEeqnarray}
where $\tilde{w}_i^t$ is the effective weight of User~$i$ in time slot $t$.
As mentioned before, this problem is not subject to the QoS constraints.
Instead, the effective weights are systematically adjusted every time slot so that the QoS constraints in~\eqref{Const:Ri} are met.
It is worth noting that the effective weight, $\tilde{w}_i^t$, differs from the weight, $w_i$, in~\eqref{eq:WSR} in that it is systematically adjusted in every time slot based on the degree of QoS satisfaction of User~$i$ and its channel condition.
This systematic update process will be rigorously elaborated later in Section~\ref{sec:scheduling}.
Although the QoS constraints are not considered in Problem~$\Pt$, it is still an NP-hard problem and very difficult to solve using conventional methods since it contains not only a nonconcave objective function but also integer variables.
Furthermore, we need a fast solution because the transmission signal should be generated and transmitted according to the solution in every short time slot.
For these reasons, we develop a heuristic suboptimal algorithm that provides a near-optimal performance despite very low computational complexity.
In the remainder of this section, since Problem~$\Pt$ is focusing only on time slot~$t$, we omit the superscript~$t$ for notational brevity.

To solve Problem~$\Pt$, we exploit the primal decomposition method~\cite{boyd2007notes,palomar2006tutorial}.
By introducing a new coupling vector $\bar{\bP}=(\bar{P}_k)_{\forall k\in\mathcal{K}}$, we reformulate Problem~$\Pt$ equivalently as
\begin{IEEEeqnarray}{c'c'l}
	\IEEEnonumber*
	\Pteq & \maximize_{\bp,\,\bq,\,\bar{\bP}} & \sum_{i\in\calN} \tilde{w}_i \sum_{k\in\mathcal{K}} R_{k,i}(\bp_k, \bq_k; \bh_k) \\
	&\subjto
	&\sum_{k\in\mathcal{K}} \bar{P}_k \le \Pmax, \label{Prob:Done:Const1}\\
	&&0 \le \bar{P}_k \le \Pmaxk, ~ \forall k\in\mathcal{K}, \label{Prob:Done:Const2}\\
	&&\sum_{i\in\mathcal{N}} p_{k,i} \le \bar{P}_k, ~ \forall k\in\mathcal{K}, \label{Prob:Done:Const3}\\
	&&p_{k,i} \ge 0, ~ \forall k\in\mathcal{K}, ~ \forall i\in\mathcal{N}, \label{Prob:Done:Const4}\\
	&&\sum_{i\in\mathcal{N}} q_{k,i} \le M, ~ \forall k\in\mathcal{K}, \label{Prob:Done:Const5}\\
	&&q_{k,i} \in \{0,1\}, ~ \forall k\in\mathcal{K}, ~ \forall i\in\mathcal{N}. \label{Prob:Done:Const6}
\end{IEEEeqnarray}
It is worth noting that Problem~$\Pteq$ could be decoupled for each Subchannel~$k$ if the coupling vector~$\bar{\bP}$ were fixed.
Hence, we separate it into two levels of optimization.
At the lower level, we have $K$~subproblems, one for each Subchannel~$k$, defined by
\begin{IEEEeqnarray}{c'c'l}
	\IEEEnonumber*
	\Slave & \maximize_{\bp_k,\,\bq_k} & \sum_{i\in\calN} \tilde{w}_i R_{k,i}(\bp_k, \bq_k; \bh_k) \\
	&\subjto
	&\sum_{i\in\mathcal{N}} p_{k,i} \le \bar{P}_k, \\
	&&p_{k,i} \ge 0, ~ \forall i\in\mathcal{N}, \\
	&&\sum_{i\in\mathcal{N}} q_{k,i} \le M, \\
	&&q_{k,i} \in \{0,1\}, ~ \forall i\in\mathcal{N}.
\end{IEEEeqnarray}
At the higher level, we have an optimization in charge of updating the coupling vector $\bar{\bP}$, defined by
\begin{IEEEeqnarray}{c'c'l}
	\IEEEnonumber*
	\Master & \maximize_{\bar{\bP}} & \phi^*(\bar{\bP}) = \sum_{k\in\mathcal{K}} \phi_k^*(\bar{P}_k) \\
	&\subjto
	&\sum_{k\in\mathcal{K}} \bar{P}_k \le \Pmax, \\
	&&0 \le \bar{P}_k \le \Pmaxk, ~ \forall k\in\mathcal{K}, 
\end{IEEEeqnarray}
where $\phi_k^*(\bar{P}_k)$ is the optimal value of Problem~$\Slave$.
Then, we can obtain a suboptimal solution, $\{\bp^*, \bq^*, \bar{\bP}^*\}$, to Problem~$\Pteq$ by alternately solving Problems~$\Slave$, $\forall k\in\mathcal{K}$, and Problem~$\Master$ until convergence.
The pseudocode for this process is summarized in Algorithm~\ref{Alg:SG-Master}.
We can show that Algorithm~\ref{Alg:SG-Master} converges to a stationary point.

\begin{theorem} \label{thm:convergence}
Algorithm~\ref{Alg:SG-Master} converges to a stationary point.
\end{theorem}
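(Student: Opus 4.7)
The plan is to establish convergence in two stages: first show that the sequence of objective values produced by Algorithm~\ref{Alg:SG-Master} is monotonically non-decreasing and bounded above, hence convergent; second argue that the iterate sequence lies in a compact feasible set, so it admits accumulation points, and any such accumulation point satisfies the appropriate stationarity conditions of Problem~$\Pteq$.

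For monotonicity and boundedness, I would argue as follows. In each outer iteration $n$, given $\bar{\bP}^{(n)}$, solving the subproblems $\Slave$ for all $k\in\mathcal{K}$ yields $(\bp_k^{(n+1)},\bq_k^{(n+1)})$ attaining $\phi_k^*(\bar{P}_k^{(n)})$; the optimal Lagrange multiplier of the constraint $\sum_i p_{k,i}\le\bar{P}_k$ in $\Slave$ serves as a subgradient of $\phi_k^*$ at $\bar{P}_k$, so the master update in $\Master$ can be interpreted as a projected subgradient step on $\phi^*$. With a properly chosen step-size rule (e.g., diminishing step sizes satisfying $\sum_n\alpha_n=\infty$ and $\sum_n\alpha_n^2<\infty$, or a line search that guarantees ascent), the objective does not decrease across iterations, i.e., $\phi^*(\bar{\bP}^{(n+1)})\ge\phi^*(\bar{\bP}^{(n)})$. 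Compactness of the feasible region $\{\bar{\bP}:\bar{P}_k\in[0,\Pmaxk],\,\sum_k\bar{P}_k\le\Pmax\}$, together with continuity and boundedness of each $R_{k,i}$ on it, implies that $\{\phi^*(\bar{\bP}^{(n)})\}$ is bounded above, and hence convergent.

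The main obstacle will be characterizing the limit as a stationary point, given that $\bq$ is integer-valued and the continuous objective is non-concave. My plan is to exploit the finiteness of~$\mathcal{Q}$: along any convergent subsequence, $\bq^{(n)}$ must stabilize at some $\bq^\star$ after finitely many iterations, since it can only take values in a finite set. Conditioned on this fixed $\bq^\star$, Problem~$\Pteq$ reduces to a continuous (though possibly non-concave) optimization in $(\bp,\bar{\bP})$ over a compact convex set, for which standard convergence results for primal decomposition with a subgradient master update imply that the limit $(\bp^\star,\bar{\bP}^\star)$ satisfies the KKT stationarity conditions of the reduced problem. This is the sense in which the triple $(\bp^\star,\bq^\star,\bar{\bP}^\star)$ is a stationary point of~$\Pteq$, completing the proof.
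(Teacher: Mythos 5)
Your first stage --- a monotonically non-decreasing objective sequence that is bounded above, hence convergent --- is exactly the paper's argument, and in fact that is \emph{all} the paper proves: it invokes the monotone convergence theorem on the objective values and stops, with no KKT characterization of the limit. However, the mechanism you give for monotonicity does not match Algorithm~\ref{Alg:SG-Master} and would not deliver monotonicity even if it did. The master problem \Master{} is not updated by a projected subgradient step governed by a step-size rule: Proposition~\ref{thm:convexity_of_phi} shows $\phi^*$ is concave, and Algorithm~\ref{Alg:Solve_Master} solves \Master{} \emph{exactly} via the closed-form KKT solution of Theorem~\ref{thm:Pkstar} and a bisection on the single multiplier $\mu^*$. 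Moreover, a subgradient method with diminishing step sizes is not an ascent method, so ``properly chosen step sizes'' would not give you $\phi^*(\bar{\bP}^{(n+1)})\ge\phi^*(\bar{\bP}^{(n)})$. The correct and much simpler argument, which is the one the paper uses, is that each outer iteration consists of two block maximizations over the feasible set of \Pteq{} --- solving \Slave{} for every $k$ with $\bar{\bP}$ fixed, then solving \Master{} with the resulting $\phi^*$ --- and each such step can only weakly increase the common objective.

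Your second stage is more ambitious than the paper, but it contains a genuine error: a sequence taking values in the finite set $\mathcal{Q}$ need not stabilize after finitely many iterations; it can oscillate indefinitely, e.g., between assignments attaining the same objective value. You can only fix $\bq$ along a further subsequence, which is weaker than what your argument requires. In addition, the subproblems \Slave{} are solved by a heuristic (the interference-limited approximation and the two-user reduction), so the limit need not satisfy the KKT conditions of the exact reduced continuous problem, and the ``standard convergence results for primal decomposition with a subgradient master update'' you invoke do not apply since, again, the master is solved in closed form rather than by subgradient steps. None of this machinery is needed for the theorem as the paper interprets it: ``converges to a stationary point'' is established there only in the sense that the monotone, bounded objective sequence converges.
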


\begin{proof}
\if\mydocumentclass2
Due to the page limit, we leave the proof in Appendix~A of the arXiv version~\cite{fullVer}.
\else
See Appendix~\ref{prove:thm:convergence}.
\fi
\end{proof}

In the primal decomposition method, if the primal problem is a convex problem, not only the subproblems but also the master problem becomes a convex problem, resulting in the convergence to a global optimal solution.
In our case, however, since Problem~$\Pt$ is not a convex problem, the convergence to a global optimal solution is not guaranteed.
Nevertheless, we show by simulation results that the algorithm provides a near-optimal performance with very low computational complexity.\footnote{In~\cite{salaun2020joint}, it is demonstrated that the proposed Joint-SAPA algorithm therein (called Joint-SAPA-DP later in simulation results) achieves near-optimal performance.
Accordingly, by comparing our Joint-SAPA-LCC algorithm with the Joint-SAPA-DP algorithm via simulation, we verify that our algorithm provides near-optimal performance with very low computational complexity.}
Also, it is worth noting that we will derive $\phi_k^*(\bar{P}_k)$ in a closed form in Section~\ref{sec:Solution_slave}, and effectively use it to solve Problem~$\Master$ in Section~\ref{sec:Solution_master}.

\begin{algorithm}[!t]
\DontPrintSemicolon
Initialize: $\bar{P}_k \gets \Pmax/K, \forall k\in\mathcal{K}$.\\
\Repeat{convergence}{
	Obtain $\phi_k^*(\bar{P}_k)$ by solving Problem~$\Slave$ for the given $\bar{P}_k$, $\forall k\in\mathcal{K}$, using Algorithm~\ref{Alg:Solve_Slave}.\\
	Obtain the solution, $\bar{\bP}^*$, to Problem~$\Master$ for the given $\phi^*(\bar{\bP})$ using Algorithm~\ref{Alg:Solve_Master}.\\
	Update $\bar{\bP} \gets \bar{\bP}^*$
%
}
Obtain $\{\bp_k^*, \bq_k^*\}$ by solving Problem~$\Slave$ for the given $\bar{P}_k$, $\forall k\in\mathcal{K}$, using Algorithm~\ref{Alg:Solve_Slave}.\\
\Return{$\{\bp^*, \bq^*\}$.}
\caption{Joint-SAPA-LCC}
\label{Alg:SG-Master}
\end{algorithm}

\subsection{Solution to Problem $\Slave$}
\label{sec:Solution_slave}
In this subsection, we discuss how to solve Problem~$\Slave$ for a given $\bar{P}_k$.
To be specific, to maximize the weighted sum rate over Subchannel~$k$, we find out which users to be assigned to Subchannel~$k$ and how much power to be allocated to them under the limited power of $\bar{P}_k$.
In addition, we derive the corresponding objective value, $\phi_k^*(\bar{P}_k)$, as a closed-form function of $\bar{P}_k$.

Problem~$\Slave$ is difficult to solve mainly due to the integer variables, i.e., the subchannel assignment indicators, $\bq_k$.
To address this difficulty, we first consider the following problem:
\begin{IEEEeqnarray}{c'c'l}
	\IEEEnonumber*
	\Q & \maximize_{\bp_k} & \sum_{i\in\calN} \tilde{w}_i R_{k,i}(\bp_k; \bh_k) \\
	&\subjto
	&\sum_{i\in\mathcal{N}} p_{k,i} \le \bar{P}_k, \\
	&&p_{k,i} \ge 0, ~ \forall i\in\mathcal{N},
\end{IEEEeqnarray}
where
\begin{equation}
	R_{k,i}(\bp_k; \bh_k) = B_k \log_2 \left( 1 + \frac{p_{k,i}}{\sum_{j\in\calN : \eta_{k,j} < \eta_{k,i}} p_{k,j} + \eta_{k,i}} \right).
\end{equation}
Although Problem~$\Q$ is different from Problem~$\Slave$, we can easily derive the optimal solution to Problem~$\Slave$ from that to Problem~$\Q$ if a certain condition is met, by the following~theorem.

\begin{theorem}\label{thm:user_selection_exclusion}
	Let $\bp_k^\dagger=(p_{k,i}^\dagger)_{\forall i\in\mathcal{N}}$ be an optimal solution to Problem~$\Q$, and suppose that it satisfies
	\begin{equation}\label{eq:condition_le_M}
		\sum_{i\in\mathcal{N}} \mathbf{1}_{\left\{p_{k,i}^\dagger>0\right\}} \le M.
	\end{equation}
	Then, an optimal solution, $\{\bp_k^*, \bq_k^*\}$, to Problem~$\Slave$ can be obtained as
	\begin{equation}\label{eq:pstar qstar from pdagger}
		\bp_k^* = \bp_k^\dagger \textnormal{ and } \bq_k^* = (q_{k,i}^*)_{\forall i\in\mathcal{N}},
	\end{equation}
	where $q_{k,i}^* = \mathbf{1}_{\left\{p_{k,i}^\dagger>0\right\}}$ for all $i\in\mathcal{N}$.
%
\end{theorem}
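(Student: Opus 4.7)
The plan is to interpret Problem~$\Q$ as a relaxation of Problem~$\Slave$ whose feasible region is large enough to contain a ``lift'' of every candidate of Problem~$\Slave$, and then to show that the assumption \eqref{eq:condition_le_M} guarantees that the relaxed optimum $\bp_k^\dagger$ can be projected back into the feasible region of Problem~$\Slave$ without loss. The proof splits into three short steps.

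First, I would verify that the proposed pair $(\bp_k^*,\bq_k^*)$ in \eqref{eq:pstar qstar from pdagger} is feasible for Problem~$\Slave$. Non-negativity of $\bp_k^*$ and the binary requirement on $\bq_k^*$ are immediate from the definitions, the total-power budget $\sum_{i} p_{k,i}^* = \sum_{i} p_{k,i}^\dagger \le \bar{P}_k$ inherits from feasibility of $\bp_k^\dagger$ in Problem~$\Q$, and the SIC-capacity constraint $\sum_{i} q_{k,i}^* = \sum_{i}\mathbf{1}_{\{p_{k,i}^\dagger>0\}}\le M$ is exactly the hypothesis~\eqref{eq:condition_le_M}.

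Second, I would compare the two objective functions on the constructed point. The only difference between the rate expression in Problem~$\Slave$ and that in Problem~$\Q$ is the appearance of the factor $q_{k,i}$ multiplying each $p_{k,i}$. With $q_{k,i}^*=\mathbf{1}_{\{p_{k,i}^\dagger>0\}}$ and $p_{k,i}^*=p_{k,i}^\dagger$ one has $q_{k,i}^* p_{k,i}^* = p_{k,i}^\dagger$ for every $i$ (the two cases $p_{k,i}^\dagger=0$ and $p_{k,i}^\dagger>0$ are handled separately), so $R_{k,i}(\bp_k^*,\bq_k^*;\bh_k)=R_{k,i}(\bp_k^\dagger;\bh_k)$ for every $i\in\calN$ and the two weighted sums coincide.

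Third, I would establish the upper-bound direction: the optimal value of Problem~$\Slave$ cannot exceed that of Problem~$\Q$. Given any feasible $(\bp_k,\bq_k)$ of Problem~$\Slave$, define $\tilde{p}_{k,i}=q_{k,i}p_{k,i}$. Then $\tilde{p}_{k,i}\ge 0$ and $\sum_{i}\tilde{p}_{k,i}\le\sum_i p_{k,i}\le\bar{P}_k$, so $\tilde{\bp}_k$ is feasible for Problem~$\Q$, while the rate of Problem~$\Slave$ at $(\bp_k,\bq_k)$ equals the rate of Problem~$\Q$ at $\tilde{\bp}_k$ term by term. Hence the weighted sum rate of any feasible point of Problem~$\Slave$ is at most the optimum $\phi_k^*(\bar{P}_k)$ of Problem~$\Q$. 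Combined with the previous step, $(\bp_k^*,\bq_k^*)$ is feasible for Problem~$\Slave$ and attains this upper bound, so it is optimal.

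The only delicate point is the objective-matching in step two: one must be careful that the substitution $q_{k,i}^* p_{k,i}^*=p_{k,i}^\dagger$ also holds inside the interference sum $\sum_{j:\eta_{k,j}<\eta_{k,i}} q_{k,j}^* p_{k,j}^*$, which follows again from the indicator construction. Everything else reduces to checking constraints and a simple ``relaxation $+$ feasibility'' argument, so I do not anticipate any substantive obstacle.
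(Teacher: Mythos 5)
Your proposal is correct and follows essentially the same two-part structure as the paper's proof: establishing that the optimal value of Problem~\Slave{} cannot exceed that of Problem~\Q{} (the paper does this by fixing $\bq_k$ and observing the restricted problem optimizes over the subset $\mathcal{N}_k\subseteq\mathcal{N}$, which is equivalent to your mapping $\tilde{p}_{k,i}=q_{k,i}p_{k,i}$), and then exhibiting the constructed pair as a feasible point of Problem~\Slave{} attaining $f_{\Q}^\dagger$. The only cosmetic difference is that your lifting argument is stated pointwise on feasible solutions rather than via the restricted user set, but the content is identical.
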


\begin{proof}
\if\mydocumentclass2
Due to the page limit, we leave the proof in Appendix~B of the arXiv version~\cite{fullVer}.
\else
See Appendix~\ref{prove:thm:user_selection_exclusion}.
\fi
\end{proof}

Note that in the remaining of this subsection, we focus on developing a low-complexity heuristic algorithm that solves Problem~$\Q$ without considering \eqref{eq:condition_le_M}.
However, it will be revealed later that the solution to Problem~$\Q$ obtained by our proposed algorithm always satisfies \eqref{eq:condition_le_M} as long as $M>1$.
Hence, a solution to Problem~$\Slave$ can be easily obtained from the solution to Problem~$\Q$ based on Theorem~\ref{thm:user_selection_exclusion}.


We now discuss how to solve Problem~$\Q$.
Even though Problem~$\Q$ does not have any integer variables, it is still known to be strongly NP-hard~\cite{salaun2018optimal}.
Hence, we focus on developing a heuristic algorithm that provides a near-optimal solution to Problem~$\Q$ with very low computational complexity.
To this end, we first find \textit{candidate} users who might be allocated positive power on Subchannel~$k$.
Then, we derive the optimal power allocation for them in closed forms.
For compact notation, we assume, without loss of generality, that users are ordered such that $\eta_{k,i} > \eta_{k,j}$ if $i < j$, and define a \textit{last SIC user} as follows.

\begin{definition}
\textit{A last SIC user} refers to a user who does not experience any interference signals after the SIC process.
\end{definition}

We start with the assumption that User~$\varphi_k$ is the last SIC user on Subchannel~$k$ and has been allocated a certain amount of power.
Accordingly, we assume that $p_{k,\varphi_k}$ is given as a fixed positive value, and $p_{k,i}$ for $i>\varphi_k$ is given as zero so that User~$\varphi_k$ does not experience any interference signals after the SIC process.
%
Under this assumption, $p_{k,i}$'s for $i\ge\varphi_k$ are no longer decision variables.
Note that how to select the last SIC user and how much power to allocate to it will be discussed later.
Now, Problem~$\Q$ can be reformulated as
\if\mydocumentclass0
	\begin{IEEEeqnarray}{c'c'l}
		\IEEEnonumber*
		\Qone & \maximize_{p_{k,i},\,i<\varphi_k}
			&\begin{multlined}[t]
				\sum_{i=1}^{\varphi_k-1} \tilde{w}_i B_k \log_2 \left(1+\frac{p_{k,i}}{\sum_{j>i} p_{k,j} + \eta_{k,i}} \right) \\
				+ \tilde{w}_{\varphi_k} B_k \log_2 \left( 1+ \frac{p_{k,\varphi_k}}{\eta_{k,\varphi_k}} \right)
			\end{multlined} \\
		&\subjto
			&\sum_{i=1}^{\varphi_k-1} p_{k,i} + p_{k,\varphi_k} \le \bar{P}_k, \\
		&	&p_{k,i} \ge 0, ~ i<\varphi_k.
	\end{IEEEeqnarray}
\else
	\begin{IEEEeqnarray}{c'c'l}
		\IEEEnonumber*
		\Qone & \maximize_{p_{k,i},\,i<\varphi_k}
			& \sum_{i=1}^{\varphi_k-1} \tilde{w}_i B_k \log_2 \left(1+\frac{p_{k,i}}{\sum_{j>i} p_{k,j} + \eta_{k,i}} \right) + \tilde{w}_{\varphi_k} B_k \log_2 \left( 1+ \frac{p_{k,\varphi_k}}{\eta_{k,\varphi_k}} \right)\\
		&\subjto
			&\sum_{i=1}^{\varphi_k-1} p_{k,i} + p_{k,\varphi_k} \le \bar{P}_k, \\
		&	&p_{k,i} \ge 0, ~ i<\varphi_k.
	\end{IEEEeqnarray}
\fi
The purpose of this problem is not to find power allocation to users but to find \textit{candidate} users when User~$\varphi_k$ is selected as the last SIC user on Subchannel~$k$.
Note that, in general, enough power is usually allocated to the last SIC user's signal for its successful decoding.
Accordingly, the amount of interference power experienced by users other than the last SIC user will usually be non-negligible and exceed the thermal noise power.
In other words, we assume that the mode of operation is interference limited, so that the noise power of the users suffering from the interference signals is neglected, i.e., $\sigma_{k,i}^2=0$ (accordingly, $\eta_{k,i}=0$) for $i<\varphi_k$.
Then, letting $\rho_{k,i} = \sum_{j=i}^{\varphi_k} p_{k,j}$ for $i\le\varphi_k$, we can approximate Problem~$\Qone$ as
%
%
\if\mydocumentclass0
	\begin{IEEEeqnarray}{c'c'l}
		\IEEEnonumber*
		\Qtwo & \maximize_{\rho_{k,i},\,i<\varphi_k}
			&\begin{multlined}[t]
				\sum_{i=1}^{\varphi_k-1} \tilde{w}_i B_k \log_2 \left( \frac{\rho_{k,i}}{\rho_{k,i+1}} \right) \\
				+ \tilde{w}_{\varphi_k} B_k \log_2 \left( 1+ \frac{\rho_{k,\varphi_k}}{\eta_k,\varphi_k} \right)
			\end{multlined} \\
		&\subjto
			&\prod_{i=1}^{\varphi_k-1}\frac{\rho_{k,i}}{\rho_{k,i+1}} \le \frac{\bar{P}_k}{\rho_{k,\varphi_k}}, \\
		&	&\frac{\rho_{k,i}}{\rho_{k,i+1}} \ge 1, ~ i<\varphi_k,
	\end{IEEEeqnarray}
\else
	\begin{IEEEeqnarray}{c'c'l}
	\IEEEnonumber*
		\Qtwo & \maximize_{\rho_{k,i},\,i<\varphi_k}
			& \sum_{i=1}^{\varphi_k-1} \tilde{w}_i B_k \log_2 \left( \frac{\rho_{k,i}}{\rho_{k,i+1}} \right) + \tilde{w}_{\varphi_k} B_k \log_2 \left( 1+ \frac{\rho_{k,\varphi_k}}{\eta_k,\varphi_k} \right)\\
		&\subjto
			&\prod_{i=1}^{\varphi_k-1}\frac{\rho_{k,i}}{\rho_{k,i+1}} \le \frac{\bar{P}_k}{\rho_{k,\varphi_k}}, \\
		&	&\frac{\rho_{k,i}}{\rho_{k,i+1}} \ge 1, ~ i<\varphi_k,
	\end{IEEEeqnarray}
\fi
where the inequality constraints are equivalent to those in Problem~$\Qone$, which can be derived by simple arithmetic operations.
In succession, by letting $r_{k,i} = \log_2(\rho_{k,i}/\rho_{k,i+1})$ for $i<\varphi_k$ and $r_{k,\varphi_k} = \log_2(\rho_{k,\varphi_k})$, and taking the logarithm of the both sides of the constraints, we can reformulate Problem~$\Qtwo$ equivalently as
\begin{IEEEeqnarray}{c'c'l}
	\IEEEnonumber*
	\Qthree & \maximize_{r_{k,i},\,i<\varphi_k}
		&\sum_{i=1}^{\varphi_k-1} \tilde{w}_i B_k r_{k,i} + \tilde{w}_{\varphi_k} B_k \log_2 \left( 1+ \frac{2^{r_{k,\varphi_k}}}{\eta_k,\varphi_k} \right) \\
	&\subjto
		&\sum_{i=1}^{\varphi_k-1} r_{k,i} \le \log_2(\bar{P}_k) - r_{k,\varphi_k}, \\
	&	&r_{k,i} \ge 0, ~ i<\varphi_k.
\end{IEEEeqnarray}
Note that since $r_{k,\varphi_k}$ is not a decision variable, the second term, $\tilde{w}_{\varphi_k} B_k \log_2 (1+{2^{r_{k,\varphi_k}}}/{\eta_k,\varphi_k})$, of the objective function and the right-hand side, $\log_2(\bar{P}_k)-r_{k,\varphi_k}$, of the first constraint are constants.
Also, the decision variables are linearly combined in the objective function, and the feasible set is a unit simplex.
Hence, it is obvious that the objective function is maximized when all the decision variables, except the one with the largest weight, are zero.
Also, by the definition of $r_{k,i}$, we can easily see that, for any $i<\varphi_k$, $p_{k,i}$ is zero if and only if $r_{k,i}$ is zero.
Thus, only one user with the largest weight is selected as the other \textit{candidate} user on Subchannel~$k$ together with the last SIC user, i.e., User~$\varphi_k$.
We state this result in the following theorem.
\begin{theorem}\label{thm:two user selected}
	Under the assumption that the noise power of users suffering from the interference signals is neglected, on each Subchannel~$k$, at most two users are allocated power according to the optimal solution to Problem~$\Qone$.
	To be specific, when User~$\varphi_k$ has been selected as the last SIC user on Subchannel~$k$, User~$\psi_k(\varphi_k)$ is accordingly selected as the other \textit{candidate} user, where
	\begin{equation}\label{eq:psi_k}
		\psi_k(\varphi_k) = \argmax_{i<\varphi_k} \{\tilde{w}_i\}.
	\end{equation}
\end{theorem}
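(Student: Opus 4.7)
The plan is to follow the chain of equivalent (and one approximate) reformulations already set up in the excerpt: Problem~$\Qone \to \Qtwo \to \Qthree$. The first step is to invoke the interference-limited assumption $\eta_{k,i}=0$ for $i<\varphi_k$, which is exactly what justifies passing from $\Qone$ to $\Qtwo$. Here I would verify carefully that the change of variables $\rho_{k,i}=\sum_{j=i}^{\varphi_k} p_{k,j}$ is a bijection between the feasible power allocations $(p_{k,i})_{i<\varphi_k}$ with $p_{k,i}\ge 0$ and the sequences $\rho_{k,1}\ge \rho_{k,2}\ge\cdots\ge\rho_{k,\varphi_k}$ with $\rho_{k,1}\le \bar P_k$, so that the two problems are equivalent under the stated approximation. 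The key identity to note is $p_{k,i}/(\sum_{j>i}p_{k,j}) = \rho_{k,i}/\rho_{k,i+1} - 1$, which together with $\log_2(1+x-1)=\log_2 x$ turns the per-user rate terms into the telescoping logarithmic form appearing in $\Qtwo$.

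Second, I would make the substitution $r_{k,i}=\log_2(\rho_{k,i}/\rho_{k,i+1})$ for $i<\varphi_k$ (and $r_{k,\varphi_k}=\log_2\rho_{k,\varphi_k}$) and take logarithms in the coupling constraint to arrive at Problem~$\Qthree$. After this transformation the crucial observation is structural: since $p_{k,\varphi_k}$, hence $\rho_{k,\varphi_k}$, hence $r_{k,\varphi_k}$, is fixed by assumption, the objective is an affine function $\sum_{i<\varphi_k} \tilde w_i B_k\, r_{k,i}$ plus a constant, and the feasible region is a scaled simplex $\{r_{k,i}\ge 0 : \sum_{i<\varphi_k} r_{k,i} \le \log_2(\bar P_k)-r_{k,\varphi_k}\}$. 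Maximizing a linear function with nonnegative coefficients over a simplex attains its optimum at the vertex corresponding to the largest coefficient; hence exactly one $r_{k,i^\star}$ with $i^\star=\argmax_{i<\varphi_k}\tilde w_i$ is positive and all other $r_{k,i}$ vanish. (If there is a tie, any tied index may be chosen, and I would state this mild caveat in a remark.)

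Third, I would translate back from $r$-variables to $p$-variables. By construction $r_{k,i}=0$ is equivalent to $\rho_{k,i}=\rho_{k,i+1}$, which is in turn equivalent to $p_{k,i}=0$ for $i<\varphi_k$. Therefore all users other than $\varphi_k$ and $\psi_k(\varphi_k)$ receive zero power, which is precisely the claim. The candidate pair is $\{\psi_k(\varphi_k),\varphi_k\}$ with $\psi_k(\varphi_k)$ defined by~\eqref{eq:psi_k}, and in total at most two users are allocated positive power on Subchannel~$k$, as asserted.

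The main obstacle I anticipate is not the algebraic manipulation but the bookkeeping around the boundary and degenerate cases: making sure the change of variables $\rho \leftrightarrow p$ and $r \leftrightarrow \rho$ is valid when $\bar P_k$ or $p_{k,\varphi_k}$ is small (so that logarithms are well-defined and the simplex is nonempty), and handling the case $\varphi_k=1$ (where there is no "other" user to select, so the claim is trivially true with the second user slot empty). Everything else is a direct consequence of the linearity of the reduced objective over a simplex.
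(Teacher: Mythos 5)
Your proposal follows essentially the same route as the paper, whose argument for this theorem is the in-text reformulation chain $\Qone\to\Qtwo\to\Qthree$ followed by the observation that a linear objective with positive coefficients over a simplex is maximized at the vertex of the largest coefficient, and then the translation $r_{k,i}=0 \Leftrightarrow p_{k,i}=0$. The additional bookkeeping you flag (bijectivity of the changes of variables, tie-breaking, and the degenerate case $\varphi_k=1$) is sensible and only tightens what the paper leaves implicit.
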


\noindent By Theorem~\ref{thm:two user selected}, we can reduce Problem~$\Qone$ to the power allocation problem for the two-user case as
\if\mydocumentclass0
	\begin{IEEEeqnarray}{c'c'l}
		\IEEEnonumber*
		\Qfour & \maximize_{p_{k,\psi_k},\,p_{k,\varphi_k}}
			&\begin{multlined}[t]
				\tilde{w}_{\psi_k} B_k \log_2 \left( 1 + \frac{p_{k,\psi_k}}{p_{k,\varphi_k} + \eta_{k,\psi_k}} \right) \\
				+ \tilde{w}_{\varphi_k} B_k \log_2 \left( 1 + \frac{p_{k,\varphi_k}}{\eta_{k,\varphi_k}} \right)
			\end{multlined} \\
		&\subjto
			&p_{k,\psi_k} + p_{k,\varphi_k} \le \bar{P}_k, \\
		&	&p_{k,\psi_k} \ge 0, ~ p_{k,\varphi_k} \ge 0,
	\end{IEEEeqnarray}
\else
	\begin{IEEEeqnarray}{c'c'l}
		\IEEEnonumber*
		\Qfour & \maximize_{p_{k,\psi_k},\,p_{k,\varphi_k}}
			&\tilde{w}_{\psi_k} B_k \log_2 \left( 1 + \frac{p_{k,\psi_k}}{p_{k,\varphi_k} + \eta_{k,\psi_k}} \right) + \tilde{w}_{\varphi_k} B_k \log_2 \left( 1 + \frac{p_{k,\varphi_k}}{\eta_{k,\varphi_k}} \right) \\
		&\subjto
			&p_{k,\psi_k} + p_{k,\varphi_k} \le \bar{P}_k, \\
		&	&p_{k,\psi_k} \ge 0, ~ p_{k,\varphi_k} \ge 0,
	\end{IEEEeqnarray}
\fi
where $\psi_k(\varphi_k)$ is replaced with $\psi_k$ for notational simplicity.
This two-user power allocation problem can be optimally solved in closed forms.

\begin{theorem}\label{thm:2-user}
Let $\{p_{k,\psi_k}^\star, p_{k,\varphi_k}^\star\}$ be the optimal solution to Problem~$\Qfour$.
It can be obtained as
\begin{align}
	&p_{k,\varphi_k}^\star =
		\begin{dcases}
			0,			& \textnormal{if } \tilde{w}_{\varphi_k} / \tilde{w}_{\psi_k} \le C_k^1, \\
			\bar{P}_k,	& \textnormal{if } \tilde{w}_{\varphi_k} / \tilde{w}_{\psi_k} > C_k^2(\bar{P}_k), \\
			\frac{\tilde{w}_{\psi_k} \eta_{k,\varphi_k} - \tilde{w}_{\varphi_k} \eta_{k,\psi_k}}{\tilde{w}_{\varphi_k} - \tilde{w}_{\psi_k}},
						& \textnormal{otherwise,}
		\end{dcases} \label{eq:optimal p_SIC}\\
	&p_{k,\psi_k}^\star = \bar{P}_k - p_{k,\varphi_k}^\star, \label{eq:optimal p_nonSIC}
\end{align}
where
\begin{equation}\label{eq:C1 and C2}
	C_k^1 = \frac{\eta_{k,\varphi_k}}{\eta_{k,\psi_k}} \quad \textnormal{and} \quad
	C_k^2(\bar{P}_k) = \frac{\bar{P}_k + \eta_{k,\varphi_k}}{\bar{P}_k + \eta_{k,\psi_k}}.
\end{equation}
\end{theorem}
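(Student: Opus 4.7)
The strategy is to eliminate the sum-power constraint by monotonicity, reducing Problem~$\Qfour$ to a one-dimensional maximization over $p := p_{k,\varphi_k}\in[0,\bar{P}_k]$, and then to carry out a direct derivative analysis whose sign is controlled by an affine numerator. First, I would observe that for any fixed $p_{k,\varphi_k}$ the objective is strictly increasing in $p_{k,\psi_k}$ (the first log term grows, the second is unaffected), so every optimal solution satisfies $p_{k,\psi_k}+p_{k,\varphi_k}=\bar{P}_k$; this immediately yields \eqref{eq:optimal p_nonSIC} once $p_{k,\varphi_k}^\star$ is determined. Substituting $p_{k,\psi_k}=\bar{P}_k-p$ rewrites the objective as
\begin{equation*}
  f(p) = \tilde{w}_{\psi_k} B_k \log_2\!\left(\tfrac{\bar{P}_k+\eta_{k,\psi_k}}{p+\eta_{k,\psi_k}}\right) + \tilde{w}_{\varphi_k} B_k \log_2\!\left(\tfrac{p+\eta_{k,\varphi_k}}{\eta_{k,\varphi_k}}\right).
\end{equation*}

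Next, a direct computation gives
\begin{equation*}
  f'(p) = \frac{B_k}{\ln 2}\cdot\frac{(\tilde{w}_{\varphi_k}-\tilde{w}_{\psi_k})\,p + \bigl(\tilde{w}_{\varphi_k}\eta_{k,\psi_k}-\tilde{w}_{\psi_k}\eta_{k,\varphi_k}\bigr)}{(p+\eta_{k,\varphi_k})(p+\eta_{k,\psi_k})}.
\end{equation*}
The denominator is strictly positive on $[0,\bar{P}_k]$, while the numerator is affine in $p$ and therefore changes sign at most once. Evaluating at the endpoints, a short rearrangement shows $f'(0)\le 0 \Leftrightarrow \tilde{w}_{\varphi_k}/\tilde{w}_{\psi_k}\le C_k^1$ and $f'(\bar{P}_k)>0 \Leftrightarrow \tilde{w}_{\varphi_k}/\tilde{w}_{\psi_k}>C_k^2(\bar{P}_k)$, with $C_k^1,C_k^2(\bar{P}_k)$ as in \eqref{eq:C1 and C2}. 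Setting the numerator equal to zero and solving for $p$ recovers exactly the middle expression of \eqref{eq:optimal p_SIC}.

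The three cases in \eqref{eq:optimal p_SIC} then follow by inspection. (i) If $\tilde{w}_{\varphi_k}/\tilde{w}_{\psi_k}\le C_k^1$, then $f'(0)\le 0$ and the slope $\tilde{w}_{\varphi_k}-\tilde{w}_{\psi_k}$ of the numerator is non-positive (since $C_k^1<1$ forces $\tilde{w}_{\varphi_k}\le\tilde{w}_{\psi_k}$), so $f'\le 0$ on the entire interval and $p_{k,\varphi_k}^\star=0$. (ii) If $\tilde{w}_{\varphi_k}/\tilde{w}_{\psi_k}>C_k^2(\bar{P}_k)$, then both $f'(0)$ and $f'(\bar{P}_k)$ are positive, and because an affine function that is positive at both endpoints of an interval stays positive throughout, we have $f'>0$ on $[0,\bar{P}_k]$ and $p_{k,\varphi_k}^\star=\bar{P}_k$. (iii) Otherwise $f'(0)>0\ge f'(\bar{P}_k)$, so the unique interior zero of $f'$ is necessarily a maximum, giving the middle branch of \eqref{eq:optimal p_SIC}.

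The main subtlety is that $f$ is not concave in general: it is the sum of a convex term and a concave term in $p$, so the usual ``first-order condition implies global maximum'' shortcut is not directly available. The key observation that rescues the argument is that $f'(p)$, once placed over a common denominator, has an affine numerator; this forces at most one interior sign change and lets the endpoint signs alone dictate the global behavior, collapsing the analysis to the three clean cases above.
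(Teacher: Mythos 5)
Your proof is correct and follows essentially the same route as the paper's: reduce to a single variable via the active sum-power constraint, differentiate, and exploit the affine numerator of $f'$ over a positive denominator to run a three-way case analysis (the paper organizes the cases around $\tilde{w}_{\varphi_k}/\tilde{w}_{\psi_k}$ versus $1$ before subdividing, whereas you split directly on the endpoint signs, but the content is identical). The only assertion you leave implicit is that $C_k^1 < C_k^2(\bar{P}_k)$ --- needed so that $\tilde{w}_{\varphi_k}/\tilde{w}_{\psi_k} > C_k^2(\bar{P}_k)$ implies $f'(0)>0$ in your case (ii) --- which follows at once from $\eta_{k,\varphi_k} < \eta_{k,\psi_k}$.
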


\begin{proof}
\if\mydocumentclass2
Due to the page limit, we leave the proof in Appendix~C of the arXiv version~\cite{fullVer}.
\else
See Appendix~\ref{prove:thm:2-user}.
\fi
\end{proof}

All derivations so far have been made on the assumption that User~$\varphi_k$ is allocated a certain amount of power as the last SIC user.
However, in the case where $\tilde{w}_{\varphi_k}/\tilde{w}_{\psi_k(\varphi_k)} \le C_k^1$, the optimal power allocation to the last SIC user becomes zero.
Thus, we consider this case to be contradictory, and set the objective value (i.e., the weighted sum rate on Subchannel~$k$) to a negative infinity in this case.
Then, the weighted sum rate on Subchannel~$k$ under the assumption that User~$\varphi_k$ is the last SIC user on this subchannel can be obtained as
\if\mydocumentclass0
	\begin{align}\label{eq:RSIC}
		&\phi_k(\bar{P}_k ; \varphi_k) \nonumber\\
		&=\begin{dcases}
			\begin{multlined}[b]
				\tilde{w}_{\psi_k} B_k \log_2 \left( 1 + \frac{p_{k,\psi_k}^\star}{p_{k,\varphi_k}^\star + \eta_{k,\psi_k}} \right) \\
				+ \tilde{w}_{\varphi_k} B_k \log_2 \left( 1 + \frac{p_{k,\varphi_k}^\star}{\eta_{k,\varphi_k}} \right),
			\end{multlined}
				& \textnormal{if } \tilde{w}_{\varphi_k}/\tilde{w}_{\psi_k} > C_k^1, \\
			-\infty,
				& \textnormal{otherwise,}
		\end{dcases}
	\end{align}
\else
	\begin{equation}\label{eq:RSIC}
		\phi_k(\bar{P}_k ; \varphi_k) =
		\begin{dcases}
			\tilde{w}_{\psi_k} B_k \log_2 \left( 1 + \frac{p_{k,\psi_k}^\star}{p_{k,\varphi_k}^\star + \eta_{k,\psi_k}} \right) + \tilde{w}_{\varphi_k} B_k \log_2 \left( 1 + \frac{p_{k,\varphi_k}^\star}{\eta_{k,\varphi_k}} \right), & \textnormal{if } \tilde{w}_{\varphi_k}/\tilde{w}_{\psi_k} > C_k^1,\\
			-\infty, & \textnormal{otherwise,}
		\end{dcases}
	\end{equation}
\fi
where $p_{k,\varphi_k}^\star$ and $p_{k,\psi_k}^\star$ are given by \eqref{eq:optimal p_SIC} and \eqref{eq:optimal p_nonSIC}, respectively.
Then, the optimal last SIC user on Subchannel~$k$, indexed by $\varphi_k^*$, can be determined as
\begin{equation}\label{eq:Lstar}
	\varphi_k^* = \argmax_{\varphi_k\in\mathcal{N}} \phi_k (\bar{P}_k ; \varphi_k),
\end{equation}
and the corresponding optimal value is given by
\begin{equation}\label{eq:phik}
	\phi_k^*(\bar{P}_k) = \phi_k (\bar{P}_k ; \varphi_k^*).
\end{equation}
Consequently, Users~$\varphi_k^*$ and $\psi_k(\varphi_k^*)$ are selected as the optimal \textit{candidate} users on Subchannel~$k$.
Then, by Theorem~\ref{thm:2-user}, the power allocation solution to Problem~$\Q$ can be obtained as $\bp_k^\dagger=(p_{k,i}^\dagger)_{\forall i\in\mathcal{N}}$, where $p_{k,\varphi_k^*}^\dagger = p_{k,\varphi_k^*}^\star$, $p_{k,\psi_k(\varphi_k^*)}^\dagger = p_{k,\psi_k(\varphi_k^*)}^\star$, and $p_{k,i}^\dagger=0$ for all $i\in\mathcal{N}\setminus\{\varphi_k^*,\psi_k(\varphi_k^*)\}$.
Finally, the solution, $\{\bp_k^*, \bq_k^*\}$, to Problem~$\Slave$ can be derived from $\bp_k^\dagger$ using Theorem~\ref{thm:user_selection_exclusion}.
We summarize this process in Algorithm~\ref{Alg:Solve_Slave}.

\begin{algorithm}[!t]
\DontPrintSemicolon
\For{each User~$\varphi_k\in\calN$}{
	Suppose that User~$\varphi_k$ is the last SIC user.\\
	Select the other \textit{candidate} user using \eqref{eq:psi_k}.\\
	Obtain their power allocation using Theorem~\ref{thm:2-user}.\\
	Obtain $\phi_k(\bar{P}_k ; \varphi_k)$ using~\eqref{eq:RSIC}.\\
}
Obtain~$\varphi_k^*$ and $\phi_k^*(\bar{P}_k)$ using~\eqref{eq:Lstar} and \eqref{eq:phik}, respectively.\\
Let Users~$\varphi_k^*$ and $\psi_k(\varphi_k^*)$ be the optimal \textit{candidate} users.\\
Obtain $\bp_k^\dagger$ using Theorem~\ref{thm:2-user}.\\
Derive $\{\bp_k^*,\bq_k^*\}$ from $\bp_k^\dagger$ using Theorem~\ref{thm:user_selection_exclusion}.\\
\Return{$\{\bp_k^*, \bq_k^*, \phi_k^*(\bar{P}_k)\}$.}
\caption{Solution to Problem~$\Slave$}
\label{Alg:Solve_Slave}
\end{algorithm}

Before moving on to the next subsection, we analyze the computational complexity of Algorithm~\ref{Alg:Solve_Slave}.
First, once any one user is selected as the last SIC user, the computational complexity to find the other \textit{candidate} user based on \eqref{eq:psi_k} is $\mathcal{O}(N)$.
Then, thanks to the closed-form power allocation formulas in Theorem~\ref{thm:2-user}, the two \textit{candidate} users' power allocation and the corresponding weighted sum rate can be calculated in $\mathcal{O}(1)$.
Consequently, since the number of cases in which any one user is selected as the last SIC user is at most $N$, the overall computational complexity of Algorithm~\ref{Alg:Solve_Slave} is $\mathcal{O}(N^2)$.

\subsection{Solution to Problem~$\Master$}
\label{sec:Solution_master}
In this subsection, we discuss how to solve Problem~$\Master$ for the given $\phi^*(\bar{\bP}) = \sum_{k\in\mathcal{K}} \phi_k^*(\bar{P}_k)$.
To be specific, we find an optimal coupling vector, $\bar{\bP}^*$, that maximizes $\phi^*(\bar{\bP})$ under the limited total transmission power budget of $\Pmax$.
For compact notation, we write the optimal last SIC user, $\varphi_k^*$, and the other \textit{candidate} user, $\psi_k(\varphi_k^*)$, on Subchannel~$k$ as $\varphi_k$ and $\psi_k$, respectively, if there is no confusion.

By plugging the optimal power allocation solution derived in Theorem~\ref{thm:2-user} into \eqref{eq:phik}, we can obtain $\phi_k^*(\bar{P}_k)$ as
\if\mydocumentclass0
	\begin{align}\label{eq:phik2}
		&\phi_k^*(\bar{P}_k) \nonumber\\
		&=\begin{dcases}
			\tilde{w}_{\psi_k} B_k \log_2 \left( 1 + \frac{\bar{P}_k}{\eta_{k,\psi_k}} \right) + C_k^3, & \textnormal{if } \tilde{w}_{\varphi_k} / \tilde{w}_{\psi_k} \le C_k^2(\bar{P}_k), \\
			\tilde{w}_{\varphi_k} B_k \log_2 \left( 1 + \frac{\bar{P}_k}{\eta_{k,\varphi_k}} \right), & \textnormal{otherwise},
		\end{dcases}
	\end{align}
\else
	\begin{equation}\label{eq:phik2}
		\phi_k^*(\bar{P}_k) =
		\begin{dcases}
			\tilde{w}_{\psi_k} B_k \log_2 \left( 1 + \frac{\bar{P}_k}{\eta_{k,\psi_k}} \right) + C_k^3, & \textnormal{if } \tilde{w}_{\varphi_k} / \tilde{w}_{\psi_k} \le C_k^2(\bar{P}_k), \\
			\tilde{w}_{\varphi_k} B_k \log_2 \left( 1 + \frac{\bar{P}_k}{\eta_{k,\varphi_k}} \right), & \textnormal{otherwise},
		\end{dcases}
	\end{equation}
\fi
where $C_k^2(\bar{P}_k)$ is defined in~\eqref{eq:C1 and C2}, and
\if\mydocumentclass0
	\begin{multline} \label{eq:C3}
		C_k^3 = \tilde{w}_{\psi_k} B_k \log_2 \left( \frac{\tilde{w}_{\varphi_k} - \tilde{w}_{\psi_k}}{\eta_{k,\varphi_k}-\eta_{k,\psi_k}} \cdot \frac{\eta_{k,\psi_k}}{\tilde{w}_{\psi_k}} \right) \\
		+ \tilde{w}_{\varphi_k} B_k \log_2 \left( \frac{\eta_{k,\varphi_k}-\eta_{k,\psi_k}}{\tilde{w}_{\varphi_k}-\tilde{w}_{\psi_k}} \cdot \frac{\tilde{w}_{\varphi_k}}{\eta_{k,\varphi_k}} \right).
	\end{multline}
\else
	\begin{equation} \label{eq:C3}
		C_k^3 = \tilde{w}_{\psi_k} B_k \log_2 \left( \frac{\tilde{w}_{\varphi_k} - \tilde{w}_{\psi_k}}{\eta_{k,\varphi_k}-\eta_{k,\psi_k}} \cdot \frac{\eta_{k,\psi_k}}{\tilde{w}_{\psi_k}} \right)
		+ \tilde{w}_{\varphi_k} B_k \log_2 \left( \frac{\eta_{k,\varphi_k}-\eta_{k,\psi_k}}{\tilde{w}_{\varphi_k}-\tilde{w}_{\psi_k}} \cdot \frac{\tilde{w}_{\varphi_k}}{\eta_{k,\varphi_k}} \right).
	\end{equation}
\fi
Note that, in~\eqref{eq:phik2}, the case where $\tilde{w}_{\varphi_k} / \tilde{w}_{\psi_k} \le C_k^1$ is excluded since $\tilde{w}_{\varphi_k} / \tilde{w}_{\psi_k}$ is always greater than $C_k^1$ as long as the optimal \textit{candidate} users are selected by Algorithm~\ref{Alg:Solve_Slave}.
In succession, since dealing with~\eqref{eq:phik2} is difficult due to $C_k^2(\bar{P}_k)$ that varies with $\bar{P}_k$, we reformulate it equivalently as
\if\mydocumentclass0
\begin{align}\label{eq:phik3}
	&\phi_k^*(\bar{P}_k) \nonumber\\
	&=\begin{dcases}
		\tilde{w}_{\psi_k} B_k \log_2 \left( 1 + \frac{\bar{P}_k}{\eta_{k,\psi_k}} \right) + C_k^3, & \begin{multlined}[t]\textnormal{if } \tilde{w}_{\varphi_k}/\tilde{w}_{\psi_k} < 1 \\\textnormal{ and } \bar{P}_k \ge C_k^4, \end{multlined}\\
		\tilde{w}_{\varphi_k} B_k \log_2 \left( 1 + \frac{\bar{P}_k}{\eta_{k,\varphi_k}} \right), & \textnormal{otherwise,}
	\end{dcases}
\end{align}
\else
\begin{equation}\label{eq:phik3}
	\phi_k^*(\bar{P}_k) = 
	\begin{dcases}
		\tilde{w}_{\psi_k} B_k \log_2 \left( 1 + \frac{\bar{P}_k}{\eta_{k,\psi_k}} \right) + C_k^3, & \textnormal{if } \tilde{w}_{\varphi_k}/\tilde{w}_{\psi_k} < 1 \textnormal{ and } \bar{P}_k \ge C_k^4,\\
		\tilde{w}_{\varphi_k} B_k \log_2 \left( 1 + \frac{\bar{P}_k}{\eta_{k,\varphi_k}} \right), & \textnormal{otherwise,}
	\end{dcases}
\end{equation}
\fi
where
\begin{equation} \label{eq:C4}
	C_k^4 = \frac{\tilde{w}_{\psi_k} \eta_{k,\varphi_k} - \tilde{w}_{\varphi_k} \eta_{k,\psi_k}}{\tilde{w}_{\varphi_k}-\tilde{w}_{\psi_k}}.
\end{equation}
\if\mydocumentclass2
The proof of the equivalence of \eqref{eq:phik2} and \eqref{eq:phik3} is provided in Appendix~D of the arXiv version~\cite{fullVer}.
\else
The proof of the equivalence of \eqref{eq:phik2} and \eqref{eq:phik3} is provided in Appendix~\ref{prove:phi_k_equivalence}.
\fi
From~\eqref{eq:phik3}, we can see that if $\tilde{w}_{\varphi_k}/\tilde{w}_{\psi_k} \ge 1$, $\phi_k^*(\bar{P}_k)$ is a continuous logarithmic function.
However, if $\tilde{w}_{\varphi_k}/\tilde{w}_{\psi_k} < 1$, it is a piecewise nonlinear function with a breakpoint at $\bar{P}_k=C_k^4$.
Nevertheless, we can show that it is a continuously differentiable concave function.

\begin{proposition}\label{thm:convexity_of_phi}
	The function $\phi_k^*$ in \eqref{eq:phik3} is a continuously differentiable concave function of $\bar{P}_k$ on $[0, \infty)$.
\end{proposition}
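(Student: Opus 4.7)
The plan is to split the analysis into two cases according to whether $\tilde{w}_{\varphi_k}/\tilde{w}_{\psi_k} \ge 1$ or $\tilde{w}_{\varphi_k}/\tilde{w}_{\psi_k} < 1$. In the first case, \eqref{eq:phik3} reduces to a single branch, $\phi_k^*(\bar{P}_k) = \tilde{w}_{\varphi_k} B_k \log_2(1 + \bar{P}_k / \eta_{k,\varphi_k})$, which is manifestly $C^1$ and strictly concave on $[0,\infty)$ as the composition of $\log_2$ with an affine, positive argument. Thus only the second case requires genuine work.

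In the case $\tilde{w}_{\varphi_k}/\tilde{w}_{\psi_k} < 1$, I would define the two candidate branches
\[
f_1(\bar{P}_k) = \tilde{w}_{\varphi_k} B_k \log_2\!\left(1 + \frac{\bar{P}_k}{\eta_{k,\varphi_k}}\right), \quad f_2(\bar{P}_k) = \tilde{w}_{\psi_k} B_k \log_2\!\left(1 + \frac{\bar{P}_k}{\eta_{k,\psi_k}}\right) + C_k^3,
\]
so that $\phi_k^* = f_1$ on $[0, C_k^4)$ and $\phi_k^* = f_2$ on $[C_k^4, \infty)$. Each $f_j$ is trivially smooth and strictly concave on $[0, \infty)$. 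What needs to be verified is continuity and differentiability at the junction $\bar{P}_k = C_k^4$. The critical algebraic fact I will exploit is that $C_k^4$ is, by construction in \eqref{eq:C4}, the unique power level at which $C_k^2(\bar{P}_k) = \tilde{w}_{\varphi_k}/\tilde{w}_{\psi_k}$; equivalently, it satisfies the identity
\[
\tilde{w}_{\varphi_k}\bigl(C_k^4 + \eta_{k,\psi_k}\bigr) = \tilde{w}_{\psi_k}\bigl(C_k^4 + \eta_{k,\varphi_k}\bigr).
\]
Differentiating $f_1$ and $f_2$ and evaluating at $C_k^4$, this identity is exactly what makes $f_1'(C_k^4) = f_2'(C_k^4)$, so $\phi_k^*$ is $C^1$ at the breakpoint.

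For continuity, rather than grinding through the logarithms using the explicit expression \eqref{eq:C3} for $C_k^3$, I would appeal to Theorem~\ref{thm:2-user}: at $\bar{P}_k = C_k^4$ the "boundary" condition $\tilde{w}_{\varphi_k}/\tilde{w}_{\psi_k} = C_k^2(\bar{P}_k)$ holds, and the interior third case of \eqref{eq:optimal p_SIC} yields $p_{k,\varphi_k}^\star = C_k^4$, i.e., $p_{k,\psi_k}^\star = 0$. Consequently both branches of the piecewise formula \eqref{eq:phik2} produce the same objective value at $\bar{P}_k = C_k^4$, which by the equivalence argument referenced earlier gives $f_1(C_k^4) = f_2(C_k^4)$. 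Hence $\phi_k^*$ is continuous and, combined with the derivative match, continuously differentiable on $[0, \infty)$.

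Concavity then follows from a standard monotonicity argument on $(\phi_k^*)'$: each piece has a strictly decreasing derivative (since each $f_j$ is strictly concave), and the two derivatives agree at $C_k^4$, so the global derivative is non-increasing on $[0,\infty)$, which is equivalent to concavity of $\phi_k^*$. The main obstacle I anticipate is the algebraic verification at the breakpoint—particularly convincing the reader that the constant $C_k^3$ in \eqref{eq:C3} is exactly what is needed for continuity—but this is resolved cleanly by invoking Theorem~\ref{thm:2-user} rather than by direct logarithmic manipulation. The only mild subtlety is to note that the identity used for the derivative match also guarantees that $C_k^4 \ge 0$ (since both sides of \eqref{eq:C4} are consistent with $\eta_{k,\psi_k} > \eta_{k,\varphi_k}$ and $\tilde{w}_{\psi_k} > \tilde{w}_{\varphi_k}$), so the piecewise domain $[0, C_k^4) \cup [C_k^4, \infty)$ is nonempty on both sides.
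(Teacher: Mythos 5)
Your proof is correct and follows the same overall architecture as the paper's: split on $\tilde{w}_{\varphi_k}/\tilde{w}_{\psi_k} \ge 1$ versus $< 1$, observe each branch is a smooth concave logarithm, and then verify value and slope agreement at the breakpoint $\bar{P}_k = C_k^4$. The derivative match is established by the same algebraic fact in both arguments (the paper computes $\phi_{k,1}'(C_k^4) = \phi_{k,2}'(C_k^4) = \frac{B_k}{\ln 2}\cdot\frac{\tilde{w}_{\varphi_k}-\tilde{w}_{\psi_k}}{\eta_{k,\varphi_k}-\eta_{k,\psi_k}}$ explicitly; your identity $\tilde{w}_{\varphi_k}(C_k^4+\eta_{k,\psi_k}) = \tilde{w}_{\psi_k}(C_k^4+\eta_{k,\varphi_k})$ is exactly the rearrangement that makes those two fractions equal). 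Where you genuinely diverge is the continuity check: the paper substitutes the explicit constant $C_k^3$ from \eqref{eq:C3} and cancels logarithms term by term, whereas you observe that at $\bar{P}_k = C_k^4$ the interior optimizer of Problem~$\Qfour$ in \eqref{eq:optimal p_SIC} degenerates to the corner $(p_{k,\psi_k}^\star, p_{k,\varphi_k}^\star) = (0, \bar{P}_k)$, so the two closed-form branches of \eqref{eq:phik2} are evaluations of the same objective at the same point and must coincide. Your route is shorter and explains \emph{why} $C_k^3$ is the right constant, at the cost of leaning on the (asserted but not re-derived) correctness of the closed form \eqref{eq:phik2}; the paper's computation is self-contained and doubles as a check of \eqref{eq:C3}. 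One minor imprecision: your closing claim that $C_k^4 \ge 0$ does not follow merely from $\eta_{k,\psi_k} > \eta_{k,\varphi_k}$ and $\tilde{w}_{\psi_k} > \tilde{w}_{\varphi_k}$ — you also need $\tilde{w}_{\varphi_k}/\tilde{w}_{\psi_k} > C_k^1$ (which the paper guarantees by excluding that case after \eqref{eq:phik2}) to make the numerator of \eqref{eq:C4} negative along with the denominator; in any event the proposition would survive $C_k^4 \le 0$, since then only the single $\psi_k$ branch is active on $[0,\infty)$.
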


\begin{proof}
\if\mydocumentclass2
Due to the page limit, we leave the proof in Appendix~E of the arXiv version~\cite{fullVer}.
\else
See Appendix~\ref{prove:thm:convexity_of_phi}.
\fi
\end{proof}


By Proposition~\ref{thm:convexity_of_phi}, we can conclude that the objective function, $\phi^*$, of Problem~$\Master$ is a concave function of $\bar{\bP}$, and accordingly,  Problem~$\Master$ is a convex optimization  problem.
Hence, we can obtain its optimal solution using the Karush–Kuhn–Tucker (KKT) conditions~\cite{boyd2004convex}.

\if\mydocumentclass0
	\begin{theorem} \label{thm:Pkstar}
		The optimal solution, $\bar{\bP}^*=(\bar{P}_k^*)_{\forall k\in\mathcal{K}}$, to Problem~$\Master$ is provided as follows. For each Subchannel~$k\in\mathcal{K}$,
		\begin{equation} \label{Pkstar in thm}
			\bar{P}_k^* =
			\begin{dcases}
				\left[ \mu^* \tilde{w}_{\psi_k} B_k - {\eta_{k,\psi_k}} \right]_{0}^{\Pmaxk}, & \begin{multlined}[t]\textnormal{if } \tilde{w}_{\varphi_k}/\tilde{w}_{\psi_k} < 1 \\\textnormal{ and } \mu^* > C_k^5, \end{multlined}\\
				\left[ \mu^* \tilde{w}_{\varphi_k} B_k - {\eta_{k,\varphi_k}} \right]_{0}^{\Pmaxk}, & \textnormal{otherwise,}
			\end{dcases}
		\end{equation}
		where
		\begin{equation}
			C_k^5 = \frac{\eta_{k,\varphi_k}-\eta_{k,\psi_k}}{B_k(\tilde{w}_{\varphi_k}-\tilde{w}_{\psi_k})},
		\end{equation}
		and $\mu^*$ is chosen to satisfy $\sum_{k\in\mathcal{K}} \bar{P}_k^* = \Pmax$.
	\end{theorem}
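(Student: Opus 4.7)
The plan is to invoke the KKT framework. By Proposition~\ref{thm:convexity_of_phi}, each $\phi_k^*$ is concave and continuously differentiable, so the objective $\phi^*(\bar{\bP}) = \sum_{k\in\mathcal{K}} \phi_k^*(\bar{P}_k)$ is concave, the constraint set is a polyhedron, and Slater's condition is satisfied (e.g., $\bar{P}_k = \min(\Pmax/(2K), \Pmaxk/2)$ is strictly feasible because $\sum_k \Pmaxk \ge \Pmax$). Hence Problem~$\Master$ is a convex program with zero duality gap, and the KKT conditions are both necessary and sufficient. I would introduce multiplier $\mu \ge 0$ for the total power constraint and multipliers $\lambda_k, \nu_k \ge 0$ for the box constraints $\bar{P}_k \ge 0$ and $\bar{P}_k \le \Pmaxk$, respectively, and write stationarity as $\phi_k^{*\prime}(\bar{P}_k^*) = \mu - \lambda_k + \nu_k$ together with complementary slackness.

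Next I would differentiate the piecewise expression \eqref{eq:phik3} to obtain
\begin{equation*}
\phi_k^{*\prime}(\bar{P}_k) = \begin{dcases} \frac{\tilde{w}_{\psi_k} B_k}{(\bar{P}_k + \eta_{k,\psi_k})\ln 2}, & \text{if } \tilde{w}_{\varphi_k}/\tilde{w}_{\psi_k} < 1 \text{ and } \bar{P}_k \ge C_k^4,\\ \frac{\tilde{w}_{\varphi_k} B_k}{(\bar{P}_k + \eta_{k,\varphi_k})\ln 2}, & \text{otherwise,}\end{dcases}
\end{equation*}
and verify continuity at the breakpoint $\bar{P}_k = C_k^4$ (consistent with Proposition~\ref{thm:convexity_of_phi}). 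For each subchannel, complementary slackness splits into three subcases: if $\bar{P}_k^* = 0$ then $\phi_k^{*\prime}(0) \le \mu$; if $\bar{P}_k^* = \Pmaxk$ then $\phi_k^{*\prime}(\Pmaxk) \ge \mu$; otherwise $\phi_k^{*\prime}(\bar{P}_k^*) = \mu$ and one inverts the derivative explicitly. Inversion in the interior produces the water-filling-style expression $\bar{P}_k^* = (\tilde{w}_{\psi_k} B_k)/(\mu \ln 2) - \eta_{k,\psi_k}$ in the $\psi_k$ regime and analogously in the $\varphi_k$ regime; after reparameterizing $\mu^* \triangleq 1/(\mu \ln 2)$ (absorbing the $\ln 2$ into the multiplier), this matches the two branches of \eqref{Pkstar in thm}. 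The projection onto $[0, \Pmaxk]$ then handles the boundary cases uniformly.

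The one step that requires care, and which I expect to be the main obstacle, is translating the regime selector in \eqref{eq:phik3}, stated in terms of $\bar{P}_k^* \ge C_k^4$, into the equivalent condition $\mu^* > C_k^5$ appearing in the theorem. My plan is to substitute the candidate $\psi_k$-branch formula into the inequality $\bar{P}_k^* \ge C_k^4$, clear denominators using the fact that $\psi_k < \varphi_k$ implies $\eta_{k,\psi_k} > \eta_{k,\varphi_k}$ so that $\eta_{k,\varphi_k}-\eta_{k,\psi_k}$ and $\tilde{w}_{\varphi_k}-\tilde{w}_{\psi_k}$ have the same sign (both negative when $\tilde{w}_{\varphi_k}/\tilde{w}_{\psi_k} < 1$), and algebraically simplify. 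The result should be exactly $\mu^* \ge (\eta_{k,\varphi_k}-\eta_{k,\psi_k})/[B_k(\tilde{w}_{\varphi_k}-\tilde{w}_{\psi_k})] = C_k^5$, with the strict inequality retained by convention. I would also check that when $\tilde{w}_{\varphi_k}/\tilde{w}_{\psi_k} \ge 1$, the $\psi_k$ regime is empty and one always falls into the second branch, which the theorem's ``otherwise'' correctly captures.

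Finally, since each $\bar{P}_k^*(\mu^*)$ is nondecreasing and continuous in $\mu^*$ (monotone water-filling), the equation $\sum_k \bar{P}_k^*(\mu^*) = \Pmax$ admits a unique root, determining $\mu^*$. This closes the argument and justifies the closed-form characterization.
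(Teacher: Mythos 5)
Your proposal is correct and follows essentially the same route as the paper's proof: write the KKT conditions for the convex master problem (justified by Proposition~\ref{thm:convexity_of_phi}), invert the piecewise derivative to get a water-filling form, reparameterize the multiplier as $\mu^* = 1/(\nu^*\ln 2)$, and convert the regime condition $\bar{P}_k \ge C_k^4$ into $\mu^* \ge C_k^5$. The only cosmetic differences are that the paper packages the two branches with indicator variables and resolves the regime translation by noting that both linear branches $f_{k,\psi_k}$ and $f_{k,\varphi_k}$ meet at the point $(C_k^5, C_k^4)$ and are increasing, whereas you substitute the branch formula and simplify algebraically — both yield the same threshold.
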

\else
	\begin{theorem} \label{thm:Pkstar}
		The optimal solution, $\bar{\bP}^*=(\bar{P}_k^*)_{\forall k\in\mathcal{K}}$, to Problem~$\Master$ is provided as follows. For each Subchannel~$k\in\mathcal{K}$,
		\begin{equation} \label{Pkstar in thm}
			\bar{P}_k^* =
			\begin{dcases}
				\left[ \mu^* \tilde{w}_{\psi_k} B_k - {\eta_{k,\psi_k}} \right]_{0}^{\Pmaxk}, & \textnormal{if } \tilde{w}_{\varphi_k}/\tilde{w}_{\psi_k} < 1 \textnormal{ and } \mu^* > C_k^5,\\
				\left[ \mu^* \tilde{w}_{\varphi_k} B_k - {\eta_{k,\varphi_k}} \right]_{0}^{\Pmaxk}, & \textnormal{otherwise,}
			\end{dcases}
		\end{equation}
		where
		\begin{equation}
			C_k^5 = \frac{\eta_{k,\varphi_k}-\eta_{k,\psi_k}}{B_k(\tilde{w}_{\varphi_k}-\tilde{w}_{\psi_k})},
		\end{equation}
		and $\mu^*$ is chosen to satisfy $\sum_{k\in\mathcal{K}} \bar{P}_k^* = \Pmax$.
	\end{theorem}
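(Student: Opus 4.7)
The plan is to invoke the Karush--Kuhn--Tucker (KKT) conditions, which are necessary and sufficient here because Proposition~\ref{thm:convexity_of_phi} guarantees that the objective $\phi^{*}(\bar{\bP})=\sum_{k\in\mathcal{K}}\phi_k^{*}(\bar{P}_k)$ of Problem~$\Master$ is a continuously differentiable concave function and the constraints are all linear. I would attach a multiplier $\lambda\ge 0$ to the total-power constraint $\sum_{k}\bar{P}_k\le\Pmax$ and multipliers $\nu_k,\xi_k\ge 0$ to the box constraints $\bar{P}_k\ge 0$ and $\bar{P}_k\le\Pmaxk$. The stationarity condition then reads $\phi_k^{*\prime}(\bar{P}_k)=\lambda-\nu_k+\xi_k$. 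Because every $\phi_k^{*\prime}$ is strictly positive on $[0,\infty)$, the sum-power constraint must be binding, so $\lambda>0$, and I can reparametrize $\mu^{*}=1/(\lambda\ln 2)$ to match the water-filling form appearing in \eqref{Pkstar in thm}.

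Next, I would differentiate each branch of $\phi_k^{*}$ in~\eqref{eq:phik3}. The ``otherwise'' branch yields $\phi_k^{*\prime}(\bar{P}_k)=\tilde{w}_{\varphi_k}B_k/((\eta_{k,\varphi_k}+\bar{P}_k)\ln 2)$ and, combined with the standard water-filling argument arising from complementary slackness on the two box multipliers, gives the candidate $[\mu^{*}\tilde{w}_{\varphi_k}B_k-\eta_{k,\varphi_k}]_{0}^{\Pmaxk}$. The other branch (active only when $\tilde{w}_{\varphi_k}/\tilde{w}_{\psi_k}<1$ and $\bar{P}_k\ge C_k^4$) yields $\phi_k^{*\prime}(\bar{P}_k)=\tilde{w}_{\psi_k}B_k/((\eta_{k,\psi_k}+\bar{P}_k)\ln 2)$ and similarly gives $[\mu^{*}\tilde{w}_{\psi_k}B_k-\eta_{k,\psi_k}]_{0}^{\Pmaxk}$. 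Because $\phi_k^{*\prime}$ is continuous at the kink $\bar{P}_k=C_k^4$ by Proposition~\ref{thm:convexity_of_phi}, for any fixed $\mu^{*}$ exactly one of the two expressions is self-consistent with its own branch condition.

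The key step, and the one that requires the most care, is to rewrite the branch condition $\bar{P}_k\ge C_k^4$ as a condition on $\mu^{*}$. I would substitute the unconstrained water-filling formula $\bar{P}_k=\mu^{*}\tilde{w}_{\psi_k}B_k-\eta_{k,\psi_k}$ into $\bar{P}_k\ge C_k^4$ and use the definition~\eqref{eq:C4} of $C_k^4$; after collecting terms, a direct cancellation produces the identity $(C_k^4+\eta_{k,\psi_k})/(\tilde{w}_{\psi_k}B_k)=(\eta_{k,\varphi_k}-\eta_{k,\psi_k})/(B_k(\tilde{w}_{\varphi_k}-\tilde{w}_{\psi_k}))=C_k^5$. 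Hence the first branch is active precisely when $\mu^{*}>C_k^5$, matching the case split in~\eqref{Pkstar in thm}. Finally, the multiplier $\mu^{*}$ is fixed by the primal equality $\sum_{k}\bar{P}_k^{*}=\Pmax$; since each $\bar{P}_k^{*}(\mu^{*})$ is continuous and nondecreasing in $\mu^{*}$ with range $[0,\Pmaxk]$, such a $\mu^{*}$ exists and is unique (modulo the trivial boundary case $\sum_{k}\Pmaxk=\Pmax$). I expect the main obstacle to be not conceptual but purely algebraic: verifying the threshold identity cleanly and tracking signs when $\tilde{w}_{\varphi_k}-\tilde{w}_{\psi_k}$ and $\eta_{k,\varphi_k}-\eta_{k,\psi_k}$ are both negative; once that is done, the remainder is a textbook water-filling argument.
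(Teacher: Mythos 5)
Your proposal is correct and follows essentially the same route as the paper's proof: write the KKT conditions for the concave Problem~\Master{} (justified via Proposition~\ref{thm:convexity_of_phi}), note the total-power constraint is binding, derive the clipped water-filling form on each branch of $\phi_k^*$ in~\eqref{eq:phik3}, and convert the branch condition $\bar{P}_k \ge C_k^4$ into $\mu^* > C_k^5$ via the identity $(C_k^4+\eta_{k,\psi_k})/(\tilde{w}_{\psi_k}B_k)=C_k^5$, which indeed checks out (the paper reaches the same threshold by observing that $\mu^*\tilde{w}_{\psi_k}B_k-\eta_{k,\psi_k}$ and $\mu^*\tilde{w}_{\varphi_k}B_k-\eta_{k,\varphi_k}$ both equal $C_k^4$ at $\mu^*=C_k^5$ and are increasing). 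The only cosmetic difference is that the paper encodes the two branches with indicator variables $e_k^1,e_k^2$ before writing the stationarity condition, whereas you differentiate each branch separately; the substance is identical.
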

\fi

\begin{proof}
\if\mydocumentclass2
Due to the page limit, we leave the proof in Appendix~F of the arXiv version~\cite{fullVer}.
\else
See Appendix~\ref{prove:thm:Pkstar}.
\fi
\end{proof}

Note that $\bar{P}_k^*$ is continuous, piecewise-linear, and increasing with respect to $\mu^*$.
Hence, there exists a unique solution, $\mu^*$, which can be easily found by any simple root-finding method such as a bisection method.
Once $\mu^*$ is determined, the optimal solution, $\bar{\bP}^*$, to Problem~$\Master$ can be obtained using~\eqref{Pkstar in thm}.
The pseudocode based on the bisection method is provided in Algorithm~\ref{Alg:Solve_Master}.
Finding the root of a single variable~$\mu^*$, as needed in our algorithm, in general, is much faster than standard convex programming tools required to solve Problem~$\Master$ consisting of $K$ variables and $2K+1$ constraints.

\begin{algorithm}[!t]
\DontPrintSemicolon
Let $f(\mu) = \sum_{k\in\mathcal{K}} \bar{P}_k^*(\mu) - \Pmax$, where $\bar{P}_k^*(\mu)$ is obtained by \eqref{Pkstar in thm} with $\mu^*=\mu$.\\
Initialize $\mu_L$ to zero, and $\mu_U$ to a sufficiently large value.\\
\Repeat{convergence}{
	$\mu_{\textnormal{new}} \gets (\mu_L+\mu_U)/2$.\\
	\lIf{$f(\mu_L) \cdot f(\mu_{\textnormal{new}}) < 0$}{
		$\mu_U \gets \mu_{\textnormal{new}}$.
	}
	\lIf{$f(\mu_U) \cdot f(\mu_{\textnormal{new}}) < 0$}{
		$\mu_L \gets \mu_{\textnormal{new}}$.
	}
}
Obtain $\bar{\bP}^*$ by plugging $\mu^* = (\mu_L+\mu_U)/2$ into \eqref{Pkstar in thm}.\\
\Return{$\bar{\bP}^*$.}
\caption{Solution to Problem~$\Master$}
\label{Alg:Solve_Master}
\end{algorithm}

We now analyze the computational complexity of Algorithm~\ref{Alg:Solve_Master}.
In each iteration, the computation of $f(\mu_{\textnormal{new}})$ dominates the others, and its computational complexity is $\mathcal{O}(K)$ due to the sum over $\mathcal{K}$.
Thus, letting $\kappa$ be the number of iteration until convergence, the overall computational complexity of Algorithm~\ref{Alg:Solve_Master} can be given by $\mathcal{O}(\kappa K)$.
Note that the exact value of $\kappa$ cannot be derived rigorously, but is usually considered to have an order of $\log K$~\cite{he2013water, khakurel2014generalized}.

Consequently, the total computational complexity of Algorithm~\ref{Alg:SG-Master}, running Algorithm~\ref{Alg:Solve_Slave} $K$ times and Algorithm~\ref{Alg:Solve_Master} once in each iteration, can be expressed as $\mathcal{O}(\xi K(N^2+\kappa))$, where $\xi$ is the number of iterations until convergence.

\begin{remark}
We can achieve the proportional fair scheduling by solving Problem~$\Pt$ in every time slot using the Joint-SAPA-LCC algorithm, where the effective weight of User~$i$ in time slot~$t$ is given by
\begin{equation}\label{eq:PFS_weight}
	\tilde{w}_i^t = \frac{1}{R_{\textnormal{EMA},i}^t},
\end{equation}
where $R_{\textnormal{EMA},i}^t$ is the exponential moving average data rate of User~$i$ in time slot $t$, and it can be recursively updated by
\begin{equation}\label{eq:PFS_avg_update}
	R_{\textnormal{EMA},i}^{t+1} = \left(1-\frac{1}{\tau}\right) R_{\textnormal{EMA},i}^t + \frac{1}{\tau} R_i^t,
\end{equation}
where $\tau$ is the time-averaging window coefficient.
\end{remark}

This remark shows that our Joint-SAPA-LCC algorithm can be easily extended to the proportional fair scheduling algorithm.
However, it is worth noting that the purpose of the proportional fair scheduling is to reduce the performance differences among all users, so it cannot guarantee the QoS constraints in~\eqref{Const:Ri}.
Hence, in the next section, we develop a scheduling algorithm that guarantees given QoS constraints by utilizing the Joint-SAPA-LCC algorithm.

\section{Opportunistic MC-NOMA Scheduling}
\label{sec:scheduling}
In this section, we finally develop an opportunistic MC-NOMA scheduling algorithm that works in an online manner by decomposing Problem~$\Pone$ into a series of Joint-SAPA problems over time slots, i.e., Problem~$\Pt$ for each time slot.
To this end, we first take advantage of the well-known property that if the fading process is stationary and ergodic, the long-term time average converges almost surely to the expectation for almost all realizations of the fading process~\cite{tse2005fundamentals, walters2000introduction}.
Thereby, by denoting a random vector representing the channel vector in a generic time slot by~$\bh$ and replacing the superscript~$t$ of the decision variables with $\bh$, we can reformulate Problem~$\Pone$ equivalently as
\begin{IEEEeqnarray}{c'c'l}
	\IEEEnonumber*
	\Ptwo & \maximize_{\bp^{\bh},\,\bq^{\bh},\,\forall \bh} & \E_{\bh} \left[ \sum_{i\in\calN} w_i R_i(\bp^\bh, \bq^\bh; \bh) \right] \label{Prob1:Obj}\\
	&\subjto
	&\E_{\bh} \left[ R_i(\bp^{\bh}, \bq^{\bh} ; \bh) \right] \ge \bar{R}_{\textnormal{min},i},~\forall i\in\calN, \label{Prob1:minR}\\
	&&\bp^\bh \in \mathcal{P}, ~ \bq^\bh \in \mathcal{Q}, ~ \forall \bh. \label{Prob1:constraint for P and Q}
\end{IEEEeqnarray}
At each time slot~$t$ where the channel vector is realized as $\bh^t$, subchannel assignment and power allocation can be done according to the solution for $\bp^\bh$ and $\bq^\bh$ obtained by solving Problem~$\Ptwo$ with $\bh=\bh^t$.

There is still a big challenge in solving Problem~$\Ptwo$.
Since no information on the underlying distributions of the fading process is provided, we have to solve the stochastic optimization problem without such information.
To resolve it, we leverage the Lagrangian duality and the stochastic optimization theory to develop the opportunistic MC-NOMA scheduling algorithm.
Its core mechanism is to take advantage of the time-varying channel conditions opportunistically to maximize the weighted average sum rate.
Also, the effective weights are systemically adjusted so that the QoS requirements (i.e., the individual minimum average data rate requirements) are fulfilled.
To develop the algorithm, by introducing a Lagrange multiplier, $\lambda_i$, for the minimum average data rate constraint of User~$i$, we first define a Lagrangian function, $L$, associated with Problem~$\Ptwo$ as
\if\mydocumentclass0
	\begin{align}\label{eq:lagrangian}
		L(\barbp, \barbq, \blambda)
		&=\begin{multlined}[t]
			\E_{\bh} \left[ \sum_{i\in\calN} w_i R_i(\bp^{\bh}, \bq^{\bh} ; \bh) \right] \\
			+ \sum_{i\in\calN} \lambda_i \left( \E_{\bh} \left[ R_i(\bp^{\bh}, \bq^{\bh} ; \bh) \right] - \bar{R}_{\textnormal{min},i} \right)
		\end{multlined} \nonumber\\
		&= \E_{\bh} \left[ \sum_{i\in\calN} (w_i+\lambda_i) R_i(\bp^{\bh}, \bq^{\bh} ; \bh) \right] - \sum_{i\in\calN} \lambda_i \bar{R}_{\textnormal{min},i},
	\end{align}
\else
	\begin{align}\label{eq:lagrangian}
		L(\barbp, \barbq, \blambda)
		&= \E_{\bh} \left[ \sum_{i\in\calN} w_i R_i(\bp^{\bh}, \bq^{\bh} ; \bh) \right] + \sum_{i\in\calN} \lambda_i \left(\E_{\bh}\left[ R_i(\bp^{\bh}, \bq^{\bh} ; \bh) \right] - \bar{R}_{\textnormal{min},i}\right) \nonumber\\
		&= \E_{\bh} \left[ \sum_{i\in\calN} (w_i+\lambda_i) R_i(\bp^{\bh}, \bq^{\bh} ; \bh) \right] - \sum_{i\in\calN} \lambda_i \bar{R}_{\textnormal{min},i},
	\end{align}
\fi
where $\barbp = (\bp^{\bh})_{\forall\bh}$, $\barbq = (\bq^{\bh})_{\forall\bh}$, and $\blambda=(\lambda_i)_{\forall i\in\mathcal{N}}$.
Then, the dual problem associated with Problem~$\Ptwo$ can be defined by
\begin{IEEEeqnarray}{c'c'l}\label{Prob:Dual}
	\IEEEnonumber*
	\Dual & \minimize_{\blambda} & F(\blambda) \\
	&\subjto & \lambda_i \ge 0, ~ \forall i\in\mathcal{N},
\end{IEEEeqnarray}
where
\begin{IEEEeqnarray}{rCl'l}
	\IEEEyesnumber\label{Prob:F}
	F(\blambda) &=& \maximize_{\bp^\bh\in\mathcal{P},\,\bq^\bh\in\mathcal{Q},\,\forall \bh} & L(\barbp, \barbq, \blambda).
\end{IEEEeqnarray}
Since Problem~$\Ptwo$ is nonconvex, even though its dual problem, Problem~$\Dual$, is optimally solved, there may be a duality gap.
However, the duality gap vanishes in our problem, resulting in no loss of optimality.



\begin{theorem}\label{thm:zero-duality-gap}
The strong duality (i.e., zero duality gap) holds between Problem~$\Ptwo$ and its dual problem, Problem~$\Dual$.
\end{theorem}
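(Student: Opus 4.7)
The plan is to establish zero duality gap by exploiting the \emph{convexifying effect} of expectation under a continuous (nonatomic) fading distribution, which is standard in the ergodic stochastic optimization literature. Although the per-realization objective $\sum_{i\in\calN} w_i R_i(\bp^\bh,\bq^\bh;\bh)$ is nonconcave in $(\bp^\bh, \bq^\bh)$ and the feasible set involves integer variables~$\bq^\bh$, the averaging over $\bh$ together with the fact that the policy $(\bp^\bh,\bq^\bh)$ can be chosen separately for each channel realization is sufficient to convexify the achievable region.

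First, I would introduce the achievable rate region
\begin{equation*}
	\calA = \bigl\{ (\E_\bh[R_i(\bp^\bh,\bq^\bh;\bh)])_{\forall i\in\calN} \suchthat \bp^\bh\in\calP,\,\bq^\bh\in\calQ,\,\forall\bh \bigr\}
\end{equation*}
and the perturbation function $v(\bv) = \sup\{\E_\bh[\sum_i w_i R_i] : \E_\bh[R_i]\ge\bar R_{\textnormal{min},i}+v_i,\,\bp^\bh\in\calP,\,\bq^\bh\in\calQ\}$. By standard convex-analytic duality, it suffices to show that $v$ is concave on its domain and that a Slater-type condition holds; this then yields $F(\blambda^*)$ equal to the primal optimum of $\Ptwo$.

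The key step is to prove that $\calA$ (and hence the epigraphic set in the extended space that includes the objective value) is convex. Given two feasible policies $(\barbp^{(1)},\barbq^{(1)})$ and $(\barbp^{(2)},\barbq^{(2)})$ achieving rate vectors $\br^{(1)}$ and $\br^{(2)}$, together with any $\alpha\in[0,1]$, I would use the fact that the fading distribution is continuous (so that its law is nonatomic) to invoke Lyapunov's convexity theorem, or equivalently, to construct a measurable set $\calH_\alpha$ of channel realizations with $\Pr(\bh\in\calH_\alpha)=\alpha$. Defining the mixed policy that applies $(\bp^{(1),\bh},\bq^{(1),\bh})$ on $\calH_\alpha$ and $(\bp^{(2),\bh},\bq^{(2),\bh})$ on its complement, the linearity of expectation produces exactly $\alpha\br^{(1)}+(1-\alpha)\br^{(2)}$, and the mixed policy remains feasible since $\calP$ and $\calQ$ are imposed per realization. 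The same time-sharing construction applied jointly to the objective and the constraints establishes concavity of $v$ at the origin, which is the standard sufficient condition for $\min_\blambda F(\blambda)$ to equal the primal optimum (see, e.g., the arguments of Yu--Lui and Ribeiro for ergodic wireless optimization).

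The main obstacle will be to handle two technical points cleanly: (i) measurability of the constructed mixed policy with respect to $\bh$, which follows from choosing $\calH_\alpha$ as a level set of the cumulative distribution function of a suitable statistic of $\bh$ (available because the channel law is nonatomic), and (ii) verifying a Slater-type strict-feasibility condition to rule out pathological boundary behavior of $v$. Under the standing assumption that $\bar R_{\textnormal{min},i}$ is chosen so that $\Pone$ admits a strictly feasible policy --- a necessary condition for the problem to be well-posed --- both points are resolved, and strong duality between $\Ptwo$ and $\Dual$ follows.
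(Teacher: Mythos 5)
Your proposal is essentially correct but follows a genuinely different route from the paper. The paper verifies the \emph{time-sharing condition} of Yu and Lui by mixing the two policies \emph{across time}: it applies the first policy on slots $t\le\lfloor\theta T\rfloor$ and the second on the remaining slots, and then uses stationarity and ergodicity of the fading process to show that the long-term time averages of both the per-user rates and the weighted-sum objective converge to the desired convex combinations, which is exactly conditions \eqref{eq:time-sharing:condition1} and \eqref{eq:time-sharing:condition2}. You instead mix \emph{across the channel-state space}, invoking nonatomicity of the continuous fading law and Lyapunov's convexity theorem to carve out a set $\calH_\alpha$ on which to apply the first policy; this convexifies the achievable region and the perturbation function directly in the expectation formulation of Problem~\Ptwo. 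Both mechanisms are legitimate and both ultimately rest on the same randomization principle; the paper's version is more elementary and needs only stationarity and ergodicity (no nonatomicity), while yours dispenses with the time-average interpretation entirely. Two caveats on your write-up: first, your suggested shortcut of taking $\calH_\alpha$ to be a level set of the CDF of ``a suitable statistic of $\bh$'' only guarantees $\Pr(\calH_\alpha)=\alpha$, not the proportional split $\E[R_i\mathbf{1}_{\calH_\alpha}]=\alpha\,\E[R_i]$ of every relevant expectation simultaneously --- for that you genuinely need the full Lyapunov argument applied to the vector measure collecting all the rate integrands of both policies, so you should drop the ``or equivalently.'' Second, you impose a Slater-type condition that the paper does not require; under the Yu--Lui framework the time-sharing property alone (concavity of the perturbation function) suffices for the zero duality gap, so your extra hypothesis is a mild weakening of the stated theorem rather than an error.
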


\begin{proof}
\if\mydocumentclass2
Due to the page limit, we leave the proof in Appendix~G of the arXiv version~\cite{fullVer}.
\else
See Appendix~\ref{prove:thm:zero-duality-gap}.
\fi
\end{proof}

We thus develop an algorithm that solves Problem~$\Dual$.
To this end, we first focus on obtaining its objective function, $F(\blambda)$.
The first (expectation) term in~\eqref{eq:lagrangian} is separable for each channel realization, and the second term is independent of the decision variables, $\barbp$ and $\barbq$.
Hence, for a given Lagrange multiplier vector, $\blambda$, the maximization in~\eqref{Prob:F} can be solved by separately solving the subproblem for each channel realization, defined by
\begin{IEEEeqnarray}{c'c'l}\label{Prob:Dual_s}
	\IEEEnonumber*
	\Dualh & \maximize_{\bp^\bh\in\mathcal{P},\,\bq^\bh\in\mathcal{Q}} & \sum_{i\in\calN} (w_i+\lambda_i) R_i(\bp^\bh, \bq^\bh ; \bh).
\end{IEEEeqnarray}
Since the expectation has disappeared in Problem~$\Dualh$, it can be solved without knowledge of the underlying distributions of the fading process once the channel realization is provided.
Thus, for any given $\blambda$ and $\bh$, Problem~$\Dualh$ turns into a deterministic optimization problem for Joint-SAPA that aims to maximize the instantaneous weighted sum rate with weight $w_i+\lambda_i$ for User~$i$.
This problem can be solved using the Joint-SAPA-LCC algorithm developed in the previous section with letting $\tilde{w}_i = w_i + \lambda_i$.

We now focus back on solving Problem~$\Dual$.
Even though Problem~$\Dualh$ can be solved for given $\bh$ and $\blambda$, the underlying distributions of the fading process are still fundamentally needed to solve Problem~$\Dual$.
Nevertheless, thanks to the fact that Problem~$\Dual$ is a form of convex stochastic optimization problems, we can solve it without resorting on the distributions using the stochastic subgradient method~\cite{shapiro2014lectures}, where the Lagrange multiplier vector, $\blambda$, is iteratively updated by
\begin{equation}\label{eq:update}
	\blambda^{t+1} = \left[\blambda^{t} - \zeta^{t} \bv^{t}\right]^+,
\end{equation}
where $\blambda^{t}$ and $\zeta^{t}$ are the Lagrange multiplier vector and the positive step size in time slot~$t$, respectively, and $\bv^t=(v_i^t)_{\forall i\in\mathcal{N}}$ is the stochastic subgradient of $F(\blambda)$ with respect to $\blambda$ at $\blambda=\blambda^t$.
By Danskin's min-max theorem~\cite{bertsekas1999nonlinear}, the stochastic subgradient, $\bv^t$, can be obtained by
\begin{equation}\label{eq:v}
	v_i^t = R_i^t - \bar{R}_{\textnormal{min},i}, ~ \forall i \in \calN,
\end{equation}
where $R_i^t$ is the instantaneous data rate of User~$i$ in time slot~$t$ defined as in~\eqref{eq:Rate}, which is achieved when the subchannel assignment and power allocation are performed according to the solution to Problem~$\Dualh$ with $\bh=\bh^t$ and $\blambda=\blambda^t$.
With the update process of \eqref{eq:update}, the Lagrange multiplier vector converges almost surely to the optimal solution, $\blambda^*$, of Problem~$\Dual$ if the step size~$\zeta^t$ is square-summable, but not summable~\cite{boyd2008stochastic}, i.e.,
\begin{equation}\label{eq:condition}
\zeta^{t} \ge 0, ~ \sum_{t=1}^{\infty} \zeta^{t} = \infty, ~ \textnormal{and} ~ \sum_{t=1}^{\infty} (\zeta^{t})^2 < \infty.
\end{equation}
The proposed algorithm for the opportunistic MC-NOMA scheduling is outlined in Algorithm~\ref{Alg:Scheduling}.

\begin{algorithm}[!t]
\DontPrintSemicolon
Initialize: $t=1$ and $\blambda^{t}=\mathbf{0}$.\\
\For{each time slot~$t$}{
	Solve Problem~$\Dualh$ with $\bh=\bh^t$ and $\blambda=\blambda^t$ using the Joint-SAPA-LCC algorithm (Algorithm~\ref{Alg:SG-Master}).\\
	Transmit the signal generated by the obtained solution.\\
	Calculate $\blambda^{t+1}$ according to~\eqref{eq:update} and~\eqref{eq:v}.\\
	$t \gets t+1$.\\
}
\caption{Opportunistic MC-NOMA Scheduling}
\label{Alg:Scheduling}
\end{algorithm}

It is worth noting that, due to the stationary ergodic fading process, Problem~$\Pone$ and Problem~$\Ptwo$ are equivalent with probability one, and according to Theorem~\ref{thm:zero-duality-gap}, there is no duality gap between Problem~$\Ptwo$ and Problem~$\Dual$.
Therefore, there is no loss of optimality in solving Problem~$\Pone$ by our algorithms as long as Problem~$\Dualh$, i.e., Problem~$\Pt$, is optimally solved.
Accordingly, we can expect that our opportunistic MC-NOMA scheduling provides near-optimal performance by showing via simulation that the Joint-SAPA-LCC algorithm provides near-optimal performance.

%


\section{Simulation Results}
\label{sec:sim}
Now, we present simulation results to evaluate the performance of our proposed algorithms.
We first investigate the Joint-SAPA-LCC algorithm in Section~\ref{sec:sim1} and then the opportunistic MC-NOMA scheduling algorithm in Section~\ref{sec:sim2}.
Throughout the simulations, we consider a circular cell with a radius of \SI{300}{\meter}, in which one BS located at the center of the cell serves $N$ users over $K$ subchannels.
The system bandwidth, $B_{\textnormal{tot}}$, is set to \SI{5}{\MHz}, and the bandwidth of each subchannel is set equally to $B_{\textnormal{tot}}/K$.
Unless otherwise specified, the numbers of users and subchannels are set to $10$, i.e., $N=10$ and $K=10$.
The total transmission power budget, $\Pmax$, of the BS is set to \SI{43}{\dBm}, and the transmission power budget for each subchannel is set to $\gamma\Pmax/K$ with $\gamma=1.15$.
The large-scale path loss is modeled by the HATA model for urban environments~\cite{hata1980empirical, access13radio}.
Specifically, the path loss in dB over distance $d_{km}$ in kilometers is set to $69.55+26.16\log_{10}(f_c) - a(h_m, h_b) + b(h_b) \log_{10}(d_{km})$, where $a(h_m, h_b) = 13.82\log_{10}(h_b) + 3.2 [ \log_{10}(11.75 h_m)]^2 - 4.97$, $b(h_b) = 44.9-6.55\log_{10}(h_b)$, $f_c$ is the carrier frequency, and $h_m$ and $h_b$ are effective antenna heights of the BS and users, respectively.
The parameters are set as follows: $f_c=900$\,\si{\MHz}, $h_a=30$\,\si{\meter}, and $h_b=2$\,\si{\meter}.
Also, we set the antenna gains of the BS and the users to \SI{15}{\dBi} and \SI{0}{\dBi}, respectively.
Then, we consider the shadow fading with a standard deviation of \SI{8}{\dB} and the Rayleigh small-scale fading with unit variance.
The noise power spectral density, $N_0$, is set to \SI[per-mode=symbol]{-174}{\dBm\per\Hz}.
Thus, the noise power of User~$i$ on Subchannel~$k$ is given by $\sigma_{k,i}^2=B_{\textnormal{tot}}N_0/K$.
In the following simulation results, the data rates are considered in units of \si[per-mode=symbol]{\bps\per\Hz}.
In Algorithm~\ref{Alg:Scheduling}, the step size in~\eqref{eq:update} is set to $\zeta^t = 1/t$, which satisfies the conditions in~\eqref{eq:condition} so that the convergence of the algorithm is guaranteed.

\subsection{Joint Subchannel Assignment and Power Allocation}
\label{sec:sim1}
In this subsection, we provide the performance of our Joint-SAPA-LCC algorithm that aims to maximize the weighted sum rate by solving Problem~$\Pt$.
For comparison, we additionally provide the performance of three other algorithms.
The first one is the Joint-SAPA-FTPC algorithm~\cite{saito2013system}, where Joint-SAPA is performed based on the FTPC and ES algorithms.
The second one is the Joint-SAPA-DCP algorithm~\cite{parida2014power}, where Joint-SAPA is performed based on the DCP approach.
The last one is the Joint-SAPA-DP algorithm~\cite{salaun2020joint}, where Joint-SAPA is performed based on the DP approach and the PGD method.
In the following simulations, we assume that $N$ users are uniformly distributed within the circular cell with at least \SI{30}{\meter} away from the BS, and their weights are randomly set between $0$ and $1$.
Also, taking into account the high computational complexity of the above baseline algorithms, we assume that each subchannel can be assigned to up to $5$ users, i.e., $M=5$.
All the simulation results are averaged over $3000$ independent trials.
For each trial, locations, weights, and channel gains of all users are independently generated.
Note that the Joint-SAPA algorithms (including ours) do not deal with any QoS constraints.
The purpose of this subsection is to verify that our Joint-SAPA-LCC algorithm provides good enough performance despite requiring very low computational complexity compared to the baseline Joint-SAPA algorithms.

We first compare the computational complexity of the Joint-SAPA algorithms.
To this end, we define the relative computational cost of an algorithm as its execution time normalized to that of our Joint-SAPA-LCC algorithm, and then show the corresponding results for different numbers of users and subchannels in Figs.~\ref{fig:compare:complex_N} and~\ref{fig:compare:complex_K}, respectively.
The execution times were measured by MATLAB R2020a software on a computer with Intel Core i7-9700K CPU (3.60 GHz) and 32.0 GB RAM.
From the figures, we can see that our Joint-SAPA-LCC algorithm is much faster than the other algorithms.
For example, when $N=10$ and $K=10$, the computational cost of Joint-SAPA-LCC is about $100$, $250$, and $550$ times lower than those of Joint-SAPA-FTPC, Joint-SAPA-DCP, and Joint-SAPA-DP, respectively.
The main reason why our Joint-SAPA-LCC algorithm is fast is that the \textit{candidate} users who might be allocated positive power are determined simply based on the closed-form power allocation formulas.
Furthermore, although not proven theoretically, we were able to observe experimentally that Algorithm~\ref{Alg:SG-Master} converges in only a single iteration in most cases.
That is, in most cases, the Joint-SAPA-LCC algorithm is performed in a 3-step procedure: i) to obtain $\phi^*$ by selecting \textit{candidate} users based on equal subchannel power allocation, ii) to refine the subchannel power allocation~$\bar{\bP}$ based on $\phi^*$ obtained in the first step, and iii) to obtain the final Joint-SAPA solution based on $\bar{\bP}$ obtained in the second step.
This computational complexity comparison confirms that our Joint-SAPA-LCC algorithm is very effective and well suited to be implemented in practical systems where Joint-SAPA should be performed in every very short time slot.


\begin{figure}[!t]
\centering
\includegraphics[width=8cm]{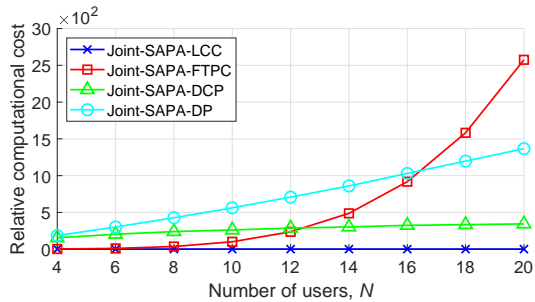}
\caption{The relative computational cost versus the number of users.}\label{fig:compare:complex_N}
\end{figure}

\begin{figure}[!t]
\centering
\includegraphics[width=8cm]{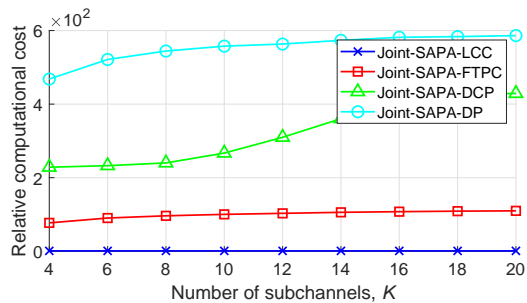}
\caption{The relative computational cost versus the number of subchannels.}\label{fig:compare:complex_K}
\end{figure}

In Figs.~\ref{fig:compare:wsr_N} and \ref{fig:compare:wsr_K}, we compare the weighted sum rate performance with varying the numbers of users and subchannels, respectively.
First, as shown in Fig.~\ref{fig:compare:wsr_N}, as the number of users increases, the weighted sum rates increase in all the Joint-SAPA algorithms thanks to the increase in the multi-user diversity gain.
Also, we can see that our Joint-SAPA-LCC algorithm, despite its very low computational complexity, has only a little performance drop compared to the Joint-SAPA-DP and Joint-SAPA-DCP algorithms, and provides higher performance compared to the Joint-SAPA-FTPC algorithm.
On the other hand, Fig.~\ref{fig:compare:wsr_K} shows that the weighted sum rates for all the Joint-SAPA algorithms tend to remain constant regardless of the number of subchannels.
These results indicate that, as also seen in~\cite{jang2003transmit, kim2010joint}, the effects of the number of subchannels on the weighted sum rate performance are negligible in the system where the transcievers for different subchannels operate independently.
Meanwhile, the order between the Joint-SAPA algorithms in terms of the weighted sum rate performance remains the same as in Fig.~\ref{fig:compare:wsr_N}.
For example, when $N=10$ and $K=10$, the weighted sum rate of our Joint-SAPA-LCC algorithm is only \SI{0.86}{\percent} and \SI{0.12}{\percent} lower than those of the Joint-SAPA-DP and Joint-SAPA-DCP algorithms, respectively, but \SI{4}{\percent} higher than that of the Joint-SAPA-FTPC algorithm.


\begin{figure}[!t]
\centering
\includegraphics[width=8cm]{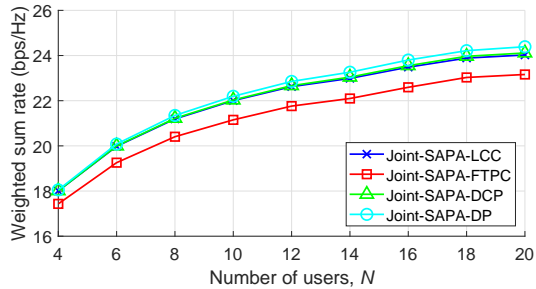}
\caption{The weighted sum rate versus the number of users.}
\label{fig:compare:wsr_N}
\end{figure}

\begin{figure}[!t]
\centering
\includegraphics[width=8cm]{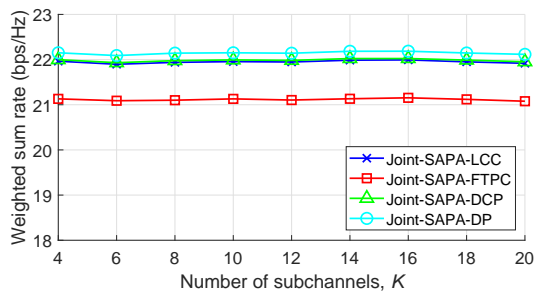}
\caption{The weighted sum rate versus the number of subchannels.}
\label{fig:compare:wsr_K}
\end{figure}

Next, since the Joint-SAPA-DP algorithm provides the highest weighted sum rate performance, we compare our Joint-SAPA-LCC algorithm with it in more depth for the case where $N=10$ and $K=10$ in Fig.~\ref{fig:compare:Salaun}.
Fig.~\ref{fig:compare:Hist_number_of_selected_users} shows the frequency histogram of the number of assigned users per subchannel.
As can be seen in the figure, at most five users are assigned per subchannel in the Joint-SAPA-DP algorithm, unlike our algorithm with at most two users per subchannel.
As a result, the user assignment patterns of the two algorithms are different, and the Joint-SAPA-DP algorithm has a larger solution space for user assignment patterns compared to our algorithm.
Accordingly, the Joint-SAPA-DP algorithm has the possibility to provide higher performance than ours.
On the other hand, Fig.~\ref{fig:compare:wdr} shows the achieved weighted data rates of individual assigned users in a subchannel.
In this figure, the value of the bar for Index~$i$ represents the average weighted data rate of the user that achieves the $i$th highest data rate among the assigned users in a subchannel.
We can see that more than \SI{95}{\percent} of the weighted sum rate performance is assigned to the top two indices in the Joint-SAPA-DP algorithm, which implies that the remaining bottom three users have a very little impact on the weighted sum rate performance.
Since most of the weighted sum rates correspond to the first two indices, whether in our Joint-SAPA-LCC algorithm or in the Joint-SAPA-DP algorithm, the weighted sum rate performance of our Joint-SAPA-LCC algorithm is very close to that of the Joint-SAPA-DP algorithm despite the different user assignment patterns.
In summary, the simulation results thus far confirm that not only does our Joint-SAPA-LCC algorithm provide good performance close to that of the Joint-SAPA-DP algorithm, but it also has much lower computational complexity compared to the other Joint-SAPA algorithms, which is critical for implementation in practical systems.

\begin{figure}[!t]
\centering
\subfloat[The frequency histogram of the number of assigned users per subchannel.]{\includegraphics[width=7cm]{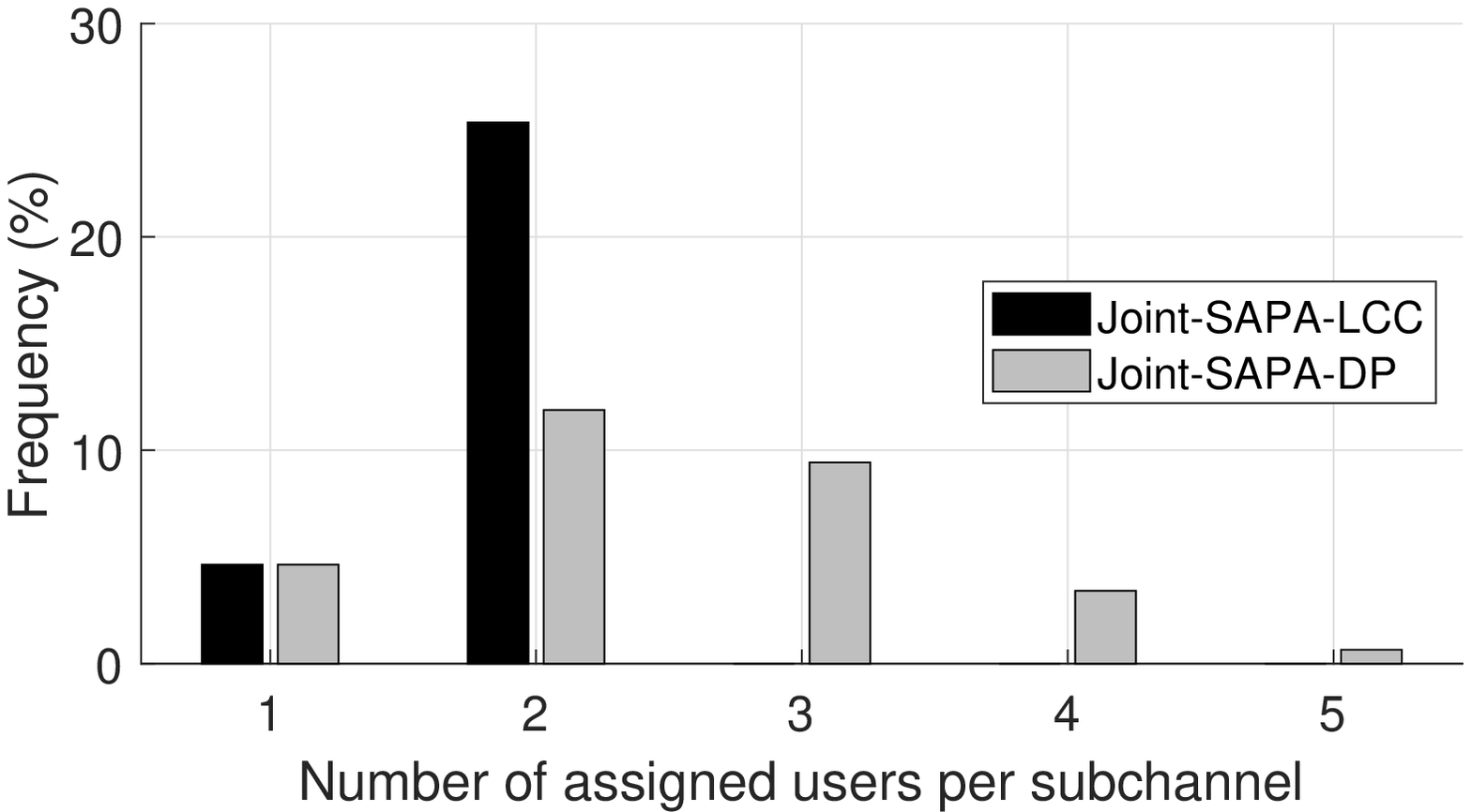}\label{fig:compare:Hist_number_of_selected_users}}
\hfil
\subfloat[The weighted data rates in descending order.]{\includegraphics[width=7cm]{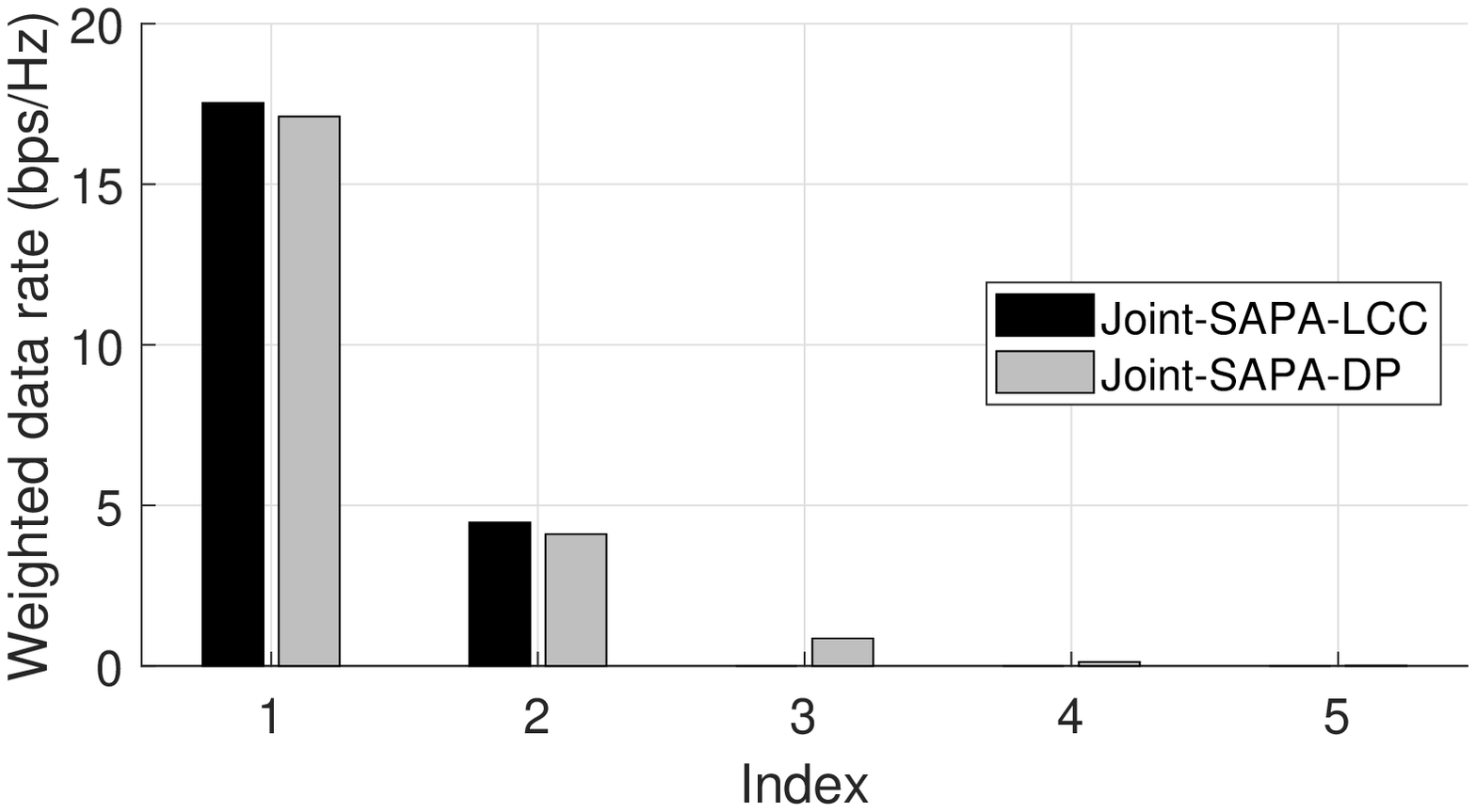}\label{fig:compare:wdr}}
\caption{The comparison between the Joint-SAPA-LCC and Joint-SAPA-DP algorithms for the case where $N=10$ and $K=10$.\label{fig:compare:Salaun}}
\end{figure}

Lastly, we further investigate the effects of the imperfection of instantaneous channel state information (CSI) on the weighted sum rate performance since perfect knowledge of CSI at the BS is practically impossible.
To this end, as in~\cite{wei2017optimal, zhu2021joint}, we first model an estimated channel gain corresponding to $h_{k,i}$ as $\hat{h}_{k,i}=h_{k,i}+e_{k,i}$, where $e_{k,i}$ is the channel estimation error generated by the complex Gaussian distribution with zero mean and variance of $\sigma_e^2/\textnormal{PL}_i$, and $\textnormal{PL}_i$ is the path loss of User~$i$.
Fig.~\ref{fig:compare:err} shows the weighted sum rate performance of our Joint-SAPA-LCC algorithm in the imperfect CSI environments.
The parameter settings for Figs.~\ref{fig:compare:wsr_N_err} and \ref{fig:compare:wsr_K_err} are the same as for Figs.~\ref{fig:compare:wsr_N} and \ref{fig:compare:wsr_K}, respectively, only except that the imperfect CSI environments are applied.
As expected, the weighted sum rates are slightly degraded as the channel estimation error variance increases because the imperfect CSI distorts subchannel assignment (i.e., user pairing per subchannel) as well as power allocation.
Nevertheless, the performance degradation is not much, which means that the Joint-SAPA-LCC algorithm can tolerate some degree of channel estimation errors.

\begin{figure}[!t]
\centering
\subfloat[The weighted data rates with varying $N$ and $\sigma_e^2$.]{\includegraphics[width=7cm]{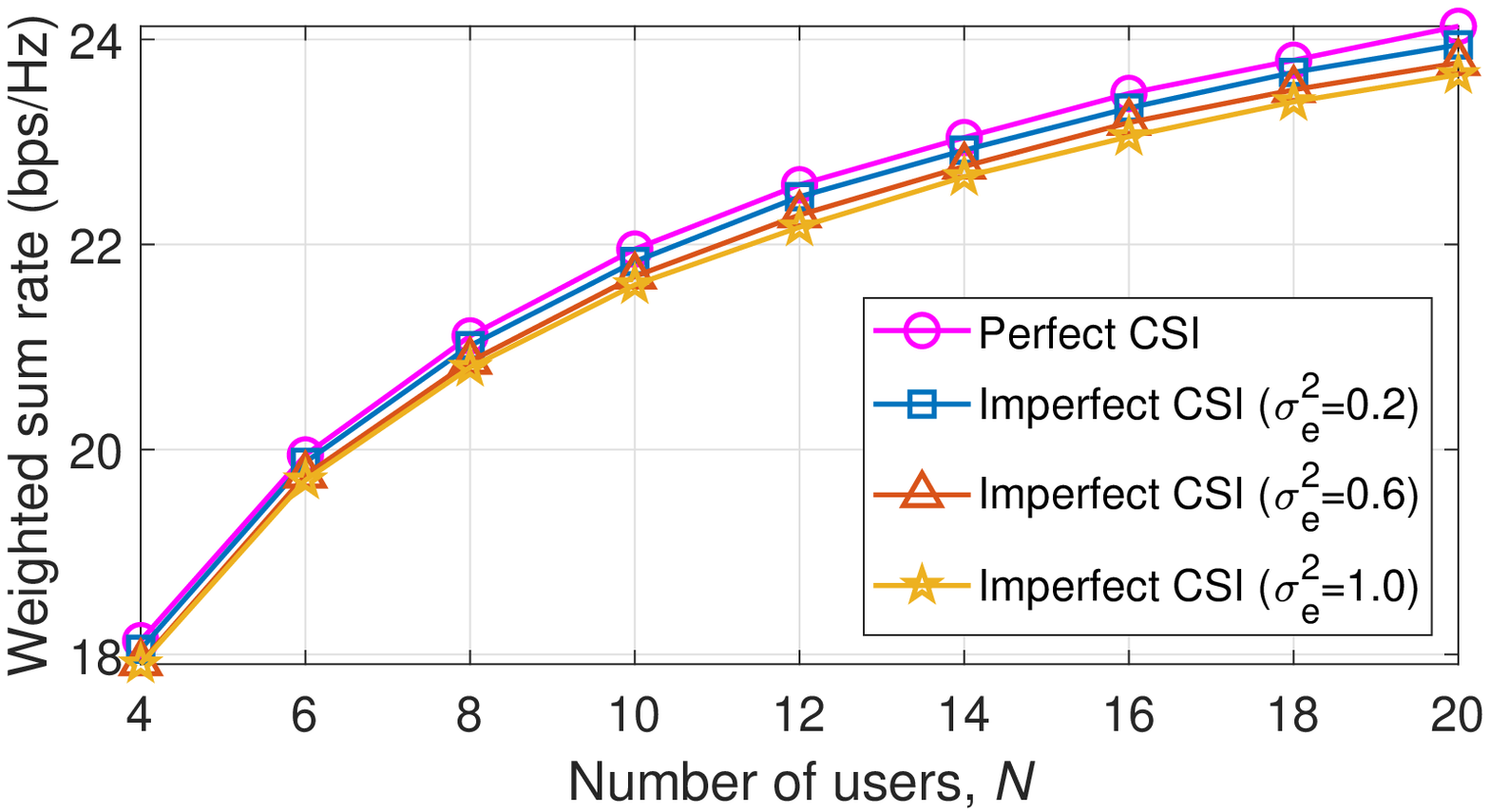}\label{fig:compare:wsr_N_err}}
\hfil
\subfloat[The weighted data rates with varying $K$ and $\sigma_e^2$.]{\includegraphics[width=7cm]{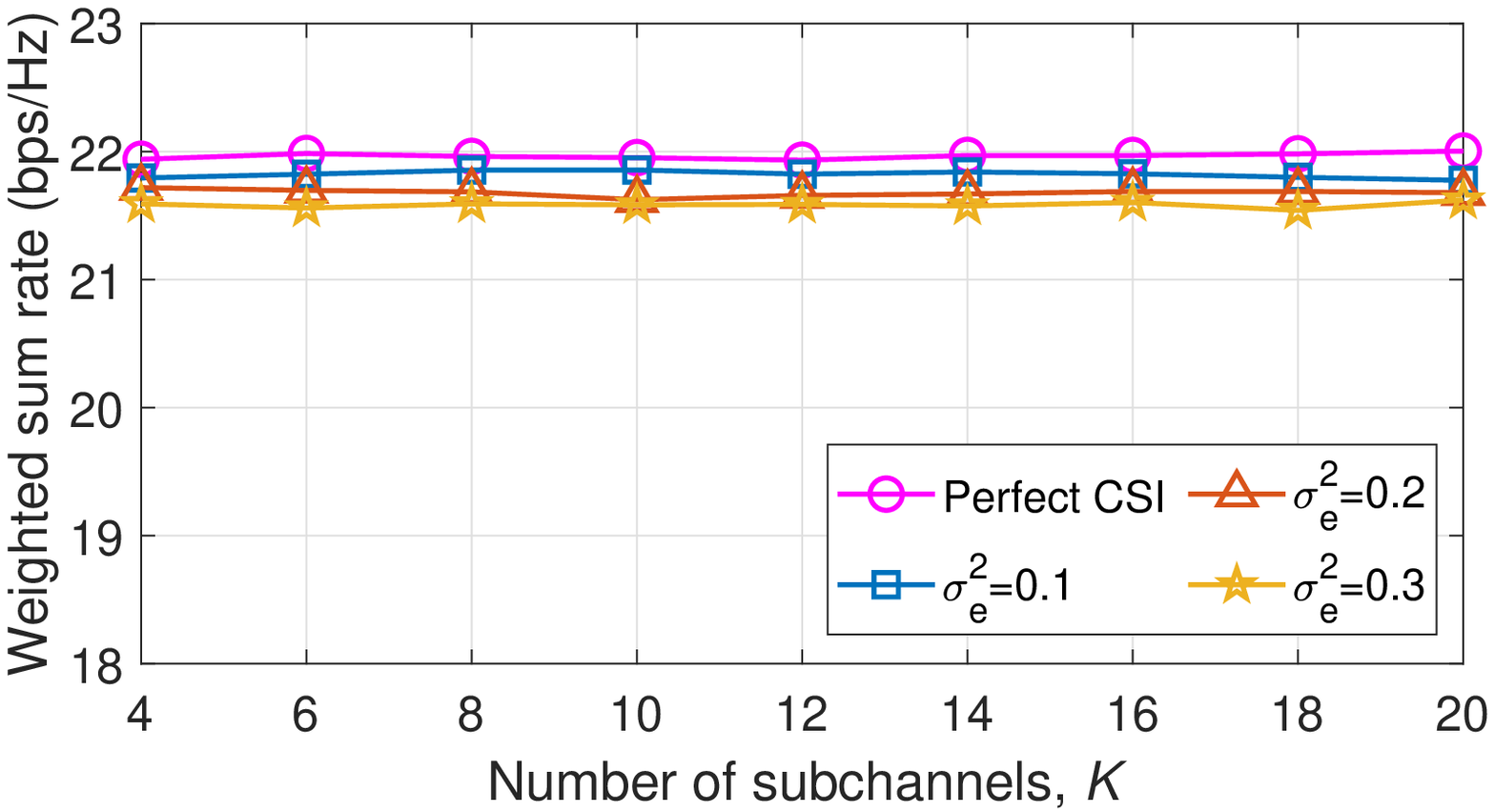}\label{fig:compare:wsr_K_err}}
\caption{The weighted sum rates of the Joint-SAPA-LCC algorithm under the imperfect CSI environments.\label{fig:compare:err}}
\end{figure}

\subsection{Opportunistic MC-NOMA scheduling}
\label{sec:sim2}
In this subsection, we provide the performance of our opportunistic MC-NOMA scheduling algorithm, taking into account the time-varying and frequency-selective channel conditions and various QoS requirements, i.e., the required minimum average data rates of users.
To show the effectiveness of our MC-NOMA scheduling algorithm, we compare its simulation results with those of two other scheduling algorithms: MC-NOMA scheduling without QoS requirements and proportional fair scheduling.
To be specific, the MC-NOMA scheduling without QoS requirements is achieved by solving Problem~$\Pone$ using our opportunistic MC-NOMA scheduling algorithm, where $\bar{R}_{\textnormal{min},i}$ is set to $0$ for all $i\in\mathcal{N}$.
Meanwhile, the proportional fair scheduling is achieved by solving Problem~$\Pt$ using our Joint-SAPA-LCC algorithm at each time slot, where the weight of each user is given as the reciprocal of its time-averaged data rate up until to that time slot as in~\eqref{eq:PFS_weight}.
The time-averaging window coefficient, $\tau$, in~\eqref{eq:PFS_avg_update} is set to $1000$.
Throughout the following simulations, we consider a system where there are $10$ users with equal weights in the cell, and the $i$th user is $30\times i$\,\si{\meter} away from the BS.
Thus, the lower the user index, the closer it is to the BS, resulting in a higher channel gain on average.
The performance results of the scheduling algorithms are investigated in the following two scenarios.
In one scenario, we assume that all users have the same QoS requirements.
Specifically, the minimum average data rates of all the users are set to \SI[per-mode=symbol]{2}{\bps\per\Hz}.
In the other scenario, we assume that the users have individually different QoS requirements.
Specifically, the minimum average data rates of Users~$1$, $2$, $5$, $6$, $9$, and $10$ are set to \SI[per-mode=symbol]{3.5}{\bps\per\Hz}, while those of Users~$3$, $4$, $7$, and $8$ are set to \SI[per-mode=symbol]{1}{\bps\per\Hz}.

\begin{figure}[!t]
\centering
\subfloat[The average sum rate.]{\includegraphics[width=7cm]{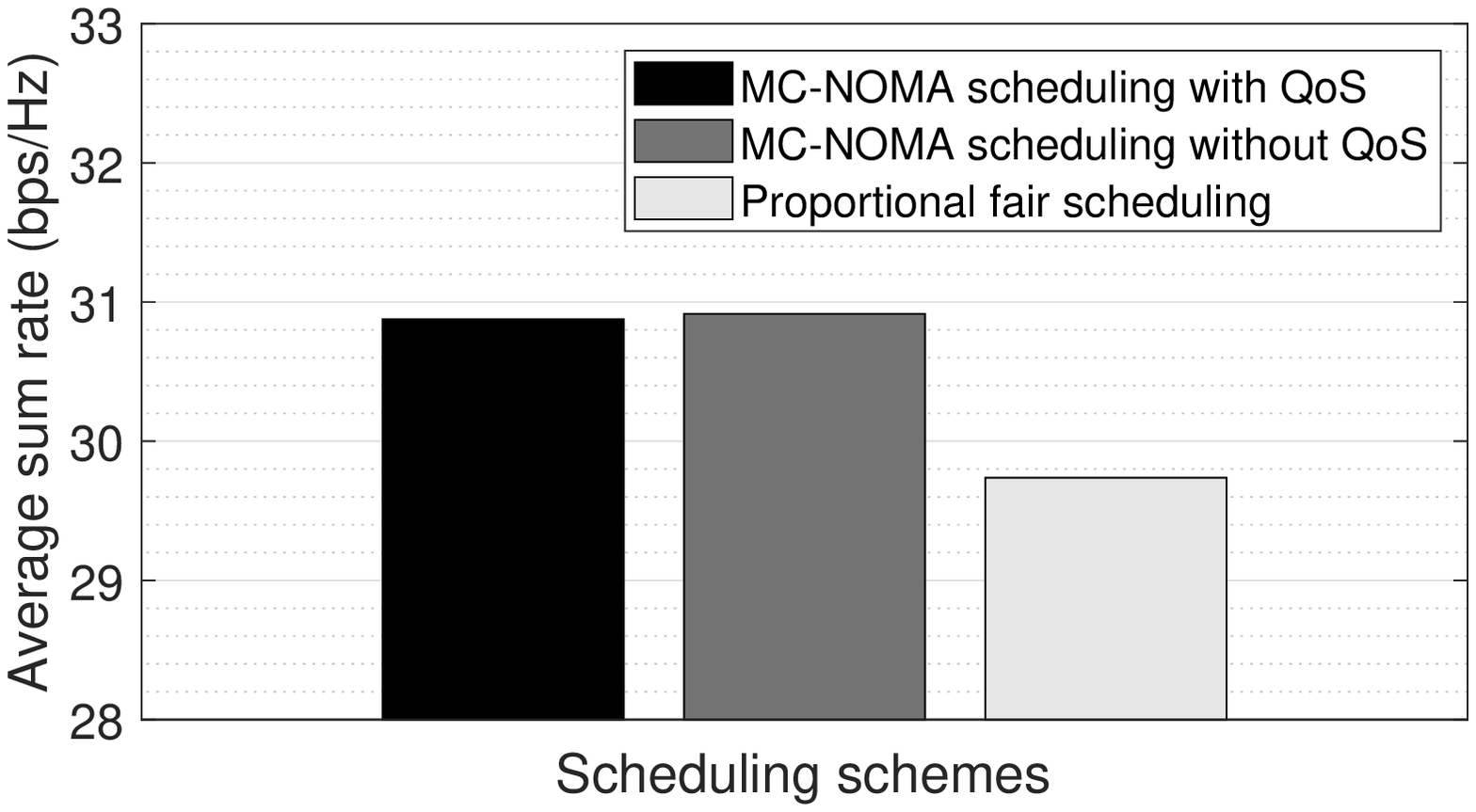}\label{fig:scheduling:equal_QoS:WSR}}
\hfil
\subfloat[The average data rate of each users.]{\includegraphics[width=7cm]{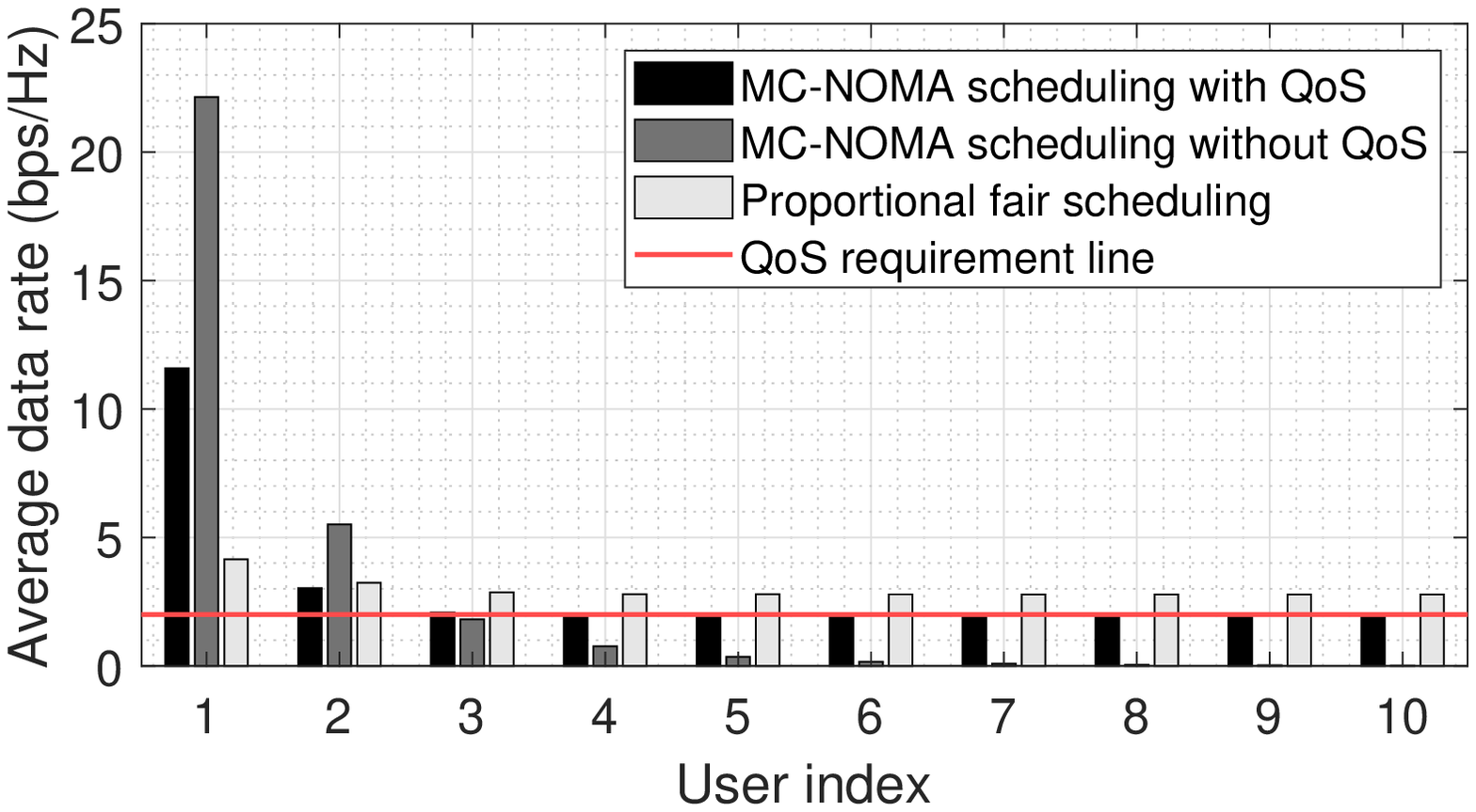}\label{fig:scheduling:equal_QoS:WDR}}
\caption{Performance comparison results between scheduling algorithms when all users have the same QoS requirements.\label{fig:scheduling:equal_QoS}}
\end{figure}

\begin{figure}[!t]
\centering
\subfloat[The average sum rate.]{\includegraphics[width=7cm]{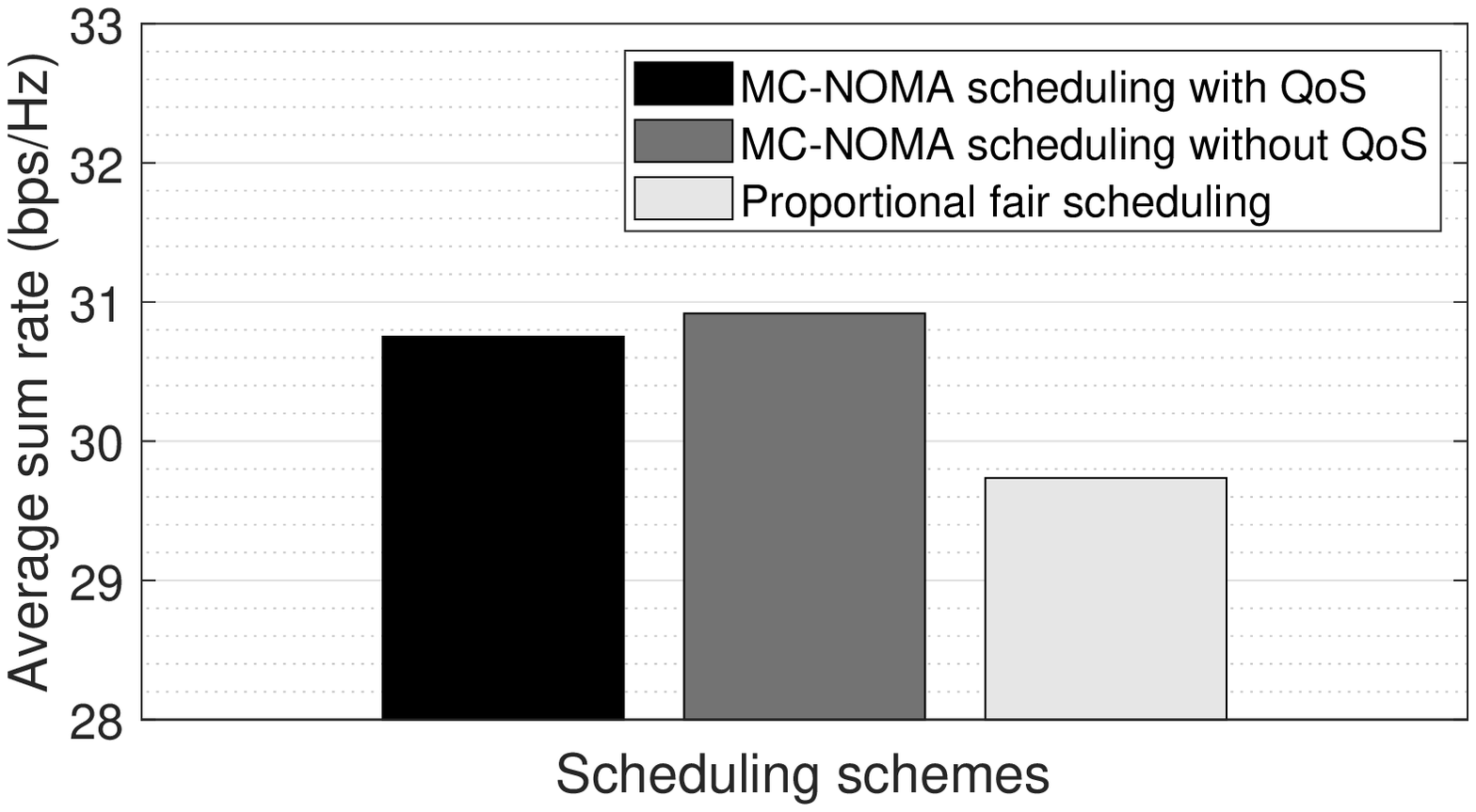}\label{fig:scheduling:diff_QoS:WSR}}
\hfil
\subfloat[The average data rate of each users.]{\includegraphics[width=7cm]{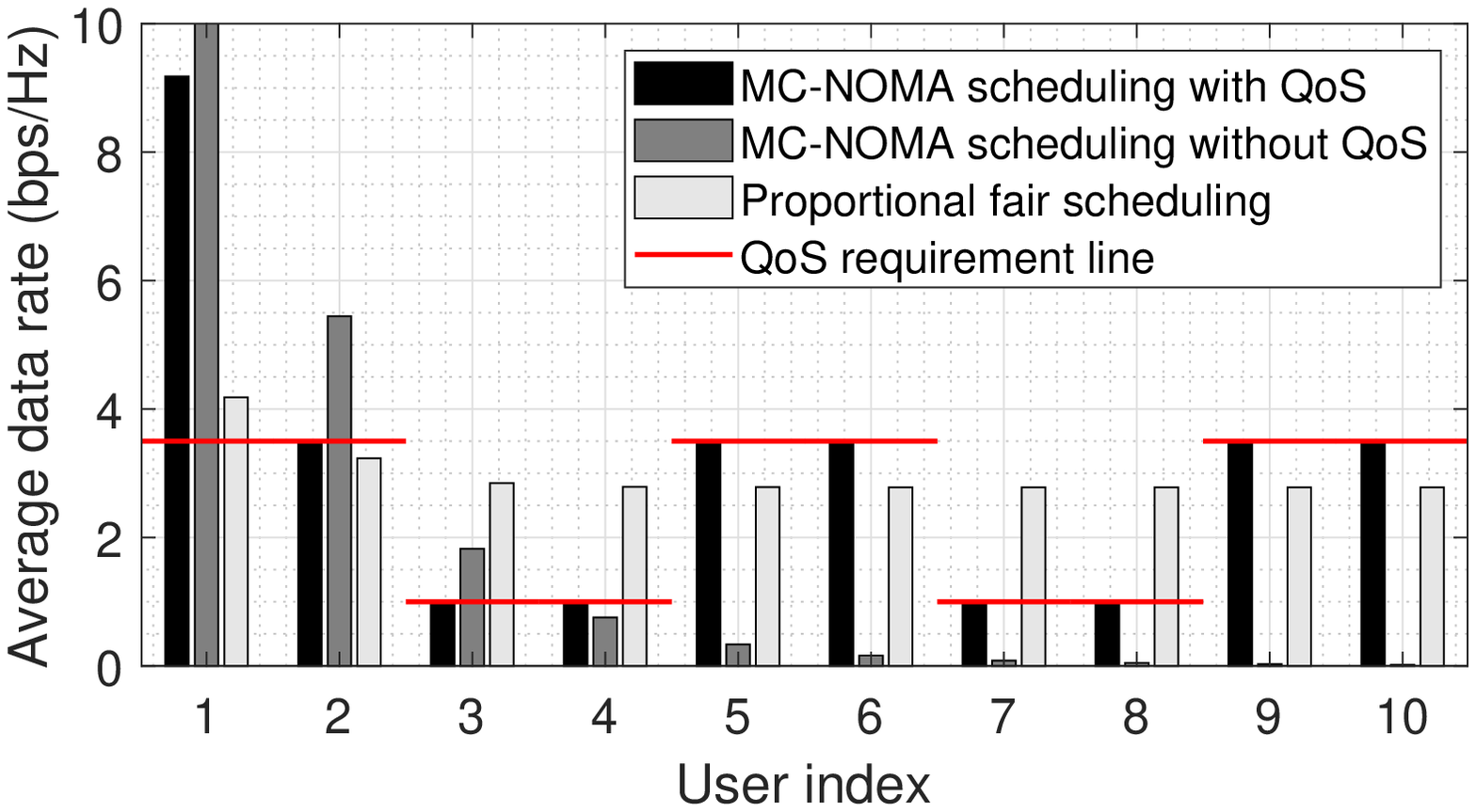}\label{fig:scheduling:diff_QoS:WDR}}
\caption{Performance comparison results between scheduling algorithms when users have individually different QoS requirements.\label{fig:scheduling:diff_QoS}}
\end{figure}

Figs.~\ref{fig:scheduling:equal_QoS} and \ref{fig:scheduling:diff_QoS} show the performance results for the first and second scenarios, respectively.
In Figs.~\ref{fig:scheduling:equal_QoS:WSR} and~\ref{fig:scheduling:diff_QoS:WSR}, we can see that the average sum rates of the MC-NOMA scheduling without QoS requirements are slightly higher than those of the MC-NOMA scheduling with QoS requirements in both cases.
As can be expected, this result is obvious because the feasible space of Problem~$\Pone$ with positive $\bar{R}_{\textnormal{min},i}$'s is a subspace of that of Problem~$\Pone$ with zero $\bar{R}_{\textnormal{min},i}$'s.
However, the lack of the QoS requirements makes the performance of users with poor channel conditions compromised to maximize the sum rate.
Consequently, as can be seen in Figs.~\ref{fig:scheduling:equal_QoS:WDR} and~\ref{fig:scheduling:diff_QoS:WDR}, only a few users close to the BS exploit the resources exclusively, and thereby users far from the BS do not meet their QoS requirements.
On the contrary, the QoS requirements of all users are satisfied well in the MC-NOMA scheduling with QoS requirements.
Meanwhile, the proportional fair scheduling follows the principle of giving high effective weights to users with low time-averaged data rates.
Accordingly, not only does it prevent users from starvation, but it also provides similar average data rate performance among users to gratify its purpose of maximizing the fairness utility function.
However, since it concentrates on the fairness between users and does not take into account the QoS requirements explicitly, a situation where the given QoS requirements are not satisfied may occur.
Fig.~\ref{fig:scheduling:diff_QoS:WDR} shows the case where the proportional fair scheduling cannot meet the QoS requirements, whereas our MC-NOMA scheduling with QoS requirements well satisfies them.
%
In summary, the simulation results demonstrate that our proposed scheduling algorithm not only provides good performance, but also guarantees given QoS requirements.



\section{Conclusion}
\label{sec:conc}
In this paper, we have studied the dynamic resource scheduling problem for joint user, subchannel, and power scheduling in the downlink MC-NOMA system over time-varying fading channels, which has the goal of maximizing the weighted average sum rate while ensuring given QoS requirements.
To this end, we have first developed the Joint-SAPA-LCC algorithm to maximize the instantaneous weighted sum rate.
By its characteristic that it leverages very simple user selection and power allocation based on closed-form equations, we could achieve much lower computational complexity compared to the existing Joint-SAPA algorithms.
In succession, along with the proposal of the proportional fair scheduling based on our Joint-SAPA-LCC algorithm, we have developed the opportunistic MC-NOMA scheduling algorithm that systematically adjusts the effective weights so that the weighted average sum rate is maximized while the QoS requirements are met.
Then, through the extensive simulation results, we have demonstrated that our Joint-SAPA-LCC algorithm provides good performance comparable to the Joint-SAPA-DP algorithm despite its much lower computational complexity, and that our opportunistic MC-NOMA scheduling algorithm satisfies given QoS requirements.
As a final remark, the issue of SIC error propagation has not been discussed in this paper.
We leave this issue for our future work.
This study will be the cornerstone for our future work on the development of scheduling for more complex systems, such as multi-cell MC-NOMA systems and massive MIMO MC-NOMA systems.

\if\mydocumentclass2

\else

\appendices
\numberwithin{equation}{section}
\numberwithin{definition}{section}
\numberwithin{lemma}{section}
\section{Proof of Theorem~\ref{thm:convergence}}
\label{prove:thm:convergence}

First, the objective value of Problem~$\Pteq$, i.e., the weighted sum rate over all subchannels, is bounded above since the total transmission power budget of the BS is limited.
Hence, by the monotone convergence theorem, we can guarantee that Algorithm~\ref{Alg:SG-Master} converges to a stationary point as long as the weighted sum rate over all subchannels monotonically increases as the iteration progresses.

Let $\{\bp^{(l)}, \bq^{(l)}, \bar{\bP}^{(l)}\}$ be the solution to Problem~$\Pteq$ obtained at the beginning of the $l$th iteration of Algorithm~\ref{Alg:SG-Master}, and $y(\bp^{(l)}, \bq^{(l)}, \bar{\bP}^{(l)})$ be the corresponding weighted sum rate over all subchannels.
In Algorithm~\ref{Alg:SG-Master}, two updates are performed in each iteration:
one for both the power allocation and subchannel assignment indicator vectors (i.e., $\bp$ and $\bq$),
and the other for the coupling vector (i.e., $\bar{\bP}$).
Accordingly, in the $l$th iteration, the weighted sum rate over all subchannels evolves as
\if\mydocumentclass0
	\begin{multline}
		y(\bp^{(l)}, \bq^{(l)}, \bar{\bP}^{(l)})
		\overset{(a)}{\to} y(\bp^{(l+1)}, \bq^{(l+1)}, \bar{\bP}^{(l)}) \\
		\overset{(b)}{\to} y(\bp^{(l+1)}, \bq^{(l+1)}, \bar{\bP}^{(l+1)}),
	\end{multline}
\else
	\begin{equation}
		y(\bp^{(l)}, \bq^{(l)}, \bar{\bP}^{(l)})
		\overset{(a)}{\to} y(\bp^{(l+1)}, \bq^{(l+1)}, \bar{\bP}^{(l)})
		\overset{(b)}{\to} y(\bp^{(l+1)}, \bq^{(l+1)}, \bar{\bP}^{(l+1)}),
	\end{equation}
\fi
where $(a)$ and $(b)$ are done by solving Problem~$\Slave$ for all $k\in\mathcal{K}$ and Problem~$\Master$, respectively.
Since Problem~$\Slave$ is to optimize $\{\bp_k,\bq_k\}$ so that the weighted sum rate on Subchannel~$k$ under the limited power of $\bar{P}_k^{(l)}$ is maximized, it is obvious that
\begin{equation} \label{Update by solving Slave}
	y(\bp^{(l)}, \bq^{(l)}, \bar{\bP}^{(l)}) \le y(\bp^{(l+1)}, \bq^{(l+1)}, \bar{\bP}^{(l)}).
\end{equation}
Similarly, since Problem~$\Master$ is to optimize $\bar{\bP}$ so that the weighted sum rate over all subchannels is maximized, it is obvious that
\begin{equation} \label{Update by solving Master}
	y(\bp^{(l+1)}, \bq^{(l+1)}, \bar{\bP}^{(l)}) \le y(\bp^{(l+1)}, \bq^{(l+1)}, \bar{\bP}^{(l+1)}).
\end{equation}
By combining \eqref{Update by solving Slave} and \eqref{Update by solving Master}, we can easily see that the weighted sum rate over all subchannels, i.e., the objective value of Problem~$\Pteq$, is monotonically increasing as the iteration progresses. \qed

\section{Proof of Theorem~\ref{thm:user_selection_exclusion}}
\label{prove:thm:user_selection_exclusion}
Let us denote the objective function of Problem~$\Slave$, its optimal solution, and its optimal value by $f_{\Slave}(\bp_k, \bq_k)$, $\{\bp_k^*, \bq_k^*\}$, and $f_{\Slave}^* = f_{\Slave}(\bp_k^*, \bq_k^*)$, respectively.
We also denote the objective function of Problem~$\Q$, its optimal solution, and its optimal value by $f_{\Q}(\bp_k)$, $\bp_k^\dagger$, and $f_{\Q}^\dagger = f_{\Q}(\bp_k^\dagger)$, respectively.
In addition, we define a feasible set for the subchannel assignment indicator vector for Subchannel~$k$ by $\mathcal{Q}_k=\{\bq_k\in\{0,1\}^N : \sum_{i\in\mathcal{N}} q_{k,i} \le M\}$, and denote a set of users accessing Subchannel~$k$ as $\mathcal{N}_k = \{i\in\mathcal{N}:q_{k,i}=1\}$.

Now, suppose that $\bq_k$ is arbitrarily given such that $\bq_k\in\mathcal{Q}_k$.
Then, Problem~$\Slave$ turns into
\begin{IEEEeqnarray}{c'l}\label{Prob:fhat}
	\IEEEyesnumber
	\maximize_{\bp_k} & \hat{f}_{\Slave}(\bp_k;\bq_k) \\
	\subjto & \sum_{i\in\mathcal{N}} p_{k,i} \le \bar{P}_k, \IEEEnonumber*\\
	& p_{k,i} \ge 0, ~ \forall i\in\mathcal{N},
\end{IEEEeqnarray}
where
\if\mydocumentclass0
	\begin{multline}
		\hat{f}_{\Slave}(\bp_k;\bq_k) \\
		= \sum_{i\in\mathcal{N}_k} \tilde{w}_i B_k \log_2 \left( 1 + \frac{p_{k,i}}{\sum_{j\in\mathcal{N}_k:\eta_{k,j}<\eta_{k,i}} p_{k,j} + \eta_{k,i}} \right).
	\end{multline}
\else
	\begin{equation}
		\hat{f}_{\Slave}(\bp_k;\bq_k)
		= \sum_{i\in\mathcal{N}_k} \tilde{w}_i B_k \log_2 \left( 1 + \frac{p_{k,i}}{\sum_{j\in\mathcal{N}_k:\eta_{k,j}<\eta_{k,i}} p_{k,j} + \eta_{k,i}} \right).
	\end{equation}
\fi
Here, it is worth noting that the problem in~\eqref{Prob:fhat} is identical with Problem~$\Q$ except that the user set is given as $\mathcal{N}_k$ instead of $\mathcal{N}$ in the objective function.
Since $\mathcal{N}_k$ is a subset of $\mathcal{N}$, it is obvious that $\hat{f}_{\Slave}^*(\bq_k) \le f_{\Q}^\dagger$ for any $\bq_k\in\mathcal{Q}_k$, where $\hat{f}_{\Slave}^*(\bq_k)$ is the optimal value of the problem in~\eqref{Prob:fhat} for a given $\bq_k$.
Thus, we can deduce that
\begin{equation} \label{eq:Slave le Q}
	f_{\Slave}^* \le f_{\Q}^\dagger.
\end{equation}

Next, suppose that the optimal solution, $\bp_k^\dagger$, to Problem~$\Q$ is given such that $\sum_{i\in\mathcal{N}} \mathbf{1}_{\{p_{k,i}^\dagger>0\}} \le M$, and let us define
\begin{equation}
	\hat{\bp}_k=\bp_k^\dagger ~ \textnormal{and} ~ \hat{\bq}_k=(\hat{q}_{k,i})_{\forall i\in\mathcal{N}},
\end{equation}
where $\hat{q}_{k,i}=\mathbf{1}_{\{p_{k,i}^\dagger>0\}}$ for all $i\in\mathcal{N}$.
Then, $\hat{\bp}_k$ and $\hat{\bq}_k$ satisfy the constraints in Problem~$\Slave$.
Also, by simple arithmetic operations, we can easily see that
\begin{equation} \label{eq:Slave opt = Q opt}
	f_{\Slave}(\hat{\bp}_k, \hat{\bq}_k) = f_{\Q}^\dagger.
\end{equation}
By~\eqref{eq:Slave le Q} and \eqref{eq:Slave opt = Q opt}, we can deduce that $\{\hat\bp_k, \hat\bq_k\}$ is an optimal solution to Problem~$\Slave$, i.e., $\bp_k^*=\hat\bp_k$ and $\bq_k^*=\hat\bq_k$. \qed

\section{Proof of Theorem~\ref{thm:2-user}}
\label{prove:thm:2-user}
Since the larger the transmission power, the higher the weighted sum rate, we can replace the first constraint in Problem~$\Qfour$ with $p_{k,\psi_k} + p_{k,\varphi_k} = \bar{P}_k$.
Then, by substituting $p_{k,\psi_k}$ for $\bar{P}_k-p_{k,\varphi_k}$, we can reformulate Problem~$\Qfour$ as
\begin{IEEEeqnarray}{c'l}
	\label{Prob:Appendix_One_Variable}
	\maximize_{0 \le p_{k,\varphi_k} \le \bar{P}^k} & g(p_{k,\varphi_k}),
\end{IEEEeqnarray}
where
\if\mydocumentclass0
	\begin{multline}
	g(p_{k,\varphi_k}) = \tilde{w}_{\psi_k} B_k \log_2 \left( \frac{\bar{P}_k + \eta_{k,\psi_k}}{p_{k,\varphi_k} + \eta_{k,\psi_k}} \right) \\
	+ \tilde{w}_{\varphi_k} B_k \log_2 \left( 1+ \frac{p_{k,\varphi_k}}{\eta_{k,\varphi_k}} \right).
	\end{multline}
\else
	\begin{equation}
	g(p_{k,\varphi_k}) = \tilde{w}_{\psi_k} B_k \log_2 \left( \frac{\bar{P}_k + \eta_{k,\psi_k}}{p_{k,\varphi_k} + \eta_{k,\psi_k}} \right)
	+ \tilde{w}_{\varphi_k} B_k \log_2 \left( 1+ \frac{p_{k,\varphi_k}}{\eta_{k,\varphi_k}} \right).
	\end{equation}
\fi
The derivative of $g(p_{k,\varphi_k})$ with respect to $p_{k,\varphi_k}$ is given as
\if\mydocumentclass0
	\begin{align}\label{eq:gprime}
		&g'(p_{k,\varphi_k}) \nonumber\\
		&= \frac{B_k}{\ln2} \times \left[ \frac{-\tilde{w}_{\psi_k}}{p_{k,\varphi_k} + \eta_{k,\psi_k}} + \frac{\tilde{w}_{\varphi_k}}{p_{k,\varphi_k} + \eta_{k,\varphi_k}} \right] \nonumber\\
		&= \frac{B_k}{\ln2} \times \frac{(\tilde{w}_{\varphi_k}-\tilde{w}_{\psi_k})p_{k,\varphi_k} + \tilde{w}_{\varphi_k}\eta_{k,\psi_k} - \tilde{w}_{\psi_k}\eta_{k,\varphi_k}}{(p_{k,\varphi_k} + \eta_{k,\psi_k})(p_{k,\varphi_k} + \eta_{k,\varphi_k})}.
	\end{align}
\else
	\begin{align}\label{eq:gprime}
		g'(p_{k,\varphi_k})
		&= \frac{B_k}{\ln2} \times \left[ \frac{-\tilde{w}_{\psi_k}}{p_{k,\varphi_k} + \eta_{k,\psi_k}} + \frac{\tilde{w}_{\varphi_k}}{p_{k,\varphi_k} + \eta_{k,\varphi_k}} \right] \nonumber\\
		&= \frac{B_k}{\ln2} \times \frac{(\tilde{w}_{\varphi_k}-\tilde{w}_{\psi_k})p_{k,\varphi_k} + \tilde{w}_{\varphi_k}\eta_{k,\psi_k} - \tilde{w}_{\psi_k}\eta_{k,\varphi_k}}{(p_{k,\varphi_k} + \eta_{k,\psi_k})(p_{k,\varphi_k} + \eta_{k,\varphi_k})}.
	\end{align}
\fi
From ~\eqref{eq:gprime}, we have $g'(\hat{p}_{k,\varphi_k}) = 0$ if and only if
\begin{equation}\label{eq:p2hat}
	\hat{p}_{k,\varphi_k} = \frac{\tilde{w}_{\psi_k}\eta_{k,\varphi_k}-\tilde{w}_{\varphi_k}\eta_{k,\psi_k}}{\tilde{w}_{\varphi_k}-\tilde{w}_{\psi_k}}.
\end{equation}
Using~\eqref{eq:gprime} and \eqref{eq:p2hat}, we can derive an optimal solution, $p_{k,\varphi_k}^\star$, to the problem in~\eqref{Prob:Appendix_One_Variable} by considering the following three mutually exclusive cases.
\begin{enumerate}
\item 	Suppose that $\tilde{w}_{\varphi_k}/\tilde{w}_{\psi_k} > 1$.
		Then, according to~\eqref{eq:gprime}, $g'(p_{k,\varphi_k}) < 0$ if $p_{k,\varphi_k} < \hat{p}_{k,\varphi_k}$ and $g'(p_{k,\varphi_k}) \ge 0$ otherwise.
		Also, according to~\eqref{eq:p2hat}, $\hat{p}_{k,\varphi_k} < 0$ since $\eta_{k,\varphi_k} < \eta_{k,\psi_k}$.
		Thus, $g(p_{k,\varphi_k})$ is an increasing function on $[0, \bar{P}_k]$, resulting in $p_{k,\varphi_k}^\star=\bar{P}_k$.
\item	Suppose that $\tilde{w}_{\varphi_k}/\tilde{w}_{\psi_k} = 1$.
		Then, according to~\eqref{eq:gprime}, $g'(p_{k,\varphi_k}) > 0$ for any $p_{k,\varphi_k} \in [0, \bar{P}_k]$.
		Thus, $g(p_{k,\varphi_k})$ is an increasing function on $[0, \bar{P}_k]$, resulting in $p_{k,\varphi_k}^\star=\bar{P}_k$.
\item	Suppose that $\tilde{w}_{\varphi_k}/\tilde{w}_{\psi_k} < 1$.
		Then, according to~\eqref{eq:gprime}, $g'(p_{k,\varphi_k}) > 0$ if $p_{k,\varphi_k} < \hat{p}_{k,\varphi_k}$ and $g'(p_{k,\varphi_k}) \le 0$ otherwise.
		Also, according to~\eqref{eq:p2hat}, we have the following two inequalities.
		\begin{align}
			\hat{p}_{k,\varphi_k} \le 0 &~\Leftrightarrow~ \frac{\tilde{w}_{\varphi_k}}{\tilde{w}_{\psi_k}} \le \frac{\eta_{k,\varphi_k}}{\eta_{k,\psi_k}},\label{eq:cond1}\\
			\hat{p}_{k,\varphi_k} > \bar{P}_k &~\Leftrightarrow~ \frac{\tilde{w}_{\varphi_k}}{\tilde{w}_{\psi_k}} > \frac{\bar{P}_k+\eta_{k,\varphi_k}}{\bar{P}_k+\eta_{k,\psi_k}}.\label{eq:cond2}
		\end{align}
		Consequently, if \eqref{eq:cond1} is met, $g(p_{k,\varphi_k})$ is a decreasing function on $[0, \bar{P}_k]$, resulting in $p_{k,\varphi_k}^\star=0$; if \eqref{eq:cond2} is met, $g(p_{k,\varphi_k})$ is an increasing function on $[0, \bar{P}_k]$, resulting in $p_{k,\varphi_k}^\star=\bar{P}_k$; and if neither \eqref{eq:cond1} nor \eqref{eq:cond2} is met, $g(p_{k,\varphi_k})$ is an increasing function on $[0, \hat{p}_{k,\varphi_k}]$ but a decreasing function on $[\hat{p}_{k,\varphi_k}, \bar{P}_k]$, resulting in $p_{k,\varphi_k}^\star=\hat{p}_{k,\varphi_k}$.
\end{enumerate}
%
%
%
Note that since $\eta_{k,\varphi_k}<\eta_{k,\psi_k}$, the right-hand side of \eqref{eq:cond2} is always less than one, i.e., $(\bar{P}_k+\eta_{k,\varphi_k})/(\bar{P}_k+\eta_{k,\psi^k})<1$.
Hence, the results for the three cases can be combined as
\begin{equation}
	p_{k,\varphi_k}^\star =
	\begin{dcases}
		0,				& \textnormal{if } {\tilde{w}_{\varphi_k}}/{\tilde{w}_{\psi_k}} \le \eta_{k,\varphi_k}/\eta_{k,\psi_k}, \\
		\bar{P}_k,		& \textnormal{if } {\tilde{w}_{\varphi_k}}/{\tilde{w}_{\psi_k}} > (\bar{P}_k+\eta_{k,\varphi_k})/(\bar{P}_k+\eta_{k,\psi_k}), \\
		\hat{p}_{k,\varphi_k}	& \textnormal{otherwise.}
	\end{dcases}
\end{equation}
Lastly, $p_{k,\psi_k}^\star = \bar{P}_k - p_{k,\varphi_k}^\star$ since $p_{k,\psi_k} + p_{k,\varphi_k} = \bar{P}_k$. \qed

\section{Proof of the equivalence of \eqref{eq:phik2} and \eqref{eq:phik3}}
\label{prove:phi_k_equivalence}
To show the equivalence of \eqref{eq:phik2} and \eqref{eq:phik3}, it is sufficient to show that
\begin{equation}\label{eq:Pcond1}
	\frac{\tilde{w}_{\varphi_k}}{\tilde{w}_{\psi_k}} \le C_k^2(\bar{P}_k) ~\Leftrightarrow~ \frac{\tilde{w}_{\varphi_k}}{\tilde{w}_{\psi_k}} < 1 ~ \textnormal{and} ~ \bar{P}_k \ge C_k^4.
\end{equation}
First of all, the fact that the left-hand side holds if the right-hand side holds can be easily proved by simple arithmetic operations.
Hence, we only prove that if the left-hand side holds, then its right-hand side holds.
To this end, we first have
\if\mydocumentclass0
	\begin{align}
		&\frac{\tilde{w}_{\varphi_k}}{\tilde{w}_{\psi_k}} \le C_k^2(\bar{P}_k) \label{eq:Pcond2}\\
		\overset{(a)}{\Leftrightarrow}~ &\frac{\tilde{w}_{\varphi_k}}{\tilde{w}_{\psi_k}} \le \frac{\bar{P}_k + \eta_{k,\varphi_k}}{\bar{P}_k + \eta_{k,\psi_k}} \\
		\overset{(b)}{\Leftrightarrow}~ &\bar{P}_k (\tilde{w}_{\varphi_k}-\tilde{w}_{\psi_k}) \le \tilde{w}_{\psi_k} \eta_{k,\varphi_k} - \tilde{w}_{\varphi_k} \eta_{k,\psi_k}. \label{eq:Pcond3}
	\end{align}
\else
	\begin{align}
		\frac{\tilde{w}_{\varphi_k}}{\tilde{w}_{\psi_k}} \le C_k^2(\bar{P}_k)
		&~\overset{(a)}{\Leftrightarrow}~ \frac{\tilde{w}_{\varphi_k}}{\tilde{w}_{\psi_k}} \le \frac{\bar{P}_k + \eta_{k,\varphi_k}}{\bar{P}_k + \eta_{k,\psi_k}} \label{eq:Pcond2}\\
		&~\overset{(b)}{\Leftrightarrow}~ \bar{P}_k (\tilde{w}_{\varphi_k}-\tilde{w}_{\psi_k}) \le \tilde{w}_{\psi_k} \eta_{k,\varphi_k} - \tilde{w}_{\varphi_k} \eta_{k,\psi_k}. \label{eq:Pcond3}
	\end{align}
\fi
where $(a)$ is done by the definition of $C_k^2(\bar{P}_k)$ in~\eqref{eq:C1 and C2}, and $(b)$ is done by simple arithmetic operations.
Then, we can deduce that if \eqref{eq:Pcond2} holds, then
\begin{equation}
	\frac{\tilde{w}_{\varphi_k}}{\tilde{w}_{\psi_k}} < 1 ~ \textnormal{and} ~ \bar{P}_k \ge C_k^4,
\end{equation}
where the first inequality is derived from the fact that $C_k^2(\bar{P}_k)$ is always less than $1$ since $\eta_{k,\psi_k} > \eta_{k,\varphi_k}$, and the second inequality is obtained by dividing by $\tilde{w}_{\varphi_k}-\tilde{w}_{\psi_k}$ on both sides of \eqref{eq:Pcond3}.
As a result, we can conclude that \eqref{eq:Pcond1} holds.

\section{Proof of Proposition~\ref{thm:convexity_of_phi}}
\label{prove:thm:convexity_of_phi}

First, consider the case where
\begin{equation} \label{prove:thm:convexity_of_phi:eq:cond1}
	\tilde{w}_{\varphi_k}/\tilde{w}_{\psi_k} \ge 1.
\end{equation}
Then, according to~\eqref{eq:phik3}, we have $\phi_k^*(\bar{P}_k) = \tilde{w}_{\varphi_k} B_k \log_2 ( 1 + \bar{P}_k/\eta_{k,\varphi_k} )$, which is a logarithmic function that can be defined on $[0,\infty)$.
Hence, $\phi_k^*$ is a continuously differentiable concave function of $\bar{P}_k$ on $[0, \infty)$ if \eqref{prove:thm:convexity_of_phi:eq:cond1} holds.

Next, consider the other case where
\begin{equation} \label{prove:thm:convexity_of_phi:eq:cond2}
	\tilde{w}_{\varphi_k}/\tilde{w}_{\psi_k} < 1.
\end{equation}
Then, according to~\eqref{eq:phik3}, we have
\begin{equation} \label{prove:thm:convexity_of_phi:eq:phik_piecewise}
	\phi_k^*(\bar{P}_k) = \begin{dcases}
		\phi_{k,1}(\bar{P}_k), & \textnormal{if } \bar{P}_k \ge C_k^4, \\
		\phi_{k,2}(\bar{P}_k), & \textnormal{otherwise,}
	\end{dcases}
\end{equation}
where
\begin{align}
	\phi_{k,1}(\bar{P}_k) &= \tilde{w}_{\psi_k} B_k \log_2 \left( 1 + \frac{\bar{P}_k}{\eta_{k,\psi_k}} \right) + C_k^3, \label{prove:thm:convexity_of_phi:eq:phi_k1}\\
	\phi_{k,2}(\bar{P}_k) &= \tilde{w}_{\varphi_k} B_k \log_2 \left( 1 + \frac{\bar{P}_k}{\eta_{k,\varphi_k}} \right). \label{prove:thm:convexity_of_phi:eq:phi_k2}
\end{align}
Both $\phi_{k,1}$ and $\phi_{k,2}$ are logarithmic functions that can be defined on $[0,\infty)$.
Hence, each of them is a continuously differentiable concave function of $\bar{P}_k$ on $[0, \infty)$.
However, as shown in~\eqref{prove:thm:convexity_of_phi:eq:phik_piecewise}, $\phi_k^*$ is a piecewise function of $\bar{P}_k$ with a breakpoint at $\bar{P}_k = C_k^4$.
Hence, to prove Proposition~\ref{thm:convexity_of_phi}, we only need to show that $\phi_{k,1}(\bar{P}_k)=\phi_{k,2}(\bar{P}_k)$ and $\phi_{k,1}'(\bar{P}_k)=\phi_{k,2}'(\bar{P}_k)$ at the breakpoint, where $\phi_{k,1}'$ and $\phi_{k,2}'$ denote the derivative functions of $\phi_{k,1}$ and $\phi_{k,2}$, respectively, with respect to $\bar{P}_k$.

We first show that $\phi_{k,1}(\bar{P}_k)=\phi_{k,2}(\bar{P}_k)$ at $\bar{P}_k = C_k^4$.
From the definitions of $C_k^3$ and $C_k^4$ given in \eqref{eq:C3} and \eqref{eq:C4}, respectively, we have
\if\mydocumentclass0
	\begin{align} \label{prove:thm:convexity_of_phi:eq:phik1 of Ck4}
		\phi_{k,1}(C_k^4)
		&= \begin{multlined}[t]
			\tilde{w}_{\psi_k} B_k \log_2 \left( 1 + \frac{\frac{\tilde{w}_{\psi_k} \eta_{k,\varphi_k} - \tilde{w}_{\varphi_k} \eta_{k,\psi_k}}{\tilde{w}_{\varphi_k}-\tilde{w}_{\psi_k}}}{\eta_{k,\psi_k}} \right) \\
			+ \tilde{w}_{\psi_k} B_k \log_2 \left( \frac{\tilde{w}_{\varphi_k} - \tilde{w}_{\psi_k}}{\eta_{k,\varphi_k}-\eta_{k,\psi_k}} \cdot \frac{\eta_{k,\psi_k}}{\tilde{w}_{\psi_k}} \right) \\
			+ \tilde{w}_{\varphi_k} B_k \log_2 \left( \frac{\eta_{k,\varphi_k}-\eta_{k,\psi_k}}{\tilde{w}_{\varphi_k}-\tilde{w}_{\psi_k}} \cdot \frac{\tilde{w}_{\varphi_k}}{\eta_{k,\varphi_k}} \right)
		\end{multlined} \nonumber\\
		&= \tilde{w}_{\varphi_k} B_k \log_2 \left( \frac{\eta_{k,\varphi_k}-\eta_{k,\psi_k}}{\tilde{w}_{\varphi_k}-\tilde{w}_{\psi_k}} \cdot \frac{\tilde{w}_{\varphi_k}}{\eta_{k,\varphi_k}} \right) \nonumber\\
		&= \tilde{w}_{\varphi_k} B_k \log_2 \left( 1 + \frac{\frac{\tilde{w}_{\psi_k}\eta_{k,\varphi_k}-\tilde{w}_{\varphi_k}\eta_{k,\psi_k}}{\tilde{w}_{\varphi_k}-\tilde{w}_{\psi_k}}}{\eta_{k,\varphi_k}} \right) \nonumber\\
		&= \phi_{k,2}(C_k^4).
	\end{align}
\else
	\begin{align} \label{prove:thm:convexity_of_phi:eq:phik1 of Ck4}
		\phi_{k,1}(C_k^4)
		&= \begin{multlined}[t]
			\tilde{w}_{\psi_k} B_k \log_2 \left( 1 + \frac{\frac{\tilde{w}_{\psi_k} \eta_{k,\varphi_k} - \tilde{w}_{\varphi_k} \eta_{k,\psi_k}}{\tilde{w}_{\varphi_k}-\tilde{w}_{\psi_k}}}{\eta_{k,\psi_k}} \right) \\
			+ \tilde{w}_{\psi_k} B_k \log_2 \left( \frac{\tilde{w}_{\varphi_k} - \tilde{w}_{\psi_k}}{\eta_{k,\varphi_k}-\eta_{k,\psi_k}} \cdot \frac{\eta_{k,\psi_k}}{\tilde{w}_{\psi_k}} \right)
			+ \tilde{w}_{\varphi_k} B_k \log_2 \left( \frac{\eta_{k,\varphi_k}-\eta_{k,\psi_k}}{\tilde{w}_{\varphi_k}-\tilde{w}_{\psi_k}} \cdot \frac{\tilde{w}_{\varphi_k}}{\eta_{k,\varphi_k}} \right)
		\end{multlined} \nonumber\\
		&= \tilde{w}_{\varphi_k} B_k \log_2 \left( \frac{\eta_{k,\varphi_k}-\eta_{k,\psi_k}}{\tilde{w}_{\varphi_k}-\tilde{w}_{\psi_k}} \cdot \frac{\tilde{w}_{\varphi_k}}{\eta_{k,\varphi_k}} \right) \nonumber\\
		&= \tilde{w}_{\varphi_k} B_k \log_2 \left( 1 + \frac{\frac{\tilde{w}_{\psi_k}\eta_{k,\varphi_k}-\tilde{w}_{\varphi_k}\eta_{k,\psi_k}}{\tilde{w}_{\varphi_k}-\tilde{w}_{\psi_k}}}{\eta_{k,\varphi_k}} \right) \nonumber\\
		&= \phi_{k,2}(C_k^4).
	\end{align}
\fi

We now show that $\phi_{k,1}'(\bar{P}_k)=\phi_{k,2}'(\bar{P}_k)$ at $\bar{P}_k = C_k^4$.
The derivative of $\phi_{k,1}$ with respect to $\bar{P}_k$ is given by
\begin{equation}
	\phi_{k,1}'(\bar{P}_k) = \frac{\tilde{w}_{\psi_k} B_k}{\ln2} \cdot \frac{1}{\bar{P}_k + \eta_{k,\psi_k}}.
\end{equation}
Hence, we have
\begin{align} \label{prove:thm:convexity_of_phi:eq:phik1' of Ck4}
	\phi_{k,1}'(C_k^4) &= \frac{\tilde{w}_{\psi_k} B_k}{\ln2} \cdot \frac{1}{\frac{\tilde{w}_{\psi_k}\eta_{k,\varphi_k}-\tilde{w}_{\varphi_k}\eta_{k,\psi_k}}{\tilde{w}_{\varphi_k}-\tilde{w}_{\psi_k}} + \eta_{k,\psi_k}} \nonumber\\
	&= \frac{B_k}{\ln2} \cdot \frac{\tilde{w}_{\varphi_k}-\tilde{w}_{\psi_k}}{\eta_{k,\varphi_k}-\eta_{k,\psi_k}}
\end{align}
Next, the derivative of $\phi_{k,2}$ with respect to $\bar{P}_k$ is given by
\begin{equation}
	\phi_{k,2}'(\bar{P}_k) = \frac{\tilde{w}_{\varphi_k} B_k}{\ln2} \cdot \frac{1}{\bar{P}_k + \eta_{k,\varphi_k}},
\end{equation}
and thus,
\begin{align} \label{prove:thm:convexity_of_phi:eq:phik1p of Ck4}
	\phi_{k,2}'(C_k^4)
	&= \frac{\tilde{w}_{\varphi_k} B_k}{\ln2} \cdot \frac{1}{\frac{\tilde{w}_{\psi_k}\eta_{k,\varphi_k}-\tilde{w}_{\varphi_k}\eta_{k,\psi_k}}{\tilde{w}_{\varphi_k}-\tilde{w}_{\psi_k}} + \eta_{k,\varphi_k}} \nonumber\\
	&= \frac{B_k}{\ln2} \cdot \frac{\tilde{w}_{\varphi_k}-\tilde{w}_{\psi_k}}{\eta_{k,\varphi_k}-\eta_{k,\psi_k}} \nonumber\\
	&= \phi_{k,1}'(C_k^4).
\end{align}
Hence, by \eqref{prove:thm:convexity_of_phi:eq:phik1 of Ck4} and \eqref{prove:thm:convexity_of_phi:eq:phik1p of Ck4}, $\phi_k^*$ is a continuously differentiable concave function of $\bar{P}_k$ on $[0, \infty)$ if \eqref{prove:thm:convexity_of_phi:eq:cond2} holds. \qed

%
%

\section{Proof of Theorem~\ref{thm:Pkstar}}
\label{prove:thm:Pkstar}

We first introduce two indicator variables $e_k^1$ and $e_k^2$, where $e_k^1$ is equal to one if $\tilde{w}_{\varphi_k}/\tilde{w}_{\psi_k}<1$ and $\bar{P}_k \ge C_k^4$ and zero otherwise, and $e_k^2$ is equal to one if $\tilde{w}_{\varphi_k}/\tilde{w}_{\psi_k} \ge 1$ or $\bar{P}_k < C_k^4$ and zero otherwise.
Note that since the indicators are mutually exclusive, $e_k^1+e_k^2$ is always one.
Then, $\phi_k^*(\bar{P}_k)$ in \eqref{eq:phik3} can be expressed as
\if\mydocumentclass0
	\begin{multline}
		\phi_k^*(\bar{P}_k) = e_k^1 \cdot \left(\tilde{w}_{\psi_k} B_k \log_2 \left( 1 + \frac{\bar{P}_k}{\eta_{k,\psi_k}} \right) + C_k^3\right) \\
		+ e_k^2 \cdot \tilde{w}_{\varphi_k} B_k \log_2 \left( 1 + \frac{\bar{P}_k}{\eta_{k,\varphi_k}} \right).
	\end{multline}
\else
	\begin{equation}
		\phi_k^*(\bar{P}_k) = e_k^1 \cdot \left(\tilde{w}_{\psi_k} B_k \log_2 \left( 1 + \frac{\bar{P}_k}{\eta_{k,\psi_k}} \right) + C_k^3\right) 
		+ e_k^2 \cdot \tilde{w}_{\varphi_k} B_k \log_2 \left( 1 + \frac{\bar{P}_k}{\eta_{k,\varphi_k}} \right).
	\end{equation}
\fi
In addition, since the larger the total transmission power budget, the higher the objective value can be achieved, we can replace the first inequality constraint in Problem~$\Master$ with $\sum_{k\in\mathcal{K}} \bar{P}_k = \Pmax$.

Now, let $\bar{\bP}^*\in\mathbb{R}^K$ be an optimal solution to problem~$\Master$, $\underline{\bm{\muup}}^*\in\mathbb{R}^K$ and $\bar{\bm{\muup}}^*\in\mathbb{R}^K$ be Lagrange multiplier vectors for the inequality constraints $\bar{\bP} \succeq \mathbf{0}$ and $\bar{\bP} \preceq \bar{\bP}_{\textnormal{max}}$, respectively, where $\mathbf{0}$ is the $K$-dimensional zero vector and $\bar{\bP}_{\textnormal{max}}=(\Pmaxk)_{\forall k\in\mathcal{K}}$, and $\nu^*\in\mathbb{R}$ be a multiplier for the equality constraint $\sum_{k\in\mathcal{K}} \bar{P}_k = \Pmax$.
Then, the KKT conditions are obtained as
\if\mydocumentclass0
	\begin{align}
		&\begin{aligned}[b]
			\frac{e_k^1}{\ln2} \cdot \frac{\tilde{w}_{\psi_k} B_k}{\bar{P}_k^* + \eta_{k,\psi_k}} &+ \frac{e_k^2}{\ln2} \cdot \frac{\tilde{w}_{\varphi_k} B_k}{\bar{P}_k^* + \eta_{k,\varphi_k}} \\
			&+ \underline{\mu}_k^* - \bar{\mu}_k^* - \nu^* = 0, ~ \forall k\in\mathcal{K},
		\end{aligned} \label{KKT:Stationarity}\\
		&\sum_{k\in\mathcal{K}} \bar{P}_k^* = \Pmax, ~ 0 \le \bar{P}_k^* \le \Pmaxk, ~ \forall k\in\mathcal{K}, \label{KKT:Primal feasibility}\\
		&\underline{\mu}_k^* \ge 0, ~ \underline{\mu}_k^* \bar{P}_k^* = 0, ~ \forall k\in\mathcal{K}, \label{KKT:Dual feasibility and Complementary slackness for underline mu}\\
		&\bar{\mu}_k^* \ge 0, ~ \bar{\mu}_k^* (\bar{P}_k^* - \Pmaxk) = 0, ~ \forall k\in\mathcal{K}. \label{KKT:Dual feasibility and Complementary slackness for bar mu}
	\end{align}
\else
	\begin{align}
		&\frac{e_k^1}{\ln2} \cdot \frac{\tilde{w}_{\psi_k} B_k}{\bar{P}_k^* + \eta_{k,\psi_k}} + \frac{e_k^2}{\ln2} \cdot \frac{\tilde{w}_{\varphi_k} B_k}{\bar{P}_k^* + \eta_{k,\varphi_k}} + \underline{\mu}_k^* - \bar{\mu}_k^* - \nu^* = 0, ~ \forall k\in\mathcal{K}, \label{KKT:Stationarity}\\
		&\sum_{k\in\mathcal{K}} \bar{P}_k^* = \Pmax, ~ 0 \le \bar{P}_k^* \le \Pmaxk, ~ \forall k\in\mathcal{K}, \label{KKT:Primal feasibility}\\
		&\underline{\mu}_k^* \ge 0, ~ \underline{\mu}_k^* \bar{P}_k^* = 0, ~ \forall k\in\mathcal{K}, \label{KKT:Dual feasibility and Complementary slackness for underline mu}\\
		&\bar{\mu}_k^* \ge 0, ~ \bar{\mu}_k^* (\bar{P}_k^* - \Pmaxk) = 0, ~ \forall k\in\mathcal{K}. \label{KKT:Dual feasibility and Complementary slackness for bar mu}
	\end{align}
\fi
We directly solve these equations to find $\bar{\bP}^*$, $\underline{\bm{\muup}}^*$, $\bar{\bm{\muup}}^*$, and $\nu^*$.
To this end, we first eliminate $\underline{\bm{\muup}}^*$ by rearranging \eqref{KKT:Stationarity} for $\underline{\mu}_k^*$ and then plugging it into \eqref{KKT:Dual feasibility and Complementary slackness for underline mu}.
Thereby, we have
\begin{gather}
	\bar{\mu}_k^* + \nu^* - \left( \frac{e_k^1}{\ln2} \cdot \frac{\tilde{w}_{\psi_k} B_k}{\bar{P}_k^* + \eta_{k,\psi_k}} + \frac{e_k^2}{\ln2} \cdot \frac{\tilde{w}_{\varphi_k} B_k}{\bar{P}_k^* + \eta_{k,\varphi_k}} \right) \ge 0, \label{KKT2:Dual feasibility for bar mu and nu} \\
	\left( \bar{\mu}_k^* + \nu^* - \left( \frac{e_k^1}{\ln2} \cdot \frac{\tilde{w}_{\psi_k} B_k}{\bar{P}_k^* + \eta_{k,\psi_k}} + \frac{e_k^2}{\ln2} \cdot \frac{\tilde{w}_{\varphi_k} B_k}{\bar{P}_k^* + \eta_{k,\varphi_k}} \right) \right) \times \bar{P}_k^* = 0. \label{KKT2:Complementray slackness for bar mu and nu}
\end{gather}

We now find $\bar{\bP}^*$, $\bar{\bm{\muup}}^*$, and $\nu^*$ with \eqref{KKT:Primal feasibility}, \eqref{KKT:Dual feasibility and Complementary slackness for bar mu}, \eqref{KKT2:Dual feasibility for bar mu and nu}, and \eqref{KKT2:Complementray slackness for bar mu and nu}.
First, suppose the case where
\begin{equation} \label{eq:suppose1}
	\bar{\mu}_k^* + \nu^* < \frac{e_k^1}{\ln2} \cdot \frac{\tilde{w}_{\psi_k} B_k}{\eta_{k,\psi_k}} + \frac{e_k^2}{\ln2} \cdot \frac{\tilde{w}_{\varphi_k} B_k}{\eta_{k,\varphi_k}}.
\end{equation}
Then, \eqref{KKT2:Dual feasibility for bar mu and nu} can hold only if $\bar{P}_k^* > 0$.
Thus, by~\eqref{KKT2:Complementray slackness for bar mu and nu}, we have
\begin{equation} \label{eq:Pbark for suppose1}
	\bar{\mu}_k^* + \nu^* = \frac{e_k^1}{\ln2} \cdot \frac{\tilde{w}_{\psi_k} B_k}{\bar{P}_k^* + \eta_{k,\psi_k}} + \frac{e_k^2}{\ln2} \cdot \frac{\tilde{w}_{\varphi_k} B_k}{\bar{P}_k^* + \eta_{k,\varphi_k}}.
\end{equation}
Rearranging \eqref{eq:Pbark for suppose1} for $\bar{P}_k^*$, we can conclude that if \eqref{eq:suppose1} holds,
\if\mydocumentclass0
	\begin{multline}
		\bar{P}_k^* = e_k^1 \cdot \left( \frac{\tilde{w}_{\psi_k} B_k}{(\bar{\mu}_k^*+\nu^*)\ln2} - \eta_{k,\psi_k} \right) \\
		+ e_k^2 \cdot \left( \frac{\tilde{w}_{\varphi_k} B_k}{(\bar{\mu}_k^*+\nu^*)\ln2} - \eta_{k,\varphi_k} \right).
	\end{multline}
\else
	\begin{equation}
		\bar{P}_k^* = e_k^1 \cdot \left( \frac{\tilde{w}_{\psi_k} B_k}{(\bar{\mu}_k^*+\nu^*)\ln2} - \eta_{k,\psi_k} \right)
		+ e_k^2 \cdot \left( \frac{\tilde{w}_{\varphi_k} B_k}{(\bar{\mu}_k^*+\nu^*)\ln2} - \eta_{k,\varphi_k} \right).
	\end{equation}
\fi
Now, suppose the opposite case where
\begin{equation} \label{eq:suppose2}
	\bar{\mu}_k^* + \nu^* \ge \frac{e_k^1}{\ln2} \cdot \frac{\tilde{w}_{\psi_k} B_k}{\eta_{k,\psi_k}} + \frac{e_k^2}{\ln2} \cdot \frac{\tilde{w}_{\varphi_k} B_k}{\eta_{k,\varphi_k}}.
\end{equation}
In this case, $\bar{P}_k^*>0$ is impossible since it implies that
\begin{align}
	\bar{\mu}_k^* + \nu^*
	&\ge \frac{e_k^1}{\ln2} \cdot \frac{\tilde{w}_{\psi_k} B_k}{\eta_{k,\psi_k}} + \frac{e_k^2}{\ln2} \cdot \frac{\tilde{w}_{\varphi_k} B_k}{\eta_{k,\varphi_k}} \nonumber \\
	&> \frac{e_k^1}{\ln2} \cdot \frac{\tilde{w}_{\psi_k} B_k}{\bar{P}_k^*+\eta_{k,\psi_k}} + \frac{e_k^2}{\ln2} \cdot \frac{\tilde{w}_{\varphi_k} B_k}{\bar{P}_k^*+\eta_{k,\varphi_k}},
\end{align}
which violates~\eqref{KKT2:Complementray slackness for bar mu and nu}.
Hence, we can conclude that $\bar{P}_k^*=0$ if \eqref{eq:suppose2} holds.
Combining the results for the above two cases, we can simply express $\bar{P}_k^*$ as
\if\mydocumentclass0
	\begin{multline} \label{eq:Pk solution ge 0}
		\bar{P}_k^* = \max\Biggl\{ 0, \, e_k^1 \cdot \left( \frac{\tilde{w}_{\psi_k} B_k}{(\bar{\mu}_k^*+\nu^*)\ln2} - \eta_{k,\psi_k} \right) \\
		+ e_k^2 \cdot \left( \frac{\tilde{w}_{\varphi_k} B_k}{(\bar{\mu}_k^*+\nu^*)\ln2} - \eta_{k,\varphi_k} \right) \Biggr\}.
	\end{multline}
\else
	\begin{equation} \label{eq:Pk solution ge 0}
		\bar{P}_k^* = \max\Biggl\{ 0, \, e_k^1 \cdot \left( \frac{\tilde{w}_{\psi_k} B_k}{(\bar{\mu}_k^*+\nu^*)\ln2} - \eta_{k,\psi_k} \right)
		+ e_k^2 \cdot \left( \frac{\tilde{w}_{\varphi_k} B_k}{(\bar{\mu}_k^*+\nu^*)\ln2} - \eta_{k,\varphi_k} \right) \Biggr\}.
	\end{equation}
\fi
In addition, by the complementary slackness conditions for~$\bar{\bm{\muup}}^*$ in~\eqref{KKT:Dual feasibility and Complementary slackness for bar mu}, we can see that if $\bar{P}_k^* < \Pmaxk$, $\bar{\mu}_k^*$ should be zero; otherwise, $\bar{\mu}_k^*$ should be a certain value such that $\bar{P}_k^*=\Pmaxk$.
According to these facts, the Lagrange multiplier~$\bar{\mu}_k^*$ can be eliminated in~\eqref{eq:Pk solution ge 0}, leaving
\begin{equation} \label{eq:Pk solution with condition for Pk}
	\bar{P}_k^* = \Biggl[ e_k^1 \cdot \left( \mu^* \tilde{w}_{\psi_k} B_k - \eta_{k,\psi_k} \right) + e_k^2 \cdot \left( \mu^* \tilde{w}_{\varphi_k} B_k - \eta_{k,\varphi_k} \right) \Biggr]_0^\Pmaxk,
\end{equation}
where $\mu^* = 1/(\nu^*\ln2)$.
For simple notation, let $f_{k,\psi_k}(\mu^*) = \mu^* \tilde{w}_{\psi_k} B_k - \eta_{k,\psi_k}$ and $f_{k,\varphi_k}(\mu^*) = \mu^* \tilde{w}_{\varphi_k} B_k - \eta_{k,\varphi_k}$.
Then, by the definitions of $e_k^1$ and $e_k^2$, \eqref{eq:Pk solution with condition for Pk} can be expressed as
\begin{equation} \label{eq:Pkstar with fkmu}
	\bar{P}_k^* = \left[f_k(\mu^*)\right]_0^\Pmaxk,
\end{equation}
where
\begin{equation} \label{eq:fkmu}
	f_k(\mu^*) = \begin{dcases}
		f_{k,\psi_k}(\mu^*), & \textnormal{if } \tilde{w}_{\varphi_k}/\tilde{w}_{\psi_k} < 1 \textnormal{ and } f_k(\mu^*) \ge C_k^4, \\
		f_{k,\varphi_k}(\mu^*), & \textnormal{if } \tilde{w}_{\varphi_k}/\tilde{w}_{\psi_k} < 1 \textnormal{ and } f_k(\mu^*) < C_k^4, \\
		f_{k,\varphi_k}(\mu^*), & \textnormal{if } \tilde{w}_{\varphi_k}/\tilde{w}_{\psi_k} \ge 1. \\
	\end{dcases}
\end{equation}
In the case where $\tilde{w}_{\varphi_k}/\tilde{w}_{\psi_k} < 1$, $f_k(\mu^*)$ is piecewise linear with a breakpoint at $f_k(\mu^*) = C_k^4$, but continuous since both $f_{k,\psi_k}(\mu^*)$ and $f_{k,\varphi_k}(\mu^*)$ have the same value $C_k^4$ at $\mu^*=C_k^5$.
Since both $f_{k,\psi_k}(\mu^*)$ and $f_{k,\varphi_k}(\mu^*)$ are monotonically increasing, \eqref{eq:fkmu} can be rewritten as
\begin{equation} \label{eq:fkmu2}
	f_k(\mu^*) = \begin{dcases}
		f_{k,\psi_k}(\mu^*), & \textnormal{if } \tilde{w}_{\varphi_k}/\tilde{w}_{\psi_k} < 1 \textnormal{ and } \mu^* \ge C_k^5, \\
		f_{k,\varphi_k}(\mu^*), & \textnormal{if } \tilde{w}_{\varphi_k}/\tilde{w}_{\psi_k} < 1 \textnormal{ and } \mu^* < C_k^5, \\
		f_{k,\varphi_k}(\mu^*), & \textnormal{if } \tilde{w}_{\varphi_k}/\tilde{w}_{\psi_k} \ge 1. \\
	\end{dcases}
\end{equation}
By plugging \eqref{eq:fkmu2} into \eqref{eq:Pkstar with fkmu}, we finally have \eqref{Pkstar in thm}. \qed

\section{Proof of Theorem~\ref{thm:zero-duality-gap}}
\label{prove:thm:zero-duality-gap}

In order to prove the strong duality between Problem~$\Ptwo$ and its dual problem, Problem~$\Dual$, we utilize the \textit{time-sharing} condition proposed in~\cite{yu2006dual}, which is defined as follows.
\begin{definition}
Let $\{\barbp_x, \barbq_x\}$ and $\{\barbp_y, \barbq_y\}$ be optimal solutions to Problem~$\Ptwo$ with $\bar{\bR}_\textnormal{min}=\bar{\bR}_x$ and $\bar{\bR}_\textnormal{min}=\bar{\bR}_y$, respectively, where $\bar{\bR}_{\textnormal{min}}=(\bar{R}_{\textnormal{min},i})_{\forall i\in\calN}$, $\bar{\bR}_x=(\bar{R}_{x,i})_{\forall i\in\calN}$, and $\bar{\bR}_y=(\bar{R}_{y,i})_{\forall i\in\calN}$.
Then, Problem~$\Ptwo$ is said to satisfy the time-sharing condition if for any $\bar{\bR}_x$ and $\bar{\bR}_y$, and for any $\theta\in[0,1]$, there always exists a feasible solution $\{\barbp_z, \barbq_z\}$ such that
\begin{equation}\label{eq:time-sharing:condition1}
	\E_{\bh} \left[ R_i(\bp_z^\bh, \bq_z^\bh ; \bh) \right] \ge \theta \bar{R}_{x,i} + (1-\theta) \bar{R}_{y,i}, ~ \forall i\in\calN,
\end{equation}
and
\if\mydocumentclass0
	\begin{align}\label{eq:time-sharing:condition2}
		\E_{\bh} \left[ \sum_{i\in\calN} w_i R_i(\bp_z^{\bh}, \bq_z^{\bh} ; \bh) \right]
		&\ge \theta\E_{\bh} \left[ \sum_{i\in\calN} w_i R_i(\bp_x^{\bh}, \bq_x^{\bh} ; \bh) \right] \nonumber\\
		&\quad+ (1-\theta) \E_{\bh} \left[ \sum_{i\in\calN} w_i R_i(\bp_y^{\bh}, \bq_y^{\bh} ; \bh) \right].
	\end{align}
\else
	\begin{equation}\label{eq:time-sharing:condition2}
		\E_{\bh} \left[ \sum_{i\in\calN} w_i R_i(\bp_z^{\bh}, \bq_z^{\bh} ; \bh) \right] \ge \theta \E_{\bh} \left[ \sum_{i\in\calN} w_i R_i(\bp_x^{\bh}, \bq_x^{\bh} ; \bh) \right] + (1-\theta) \E_{\bh} \left[ \sum_{i\in\calN} w_i R_i(\bp_y^{\bh}, \bq_y^{\bh} ; \bh) \right].
	\end{equation}
\fi
\end{definition}

It has been proven in~\cite{yu2006dual} that if an optimization problem satisfies the time-sharing condition, the strong duality holds regardless of the convexity of the problem.
Hence, we prove Theorem~\ref{thm:zero-duality-gap} by showing that Problem~$\Ptwo$ satisfies the time-sharing condition.
First, for any $\{\barbp_x, \barbq_x\}$ and $\{\barbp_y, \barbq_y\}$, and for any $\theta\in[0,1]$, let us set $\{\bp_z^t, \bq_z^t\}$ to
\begin{equation}\label{eq:appendixA:feasible_solution}
	\{\bp_z^t, \bq_z^t\} = \begin{dcases} \{\bp_x^t, \bq_x^t\} & t \le \lfloor \theta T \rfloor,\\ \{\bp_y^t, \bq_y^t\}, & t \ge \lfloor \theta T+1 \rfloor. \end{dcases}
\end{equation}
Then, the first condition~\eqref{eq:time-sharing:condition1} holds as follows. For all $i\in\calN$,
\if\mydocumentclass0
	\begin{IEEEeqnarray}{rCl}
		\IEEEeqnarraymulticol{3}{l}{%
			\E_{\bh} \left[ R_i(\bp_z^{\bh}, \bq_z^{\bh} ; \bh) \right]
		} \IEEEnonumber*\\
		& = & \lim_{T\to\infty} \frac{1}{T} \sum_{t=1}^{T} R_i(\bp_z^t, \bq_z^t; \bh^t) \\
		& = & \lim_{T\to\infty} \frac{1}{T} \Biggl( \sum_{t=1}^{\lfloor \theta T \rfloor} R_i(\bp_x^t, \bq_x^t; \bh^t) + \sum_{t=\lfloor\theta T+1\rfloor}^{T} R_i(\bp_y^t, \bq_y^t; \bh^t) \Biggr) \\
		& = & \theta \E_{\bh} \left[ R_i(\bp_x^{\bh}, \bq_x^{\bh} ; \bh) \right] + (1-\theta) \E_{\bh} \left[ R_i(\bp_y^{\bh}, \bq_y^{\bh} ; \bh) \right] \\
		& \ge & \theta \bar{R}_{i,x} + (1-\theta) \bar{R}_{i,y}. \IEEEyesnumber
	\end{IEEEeqnarray}
\else
	\begin{IEEEeqnarray}{rCl}
		\E_{\bh} \left[ R_i(\bp_z^{\bh}, \bq_z^{\bh} ; \bh) \right]
		& = & \lim_{T\to\infty} \frac{1}{T} \sum_{t=1}^{T} R_i(\bp_z^t, \bq_z^t; \bh^t) \\
		& = & \lim_{T\to\infty} \frac{1}{T} \Biggl( \sum_{t=1}^{\lfloor \theta T \rfloor} R_i(\bp_x^t, \bq_x^t; \bh^t) + \sum_{t=\lfloor\theta T+1\rfloor}^{T} R_i(\bp_y^t, \bq_y^t; \bh^t) \Biggr) \\
		& = & \theta \E_{\bh} \left[ R_i(\bp_x^{\bh}, \bq_x^{\bh} ; \bh) \right] + (1-\theta) \E_{\bh} \left[ R_i(\bp_y^{\bh}, \bq_y^{\bh} ; \bh) \right] \\
		& \ge & \theta \bar{R}_{i,x} + (1-\theta) \bar{R}_{i,y}. \IEEEyesnumber
	\end{IEEEeqnarray}
\fi
Similarly, the second condition~\eqref{eq:time-sharing:condition2} also holds as follows.
\if\mydocumentclass0
	\begin{IEEEeqnarray}{rCl}
		\IEEEeqnarraymulticol{3}{l}{%
			\E_{\bh} \left[ \sum_{i\in\calN} w_i R_i(\bp_z^{\bh}, \bq_z^{\bh} ; \bh) \right]
		} \IEEEnonumber*\\
		& = & \lim_{T\to\infty} \frac{1}{T} \sum_{t=1}^{T} \sum_{i\in\calN} w_i R_i(\bp_z^t, \bq_z^t; \bh^t) \\
		& = & \lim_{T\to\infty} \frac{1}{T} \left( \sum_{t=1}^{\lfloor \theta T \rfloor} \sum_{i\in\calN} w_i R_i(\bp_x^t. \bq_x^t; \bh^t) \vphantom{+ \sum_{t=\lfloor\theta T+1\rfloor}^{T} \sum_{i\in\calN} w_i R_i(\bp_y^t, \bq_y^t; \bh^t)} \right. \\
		&   & \qquad\qquad\qquad \vphantom{\sum_{t=1}^{\lfloor \theta T \rfloor} \sum_{i\in\calN} w_i R_i(\bp_x^t. \bq_x^t; \bh^t)} \left. +\> \sum_{t=\lfloor\theta T+1\rfloor}^{T} \sum_{i\in\calN} w_i R_i(\bp_y^t, \bq_y^t; \bh^t) \right) \\
		& = & \theta \E_{\bh} \left[ \sum_{i\in\calN} w_i R_i(\bp_x^{\bh}, \bq_x^{\bh}; \bh) \right] \\
		&   & \qquad\qquad\qquad +\> (1-\theta) \E_{\bh} \left[ \sum_{i\in\calN} w_i R_i(\bp_y^{\bh}, \bq_y^{\bh}; \bh) \right]. \IEEEyesnumber\IEEEeqnarraynumspace
	\end{IEEEeqnarray}
\else
	\begin{IEEEeqnarray}{rCl}
		\E_{\bh} \left[ \sum_{i\in\calN} w_i R_i(\bp_z^{\bh}, \bq_z^{\bh} ; \bh) \right]
		& = & \lim_{T\to\infty} \frac{1}{T} \sum_{t=1}^{T} \sum_{i\in\calN} w_i R_i(\bp_z^t, \bq_z^t; \bh^t) \\
		& = & \lim_{T\to\infty} \frac{1}{T} \left( \sum_{t=1}^{\lfloor \theta T \rfloor} \sum_{i\in\calN} w_i R_i(\bp_x^t. \bq_x^t; \bh^t) + \sum_{t=\lfloor\theta T+1\rfloor}^{T} \sum_{i\in\calN} w_i R_i(\bp_y^t, \bq_y^t; \bh^t) \right) \IEEEeqnarraynumspace\\
		& = & \theta \E_{\bh} \left[ \sum_{i\in\calN} w_i R_i(\bp_x^{\bh}, \bq_x^{\bh}; \bh) \right] + (1-\theta) \E_{\bh} \left[ \sum_{i\in\calN} w_i R_i(\bp_y^{\bh}, \bq_y^{\bh}; \bh) \right]. \IEEEyesnumber
	\end{IEEEeqnarray}
\fi
Consequently, we can conclude that the time-sharing condition holds in Problem~$\Ptwo$, resulting in the strong duality. \qed

\fi

\bibliographystyle{IEEEtran}
\bibliography{IEEEabrv,Scheduling_NOMA}

\end{document}